\definecolor{navyblue}{rgb}{0.0, 0.0, 0.5}
\newtheorem{thm}{Theorem}[section]
\newtheorem{prop}[thm]{Proposition}
\newtheorem{lem}[thm]{Lemma}
\newtheorem{cor}[thm]{Corollary}
\theoremstyle{definition}
\newtheorem{rmk}{Remark}[section]
\newtheorem*{claim*}{Claim}
\newcommand{\N}{\mathbb{N}}
\newcommand{\Z}{\mathbb{Z}}
\newcommand{\R}{\mathbb{R}}
\newcommand{\C}{\mathbb{C}}
\newcommand{\T}{\mathbb{T}}
\renewcommand{\Re}{\operatorname{\mathrm{Re}}}
\newcommand{\intbrr}[1]{\llbracket#1\rrbracket}
\newcommand{\E}{\mathbb{E}}
\newcommand{\oneb}{\mathbf{1}}
\renewcommand{\P}{\mathbb{P}}
\newcommand{\chf}{\operatorname{chf}}
\newcommand{\CN}{\mathcal{CN}}
\newcommand{\RN}{\mathcal{N}}
\newcommand{\x}{\mathbf{x}}
\newcommand{\jj}{\alpha}
\newcommand{\jjtwo}{\beta}
\newcommand{\cs}[1]{#1'\!\cdot #1}
\newcommand{\csn}[1]{#1}
\newcommand{\Lip}{\mathrm{Lip}}
\newcommand{\Tr}{\operatorname{Tr}}
\newcommand{\Wg}{\operatorname{Wg}}
\newcommand{\ol}[1]{\overline{#1}}
\newcommand{\Opkl}{\operatorname{Op}_{k,\ell}}
\newcommand{\Ba}{\hat{B}}
\DeclareMathAlphabet{\mathcalboondox}{U}{BOONDOX-calo}{m}{n}
\newcommand{\fmat}{(*)}
\newcommand{\dsim}{\overset{d}{\sim}}
\newcommand{\dfoura}{\langle a_0\rangle_{\mathcal{B}^{(4)}_{0,2}}}
\newcommand{\mz}{M}
\newcommand{\frozen}{\mathcal{F}}
\newcommand{\jeven}{J_{\mathrm{even}\;\jj}}
\newcommand{\jodd}{J_{\mathrm{odd}\;\jj}}
\newcommand\numberthis{\stepcounter{equation}\tag{\theequation}}
\numberwithin{equation}{section}
\newcounter{tikznumber}
\newcommand{\tp}[1]{\includegraphics[valign=c]{walsh_fluctuations-figure\thetikznumber.pdf}\stepcounter{tikznumber}}
\begin{document}

\title{Quantum variance and fluctuations for Walsh-quantized baker's maps}

\author[L. Shou]{Laura Shou}
\address[L. Shou]{Joint Quantum Institute, Department of Physics, University of Maryland, College Park, MD 20742, USA}
\email{lshou@umd.edu}

\begin{abstract}
The Walsh-quantized baker's maps are models for quantum chaos on the torus. We show that for all baker's map scaling factors $D\ge2$ except for $D=4$, typically (in the sense of Haar measure on the eigenspaces, which are degenerate) the empirical distribution of the scaled matrix element fluctuations $\sqrt{N}\{\langle \varphi^{(j)}|\Opkl(a)|\varphi^{(j)}\rangle-\int_{\T^2}a\}_{j=1}^{N}$ for a random eigenbasis $\{\varphi^{(j)}\}_{j=1}^{N}$ is asymptotically Gaussian in the semiclassical limit $N\to\infty$, with variance given in terms of classical baker's map correlations. This determines the precise rate of convergence in the quantum ergodic theorem for these eigenbases. We obtain a version of the Eigenstate Thermalization Hypothesis (ETH) for these eigenstates, including a limiting complex Gaussian distribution for the off-diagonal matrix elements, with variances also given in terms of classical correlations. The presence of the classical correlations highlights that these eigenstates, while random, have microscopic correlations that differentiate them from Haar random vectors. For the single value $D=4$, the Gaussianity of the matrix element fluctuations depends on the values of the classical observable on a fractal subset of the torus.
\end{abstract}

\maketitle

\section{Introduction}
Quantum ergodicity refers to the property where a limiting density one subset of eigenfunctions of a quantum operator equidistribute throughout phase space in the semiclassical limit.
It was first proved for eigenfunctions of the Laplacian (=quantum Hamiltonian in this case) corresponding to ergodic geodesic flow on the unit tangent bundle of a manifold \cite{Shnirelman1974,Zelditch1987,CdV1985}, and later extended to other models, including for quantizations of various chaotic maps on the torus $\T^2=\R^2/\Z^2$.
Typically, to establish a quantum ergodic theorem one shows that a quantum variance tends to zero in the semiclassical limit. For torus maps this is typically the statement
\begin{align}\label{eqn:qv}
\frac{1}{N}\sum_{j=1}^N\left|\langle\varphi^{(j)}|\operatorname{Op}_N(a)|\varphi^{(j)}\rangle-\int_{\T^2}a(\x)\,d\x\right|^2\to0,\quad\text{as }N\to\infty,
\end{align}
where $\{\varphi^{(j)}\}_{j=1}^N$ is an orthonormal basis of the quantum operator $\hat{U}_N$ and $\operatorname{Op}_N(a)$ is a quantum observable associated with a classical observable $a:\T^2\to\R$.
The \emph{rate} at which the quantum variance tends to zero has been of significant interest in the quantum chaos literature. 

The expectation from the physics literature
\cite{FeingoldPeres1986,Eckhardtetal1995} 
is that for generic chaotic systems, the scaled fluctuations $\sqrt{N}\{\langle\varphi^{(j)}|\operatorname{Op}_N(a)|\varphi^{(j)}\rangle-\int_{\T^2}a(\x)\,d\x\}_{j=1}^N$ should look Gaussian with variance $v_a$ given by the classical correlations, in the case of torus maps as
\begin{align}\label{eqn:classical-var}
v_a=\sum_{t=-\infty}^\infty\int_{\T^2}a_0(\x)a_0(S^t\x)\,d\x,
\end{align}
where $S:\T^2\to\T^2$ is the classical map associated with $\hat{U}_N$ and $a_0:=a-\int_{\T^2}a$ is the centered observable.
In particular, the quantum variance in \eqref{eqn:qv} should have decay of order $N^{-1}$.

There has been much work in the past few decades concerning the matrix element fluctuations for \emph{arithmetic} systems, for which the generic expectations have been shown to not always hold. 
In \cite{KurlbergRudnick2005}, the authors determined the variance and fourth moment for matrix element fluctuations for Hecke eigenfunctions of quantum cat maps on the torus in prime dimensions, and 
conjectured a general distribution for the fluctuations. While the proved $O(N^{-1})$ decay of the quantum variance agrees with the order of the generic expectation, the distribution is shown to not be Gaussian.
The fifth moment was determined in \cite{Rosenzweig2011} and shown to agree with the (non-Gaussian) conjectured distribution.
Concerning the quantum variance in general systems, a logarithmic upper bound
\cite{Zelditch1994,Schubert2006} on the rate of decay is known to hold, but is expected to be far from optimal in many cases.
However, \cite{Schubert2008} showed the logarithmic bound is sharp for quantum cat maps, utilizing the large eigenspace degeneracies of the quantum cap maps for certain dimensions $N_k\to\infty$.
Additionally, \cite{Kelmer2008} considered quantum cat maps in composite dimensions and showed there are Hecke eigenstates with slow quantum variance decay $\gtrsim N^{-1/3}$, and also proved a limiting distribution for the fluctuations along prime power dimensions, which also is not Gaussian.
Besides cat maps on the torus $\T^2$, the quantum variance has been studied for higher dimensional cat maps \cite{Kelmer2005,Kelmer2010} and on arithmetic manifolds \cite{LuoSarnak2004,Zhao2010,Nelson2019,SarnakZhao2019,Huang2021,HuangLester2023}, where it typically exhibits number-theoretic deviations from the generic prediction \eqref{eqn:classical-var}.

Another motivation for studying matrix element fluctuations is in view of the Eigenstate Thermalization Hypothesis (ETH) \cite{Deutsch1991, Srednicki1994} from quantum dynamics.
ETH is expected to govern thermalization in generic chaotic quantum systems, and loosely states that individual eigenstates behave thermally, with a specific ansatz for the matrix elements of physical observables $\hat{O}$ in the energy eigenbasis $|E_i\rangle$ of the form,
\begin{align}\label{eqn:eth-phys}
\langle E_i|\hat{O}|E_j\rangle &= \langle \hat{O}\rangle_E \delta_{ij} + F(E_i,E_j)\Delta_{ij},
\end{align}
where \cite{DKPR2016, Deutsch-rpp}
\begin{itemize}
\item $\langle \hat{O}\rangle_E$ is a smooth function of $E$ and is a microcanonical or statistical mechanical average at energy $E$.
\item $F$ is a smooth function of order unity, which may tend to zero for large $|E_i-E_j|$.
\item $\Delta_{ij}$ are random variables with zero mean and variance of order $\Omega^{-1/2}(E)$, where $\Omega(E)$ is the density of states.
\item The randomness in $\Delta_{ij}$ refers to considering the fluctuations over a set of nearby eigenstates.
\end{itemize}
For single particle systems, the on-diagonal prediction of ETH is closely related to ``quantum unique ergodicity'' (QUE) \cite{RudnickSarnak1994}, which is the property that {all} eigenfunctions equidistribute, rather than just requiring a limiting density one subset to as in quantum ergodicity.
Certain forms of ETH predict the individual fluctuations in the convergence in QUE, which is what we will be interested in here.
ETH is also closely related to random matrix theory (RMT), and there have been many recent results, e.g. \cite{CES2021,cipolloni2022normal,BL2022,ER2024,adhikari2024eigenstate} to list just a few, for random Wigner and generalized Wigner matrices in this context.

\vspace{2mm}
In this paper, we study the quantum variance and matrix element fluctuations for the Walsh-quantized baker's maps, a ``Walsh'' quantization of the classical $D$-baker's maps on the torus. 
For an integer $D\ge2$, the classical $D$-baker's map $B:\T^2\to\T^2$ is the map given by
\begin{align}\label{eqn:classical-baker}
B(q,p)&=\left(Dq-\lfloor Dq\rfloor, \frac{p+\lfloor Dq\rfloor}{D}\right).
\end{align}
Equivalently, for $q=0.q_1q_2\cdots$ and $p=0.p_1p_2\cdots$ written in base $D$, the $D$-baker's map is the left shift on sequences $\cdots p_2p_1.q_1q_2\cdots$, and is thus an ergodic and chaotic map.
The case $D=2$ is the ``standard'' baker's map. In this paper we may refer to any of the $D$-baker's maps as a baker's map, and will specify the value of $D$ when necessary.

The Walsh-quantized baker's map is a quantum, or quantized, version of the classical $D$-baker's map. 
For each $N=D^k$, $k\in\N$, it is an $N\times N$ unitary matrix $\Ba_k$ acting on a Hilbert space $\mathcal{H}_N\cong \C^N$, which recovers the classical $D$-baker's map (in a Walsh quantization sense, as described in Section~\ref{sec:background}) in the semiclassical limit $N\to\infty$. Explicit definitions for $\Ba_k$ are given in \eqref{eqn:walsh-action} and \eqref{eqn:wbaker}.
Quantum observables $\operatorname{Op}_{k,\ell}(a):\mathcal{H}_N\to\mathcal{H}_N$ are constructed from a classical Lipschitz observable $a\in \Lip(\T^2)$, with a chosen parameter $\ell=\ell(k)\in\intbrr{0:k}:=\{0,\ldots,k\}$.

While this Walsh-quantized baker's map quantization is not one of the more usual \emph{Weyl} quantizations \cite{BV1989}, it has still been of interest in both the physics and mathematics literature \cite{SchackCaves2000,TracyScott2002,NZ2005,AN2007,NZ2007}, in part due to its special properties. 
In particular, the Walsh-quantized baker's map has highly degenerate eigenspaces, so that there are many different choices of eigenbases. 
Our results concern ``generic'' eigenbases, or those chosen randomly according to Haar measure in each eigenspace.
As we explain later, even though these eigenstates are chosen randomly from highly degenerate eigenspaces, they behave distinctly differently at this scale compared to Haar random vectors.

\subsection{Main results}
We now state the main results of this paper on matrix element fluctuations.
Throughout this paper, $\RN(\mu,\sigma^2)$ will denote the real Gaussian distribution with mean $\mu$ and variance $\sigma^2$, and $\CN(0,\sigma^2)$ the complex Gaussian distribution consisting of independent real and imaginary parts which are each real Gaussian with mean 0 and variance $\sigma^2/2$.
\begin{thm}\label{thm:mat}
Let $D\ge2$ and consider the $N\times N$ Walsh-quantized $D$-baker's map $\Ba_k$ (defined in \eqref{eqn:walsh-action} or \eqref{eqn:wbaker}) for $N=D^k$, $k\in\N$.
Consider random orthonormal eigenbases $(\varphi^{(k,m)})_{m=1}^{D^k}$ of $\Ba_k$ chosen according to Haar measure within each eigenspace, and define the scaled matrix element fluctuations (as a function of the randomly chosen eigenbasis) for a real Lipschitz observable $a:\T^2\to\R$ as
\[
F_j^{(k)}:=\sqrt{N}\left(\langle \varphi^{(k,j)}|\operatorname{Op}_{k,\ell}(a)|\varphi^{(k,j)}\rangle-\int_{\T^2}a\right),
\]
where $\ell=\ell(k)\in\intbrr{0:k}$ is a quantization parameter as described in \eqref{eqn:opkl}, chosen so that $\min(\ell,k-\ell)\ge3\log_D k$. 
\begin{enumerate}[(i),leftmargin=*]
\item Let $D\ne4$. Then as $k\to\infty$, the scaled quantum variance $\frac{1}{N}\sum_{j=1}^N|F_j^{(k)}|^2$ converges in probability to a classical value $V(a)$, 
\begin{align}\label{eqn:var}
\frac{1}{N}\sum_{j=1}^N|F_j^{(k)}|^2\xrightarrow{p} V(a),
\end{align}
for
\begin{align}\label{eqn:Va}
V(a):=\sum_{t=-\infty}^\infty\left(\int_{\T^2}a_0(\x)a_0(B^t\x)\,d\x+\mathbf{1}_{D\ge3}\int_{\T^2}a_0(\x)a_0(B^tR\x)\,d\x\right),
\end{align}
where $a_0:=a-\int_{\T^2}a$ and $R:(q,p)\mapsto(1-q,1-p)$ is a reflection operator.
Additionally, in the same setting, the empirical distribution $\mu_k:=\frac{1}{N}\sum_{j=1}^N\delta_{F_j^{(k)}}$ converges weakly in probability to a centered Gaussian random variable with variance $V(a)$; that is, for $Z\sim \RN(0,V(a))$ and any $\epsilon>0$ and bounded continuous $f:\R\to\C$,
\begin{align}\label{eqn:conv}
\P_\omega\left[\left|\int_\R f\,d\mu_k-\E_Z[f]\right|>\epsilon\right]\xrightarrow{k\to\infty}0.
\end{align}

\item For $D=4$, we have the convergence in probability
\begin{align}\label{eqn:var4}
\frac{1}{N}\sum_{j=1}^N|F_j^{(k)}|^2\xrightarrow{p}V(a)+\langle a_0\rangle_{\mathcal B^{(4)}_{0,2}}^2,
\end{align}
where
\begin{align}\label{eqn:a0-avg}
\langle a_0\rangle_{\mathcal B^{(4)}_{0,2}}:=\lim_{k\to\infty}\frac{1}{2^k}\sum_{\substack{q\in \mathcal L^{(4)}_{0,2}(\ell)\\ p\in \mathcal L^{(4)}_{0,2}(k-\ell)}}a_0(q,p),
\end{align}
with $\mathcal L^{(4)}_{0,2}(m)$ the set of $x\in[0,1)$ whose base four expansion has length at most $m$ digits and consists only of $0$s and $2$s. In other words, $\langle a_0\rangle_{\mathcal B^{(4)}_{0,2}}$ is the limiting average value of $a_0$ on the fractal set $\mathcal{B}^{(4)}_{0,2}:=\{x\in[0,1)\text{ whose base four expansion has only 0s and 2s}\}$.

The empirical distribution $\mu_k:=\frac{1}{N}\sum_{j=1}^N\delta_{F_j^{(k)}}$ converges weakly in probability to a mixture of two Gaussians $\RN(\pm\langle a_0\rangle_{\mathcal B^{(4)}_{0,2}},V(a))$ defined by the probability density function
\begin{align}\label{eqn:d4pdf}
p(x)&=\frac{1}{2}\left[\frac{1}{\sqrt{2\pi V(a)}}e^{-\Big(x-\langle a_0\rangle_{\mathcal B^{(4)}_{0,2}}\Big)^2/(2V(a))}+\frac{1}{\sqrt{2\pi V(a)}}e^{-\Big(x+\langle a_0\rangle_{\mathcal B^{(4)}_{0,2}}\Big)^2/(2V(a))}\right].
\end{align}
\end{enumerate}
\end{thm}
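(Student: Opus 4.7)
The plan is a method-of-moments argument built on Weingarten calculus for the Haar-random unitaries acting independently within each degenerate eigenspace of $\Ba_k$. The Walsh structure explicitly diagonalizes $\Ba_k$ into eigenspaces indexed by periodic orbits of the base-$D$ shift on $k$-digit strings, with dimensions $d_\lambda$ determined by the orbit lengths. A random eigenbasis is then a block-diagonal unitary $U=\bigoplus_\lambda U_\lambda$ with each $U_\lambda$ Haar-distributed on $\mathrm{U}(d_\lambda)$ and all blocks independent.

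The first step reduces \eqref{eqn:conv} to a moment calculation: since a Gaussian (or a two-component Gaussian mixture) is determined by its moments, it suffices to show that for every integer $p\ge 0$ the random quantity $\frac{1}{N}\sum_j(F_j^{(k)})^p$ has Haar expectation converging to the targeted moment and Haar variance converging to $0$. The Haar averages within one block of products $\prod_i \langle U_\lambda e_i,\, P_\lambda \Opkl(a_0) P_\lambda U_\lambda e_i\rangle$ expand via unitary Weingarten functions $\Wg$ into sums over pair partitions, and the assumption $\min(\ell,k-\ell)\ge 3\log_D k$ forces each $d_\lambda\to\infty$ fast enough that the non-Gaussian crossing terms are negligible. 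The variance of the empirical $p$-th moment requires fourth-order Haar moments, but independence across blocks keeps this combinatorics local to one or two eigenspaces.

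The second step identifies the leading Gaussian pairing with the classical quantity $V(a)$. The key tool is an Egorov-type statement for the Walsh quantization: $\Ba_k^{-t}\Opkl(a)\Ba_k^t$ is close in operator norm to $\Opkl(a\circ B^t)$ for $|t|$ up to a power of $\min(\ell,k-\ell)$. Since eigenprojectors absorb this conjugation, a Birkhoff-type time average replaces $P_\lambda \Opkl(a_0) P_\lambda$ by Ces\`aro averages of $\Opkl(a_0\circ B^t)$, and Lipschitz mixing of $B$ collapses the leading pairing to $\sum_{t\in\Z}\int_{\T^2}a_0(\x)a_0(B^t\x)\,d\x$. For $D\ge 3$ the reflection $R$ commutes with $\Ba_k$ and further splits each eigenspace into $R$-even and $R$-odd sectors; recombining the on-diagonal Gaussian pairings within these refined sectors produces the additional $\sum_t\int_{\T^2}a_0(\x)a_0(B^tR\x)\,d\x$ term in \eqref{eqn:Va}.

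The main obstacle is the exceptional case $D=4$. Because $4=2\cdot 2$, the base-$4$ shift admits the invariant fractal $\mathcal{B}^{(4)}_{0,2}$ on which $B$ is not mixing against the full torus measure. Consequently, the Egorov--Birkhoff step does not extinguish the deterministic piece $\dfoura$ in the relevant eigenspace: a rank-one frozen contribution persists, producing the extra $\dfoura^2$ in the scaled variance \eqref{eqn:var4} and shifting the eigenstates in that sector by $\pm\dfoura$ to give the two-component Gaussian mixture \eqref{eqn:d4pdf}. A delicate sub-argument is then required to verify that $D=4$ is the only base for which such a nontrivial invariant-but-non-ergodic sub-shift arises compatibly with the Walsh quantization of $B$ and $R$, by analyzing the joint action of these two maps on residue classes of base-$D$ digit strings.
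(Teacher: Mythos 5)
The central gap is in your second step. The spectral projectors of $\Ba_k$ are obtained by averaging the propagator over a \emph{full} quantum period, $P_\jj=\frac{1}{q(k)}\sum_{t=-q(k)/2}^{q(k)/2-1}e^{2\pi i\jj t/q(k)}\Ba_k^t$ with $q(k)\in\{2k,4k\}$, so every moment of $F_j^{(k)}$ forces you to control $\Tr(\Opkl(a_0)\Ba_k^{t_1}\Opkl(a_0)\Ba_k^{t_2})$ for $t_1,t_2$ ranging up to $\pm 2k$, i.e.\ at and beyond the Ehrenfest time $t\approx k=\log_D N$, where $\Ba_k^t$ is a dense matrix. An Egorov-type statement valid for $|t|$ up to a power of $\min(\ell,k-\ell)\gtrsim \log k$ covers only a vanishing fraction of the required time window, and on the remaining times a triangle-inequality bound yields an $O(1)$ contribution, not $o(1)$ (the paper points this out explicitly at the end of Section 3). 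What is actually needed is a quantitative cancellation of the \emph{phases} of the entries $\langle\cs{\varepsilon}|\Ba_k^t|\cs{\delta}\rangle$ near the Ehrenfest time, obtained via Gauss quadratic sums and cycle counting for partial traces restricted to small squares; your proposal contains no substitute for this step, and it is the technical core of the theorem. Two further misstatements: the hypothesis $\min(\ell,k-\ell)\ge 3\log_D k$ has nothing to do with eigenspace dimensions (every eigenspace has dimension $\frac{N}{q(k)}(1+o(1))$ regardless of $\ell$) --- it is exactly the room needed to choose the square size $D^{-r}$ with $r=3\log_D k$ in the phase-cancellation argument; and the term $\mathbf{1}_{D\ge3}\int_{\T^2}a_0(\x)a_0(B^tR\x)\,d\x$ does not arise from splitting eigenspaces into $R$-symmetry sectors, but from the fact that for $k\le t\le 3k-1$ the nonzero entries of $\Ba_k^t$ track the classical map $B^{\pm\cdot}R$ rather than $B^{\pm\cdot}$.

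Your diagnosis of the $D=4$ case also rests on a false premise. The classical $4$-baker's map is mixing on all of $\T^2$; the fractal $\mathcal B^{(4)}_{0,2}$ has measure zero and is irrelevant to classical ergodic properties. The anomaly is quantum-arithmetic: at half the period, $\Ba_k^{-2k}=(\hat F_D^2)^{\otimes k}$ acts on each tensor factor as $|x\rangle\mapsto|-x\;\mathrm{mod}\;D\rangle$, so $\Tr(\Opkl(a_0)\Ba_k^{-2k})$ reduces to a sum of $\fint_{[\cs{\varepsilon}]}a_0$ over the $2^k$ rectangles whose digits all lie in $\{0,D/2\}$ (for $D$ even). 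This term has size of order $2^k\|a_0\|_\infty$, which is $o(\sqrt N)$ for every even $D\ge 6$ but exactly of order $\sqrt N=2^k$ when $D=4$ --- a coincidence of exponents, not a failure of mixing. It contributes $(-1)^{\jj}\dfoura$ to $\E_\omega F_j$ according to the parity of the eigenspace index, which is what produces the two-component mixture in \eqref{eqn:d4pdf}. Consequently no ``delicate sub-argument about invariant non-ergodic sub-shifts'' is needed or available; the correct uniqueness statement is simply that $2^k/D^{k/2}\to 0$ for all even $D\ne 4$.
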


\begin{rmk}\label{rmk:thm1}
\begin{enumerate}[leftmargin=*]
\item The difference for the $D=4$ case is because of an $a$-dependent term of order $O\left(\frac{2^k}{\sqrt{N}}\right)$ which appears at half the quantum period when calculating $\E_\omega[F_j^{(k)}]$ for even $D\ge3$; see Section~\ref{sec:tr-ind} and also Proposition~\ref{prop:averages}(i).
For even $D\ge6$, this term is always $o(1)$. 
For $D=4$ however, $\sqrt{N}=2^k$ and it turns out the term will be nonvanishing if $\langle a_0\rangle_{\mathcal B^{(4)}_{0,2}}>0$. 
An example of an observable $a_0$ where $\langle a_0\rangle_{\mathcal B^{(4)}_{0,2}}>0$ is $a_0(q,p)=\sin(4\pi q)$.
Thus for $D=4$, the Gaussianity or non-Gaussianity of the empirical distribution for the matrix element fluctuations is dependent on the observable $a$. Half of the eigenspaces will produce matrix element fluctuations that look like $\RN(\dfoura,V(a))$, while the other half will produce those that look like $\RN(-\dfoura,V(a))$.

\item The physics arguments in \cite{FeingoldPeres1986,Eckhardtetal1995} concerning the matrix fluctuations for generic chaotic systems do not strictly apply to discrete time maps, and additionally not to systems with large eigenspace degeneracies. 
Nevertheless, we can still compare the variance \eqref{eqn:Va} given for $D\ne4$ to the generic expectation \eqref{eqn:classical-var}. First we note that the decay rate and Gaussian behavior for $D\ne4$ are in agreement with the physics prediction. 
The variance \eqref{eqn:Va} relates the quantum fluctuations to classical correlations as expected, although for $D\ge3$ it looks different than \eqref{eqn:classical-var} (which is stated for systems without symmetries).
One could argue heuristically that the extra factor in \eqref{eqn:Va} for $D\ge3$ appears naturally due to the classical reflection symmetry $R:(q,p)\mapsto(1-q,1-p)$ of the $D$-baker's maps \cite{BV1989}, which is carried by the quantum map $\Ba_k$ for $D\ge3$, via commutation with $\tilde{R}=(\hat{F}_D^2)^{\otimes k}$, where $\hat{F}_D$ is the $D\times D$ discrete Fourier transform matrix. For $D=2$ this map $\tilde R$ is just the identity so does not represent a symmetry.
On the other hand, the anticanonical time-reversal symmetry of the classical maps \cite{BV1989} suggests we should maybe have a factor of 2 on $V(a)$ due to \cite{Eckhardtetal1995}, which we do not have. 
In general however, it is not actually clear how the symmetry sectors of the classical maps should manifest in the quantum system's spectra and eigenvectors, particularly for either arithmetic systems \cite{HannayBerry1980,LuoSarnakarithmetic,BGGSarithmetic} or torus maps \cite{bakernumerics}, and so we will not be so concerned with symmetry-related differences in \eqref{eqn:Va}.
The case $D=4$, however, differs significantly from the general expectation when $\dfoura\ne0$.

\item In Theorem~\ref{thm:mat}, we do not have to consider all $N$ of the $F_j^{(k)}$. As will be seen from the proof, in \eqref{eqn:var} for $D\ne4$, the average over any number $n(k)\le N$ with $n(k)\to\infty$ of the $F_j^{(k)}$ converges in probability to $V(a)$. Similarly, \eqref{eqn:conv} holds for empirical distributions $\mu_{k,n(k)}=\frac{1}{|J(k)|}\sum_{j\in J(k)}\delta_{F_j^{(k)}}$ with $|J(k)|\to\infty$. We will use this later in Theorem~\ref{thm:eth}. However, for $D=4$, if $\dfoura\ne0$, the empirical distribution convergence depends on which eigenspaces the $F_j$, $j\in J(k)$, are drawn from.

\end{enumerate}
\end{rmk}

To prove Theorem~\ref{thm:mat}, we will prove:
\begin{thm}\label{thm:ind}
Consider random eigenbases drawn from Haar measure in each eigenspace, and the corresponding scaled matrix element fluctuations $F_j^{(k)}$. 
In the same setting as the previous theorem, for any $j\in\intbrr{1:D^k}$ and as $k\to\infty$,
\begin{align}
\E_\omega[F_j^{(k)}]&=\begin{cases}o(1),&D\ne 4\\ (-1)^{\alpha_j}\langle a_0\rangle_{\mathcal B^{(4)}_{0,2}}+o(1),&D=4\end{cases}.\label{eqn:center}
\end{align}
Defining the ``recentered'' fluctuations as
\begin{align}\label{eqn:tilde}
\tilde F_j^{(k)}:=\begin{cases}F_j^{(k)},&D\ne4\\F_j^{(k)}-(-1)^{\jj_j}\langle a_0\rangle_{\mathcal{B}^{(4)}_{0,2}},&D=4\end{cases},
\end{align} 
where $\jj_j\in\intbrr{0:4k-1}$ denotes an eigenspace index\footnote{$\jj_j$ is the index for the eigenspace corresponding to eigenvalue $e^{2\pi i\jj_j/q(k)}$, where $q(k)$ is the quantum period which is $4k$ for $D\ge3$; see Section~\ref{subsec:qbakermap}.},
we also have the variance and higher moment asymptotics for any $D\ge2$, 
\begin{align}
\E_\omega[(\tilde F_j^{(k)})^2]
&= V(a)+o_a(1),\label{eqn:e-var}\\
\E_\omega[(\tilde F_j^{(k)})^p]&=V(a)^{p/2}\E[g^p]+o_{a,p}(1),\text{ any }p\ge 3,\label{eqn:moments}
\end{align}
where $g\sim\RN(0,1)$, and the $o(1)$ terms are uniform over $j\in\intbrr{1:D^k}$. 

As a consequence, we have Gaussian individual fluctuations in QUE: for $D\ne4$ and any sequence $j(k)\in\intbrr{1:N}$, as $k\to\infty$,
\begin{align}\label{eqn:clt}
\sqrt{\frac{N}{V(a)}}\left[\langle\varphi^{(k,j(k))}|\Opkl(a)|\varphi^{(k,j(k))}\rangle-\int_{\T^2}a(\x)\,d\x\right]\xrightarrow{d}\RN(0,1).
\end{align}
For $D=4$ and any sequence $j(k)$ such that the eigenspace index $\alpha_{j(k)}$ is eventually a fixed parity,
\begin{align}\label{eqn:clt4}
\sqrt{N}\left[\langle\varphi^{(k,j(k))}|\Opkl(a)|\varphi^{(k,j(k))}\rangle-\int_{\T^2}a(\x)\,d\x\right]\xrightarrow{d}\RN((-1)^{\alpha_{j(k)}}\langle a_0\rangle_{\mathcal B^{(4)}_{0,2}},V(a)).
\end{align}
\end{thm}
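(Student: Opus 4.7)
The strategy is Haar moment calculus eigenspace-by-eigenspace. Once the right trace asymptotics are in hand, \eqref{eqn:center}--\eqref{eqn:moments} follow from Weingarten integration, and the individual CLTs \eqref{eqn:clt}--\eqref{eqn:clt4} follow by the method of moments. Let $\Pi_\alpha$ denote the spectral projector of $\Ba_k$ onto the $\alpha$-th eigenspace, of dimension $d_\alpha$. For a Haar-random unit vector $\varphi$ in this eigenspace and any bounded operator $A$, the standard formulas $\E_\omega[\langle\varphi|A|\varphi\rangle] = \Tr(\Pi_\alpha A)/d_\alpha$ and $\E_\omega[\langle\varphi|A|\varphi\rangle^2] = [(\Tr\Pi_\alpha A)^2 + \Tr((\Pi_\alpha A)^2)]/[d_\alpha(d_\alpha+1)]$, together with the Weingarten formula expressing the $p$-th moment as a $\sigma\in S_p$ sum of cycle-trace products $\prod_c\Tr((\Pi_\alpha A)^{|c|})$, reduce the theorem entirely to computing the traces $\Tr((\Pi_{\alpha_j}A_0)^m)$ where $A_0 := \Opkl(a) - \int_{\T^2}a\cdot I$. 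I then expand each projector via the quantum period $q(k)$ as $\Pi_\alpha = q(k)^{-1}\sum_t e^{-2\pi i\alpha t/q(k)}\Ba_k^t$, converting these traces into sums over times of Heisenberg-evolved observables $\Ba_k^{-t}\Opkl(a)\Ba_k^t$.

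For the mean, the $t=0$ term contributes $\int a$ up to quantization error, and for generic $t$ in the bulk the Walsh--Egorov property of $\Ba_k$ makes the contribution $o(N^{-1/2})$. The only anomaly is the half-period $t=q(k)/2$, where $\Ba_k^{q(k)/2}$ equals the reflection $\tilde R=(\hat F_D^2)^{\otimes k}$ from Remark~\ref{rmk:thm1}. For $D=4$ the Walsh symbol of $\tilde R\,\Opkl(a)$ concentrates on $\mathcal B^{(4)}_{0,2}\times\mathcal B^{(4)}_{0,2}$ and produces the non-vanishing $(-1)^{\alpha_j}\langle a_0\rangle_{\mathcal B^{(4)}_{0,2}}$ term (the sign coming from $e^{-\pi i\alpha_j}$); for $D=2$ the reflection is trivial; for every other $D$ the same contribution is still $o(1)$. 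This is essentially Proposition~\ref{prop:averages}(i).

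For the variance, after recentering, the dominant piece is $\frac{N}{d_{\alpha_j}^2}\Tr((\Pi_{\alpha_j}A_0)^2)$. Inserting the Fourier expansion twice and using cyclicity of the trace to collapse one time sum back into a single $\Pi_{\alpha_j}$ reduces this to $\frac{N}{q(k)\,d_{\alpha_j}^2}\sum_t\Tr(\Pi_{\alpha_j}\Ba_k^{-t}A_0\Ba_k^t A_0)$. Egorov applied to the inner factor, together with the Walsh-quantization trace asymptotic $\Tr(\Pi_{\alpha_j}\Opkl(f)\Opkl(g))/d_{\alpha_j}\approx\int fg$, gives the classical correlation $\int a_0(\x)a_0(B^t\x)d\x$ at each $t$; the block of times near $q(k)/2$ produces the reflected correlation $\int a_0(\x)a_0(B^tR\x)d\x$ for $D\ge 3$, and assembling these with the exponential mixing tails recovers $V(a)$. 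The hypothesis $\min(\ell,k-\ell)\ge 3\log_D k$ is exactly what is needed so Egorov remains accurate on the range of $t$ that matters. For higher moments $p\ge 3$, the Weingarten sum rescaled by $N^{p/2}/d_{\alpha_j}^p$ is dominated by permutations that are products of transpositions, each contributing a single factor of $V(a)$; all other cycle structures cost extra inverse powers of $d_{\alpha_j}$, and counting non-crossing pair partitions reproduces Wick's formula $V(a)^{p/2}\E[g^p]$.

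The individual CLTs \eqref{eqn:clt}--\eqref{eqn:clt4} then follow from the method of moments, since the target distribution is Gaussian and hence determined by its moments (the mean shift $(-1)^{\alpha_{j(k)}}\langle a_0\rangle_{\mathcal B^{(4)}_{0,2}}$ in \eqref{eqn:clt4} is built in once $\alpha_{j(k)}$ has a fixed parity). The main obstacle is the variance step: one has to propagate Walsh--Egorov up to times of order $\log k$, keep careful track of the reflection half-period for even $D\ge 3$, and secure the convergence \emph{uniformly} in $\alpha_j$---uniformity is needed both for the moment asymptotics and for the CLT along arbitrary sequences $j(k)$. Verifying that the subleading Weingarten terms and the finite-dimensional corrections $1/(d_{\alpha_j}+m)$ versus $1/d_{\alpha_j}$ are truly harmless under the given scaling is also where most of the bookkeeping sits.
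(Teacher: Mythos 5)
Your overall architecture matches the paper's: Weingarten reduction of all moments to traces $\Tr((P_{\jj}A_0)^m)$, Fourier expansion of the spectral projector over a full quantum period, extraction of the classical correlations $V(a)$ from the time-diagonal terms, identification of the $D=4$ anomaly at the half-period $t=2k$, pair-partition counting for the higher moments, and the method of moments for the CLTs. (One small slip there: the surviving permutations are \emph{all} fixed-point-free involutions, counted by $(p-1)!!$, not just non-crossing pairings — non-crossing would give Catalan numbers and a semicircle law rather than Wick's formula.)

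However, there is a genuine gap at the central technical step, and it is precisely the step the paper spends Section~\ref{sec:phases} on. After expanding $P_{\jj}=q(k)^{-1}\sum_t e^{2\pi i\jj t/q(k)}\Ba_k^t$, the time sum runs over a full period of length $q(k)\sim 4k$, so you must control $\Tr(\Opkl(a_0)\Ba_k^{t})$ for \emph{all} $t$ up to $\pm 2k$ — not, as you write, "times of order $\log k$" (the bound $\min(\ell,k-\ell)\ge 3\log_D k$ is a quantization-parameter condition, not the range of the time sum). Near the Ehrenfest time $t\approx\pm k$ the matrix $\Ba_k^t$ is dense: there are $D^{\eta(t)}$ nonzero diagonal entries of modulus $D^{-\eta(t)/2}$, so the triangle inequality gives only $|\Tr(\Opkl(a_0)\Ba_k^t)|\le D^{k/2}=\sqrt N$, which is exactly the borderline order and \emph{not} $o(\sqrt N)$; similarly, for the variance the pairs with $t_1+t_2\ne 0\bmod q(k)$ (e.g.\ $t_1=t_2=k$) give $\sum_{[\cs\varepsilon],[\cs\delta]}|\langle\cs\varepsilon|\Ba_k^k|\cs\delta\rangle|^2=D^k$, already as large as the main term. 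Your appeal to a "Walsh--Egorov property" cannot close this: the exact quantum--classical correspondence (Lemma~\ref{lem:classical}) only identifies \emph{which} matrix elements are nonzero and their moduli; it says nothing about cancellation among their phases, and for $|t|$ beyond $\min(\ell,k-\ell)$ the conjugated observable $\Ba_k^{-t}\Opkl(a)\Ba_k^t$ is no longer a Walsh quantization of a classical symbol. The paper's actual mechanism is the phase-cancellation machinery of Proposition~\ref{thm:trace-cancel}: partition $\T^2$ into $D^{-r}\times D^{-r}$ squares, bound the partial traces over each square by counting cycles in the tensor contraction and evaluating Gauss sums (yielding $D^{\gcd([t]_k,k)}$, with the refined bound $D^{k/4}$ at $t=\pm k$), and then exploit that $a$ is nearly constant on each square. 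Without an argument of this type, the assertions that the bulk times contribute $o(\sqrt N)$ to the mean and that the off-diagonal time pairs are subleading in the variance are unsupported, and the proof does not go through.
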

The estimates \eqref{eqn:center} and \eqref{eqn:e-var} will be used to prove the variance convergence \eqref{eqn:var}, and the higher moments \eqref{eqn:moments} will be used to prove the empirical distribution convergence \eqref{eqn:conv}.
In order to prove these estimates, we will need to consider the quantum time evolution $\Ba_k^t$ for $t$ at and near the Ehrenfest time $t\approx k=\log_DN$. For these times, we will use the specific structure of the Walsh-quantized baker's map dynamics to show a small but important amount of cancellation due to varying phases.

\begin{rmk}
We emphasize that while we are considering eigenstates drawn at random from within each eigenspace, they are crucially different from full Haar random orthonormal bases of $\C^N$. If one simply takes a Haar random unit vector $u\in\C^N$, then using Weingarten calculus (see e.g. \eqref{eqn:variance}) and the definition \eqref{eqn:opkl} of the Walsh-quantized observable $\Opkl(a_0)$ for Lipschitz $a_0$, one can show for $\min(\ell,k-\ell)\to\infty$, that the $F_j^{(k)}$ associated with a $\C^N$ Haar random vector have variance
\begin{align}
N\E |\langle u|\Opkl(a_0)|u\rangle|^2&=\frac{1+o(1)}{N}\Tr(\Opkl(a_0)^2)=\int_{\T^2}a_0^2(\x)\,d\x+o_a(1),
\end{align}
which is missing the rest of the classical correlations found in \eqref{eqn:Va}.
This is inevitable however, as a full Haar random basis does not know anything about the classical dynamics. 
While random eigenbases of quantized systems with large eigenspace degeneracies can sometimes be shown to typically behave in many respects like a Haar random basis, including for macroscopic properties like quantum ergodicity or QUE as shown in several systems \cite{Zelditch1992,VanderKam1997,Schwartz-thesis,wbaker}, this is not the case for the matrix element fluctuations. 
It will be the microscopic correlations \emph{between eigenstate coordinates} that will be necessary for producing the classical correlation terms in \eqref{eqn:Va}. 
Note also that here we need at least mixing classical dynamics if the variance \eqref{eqn:classical-var} or \eqref{eqn:Va} is to be a finite number, while quantum ergodicity or QUE can hold even when the corresponding classical system is not ergodic \cite{Zelditch1992,VanderKam1997}. 
\end{rmk}

Finally, we can also consider off-diagonal matrix entries $\sqrt{N}\langle\varphi^{(k,i)}|\Opkl(a)|\varphi^{(k,j)}\rangle$ for $j\ne i$ in view of the Eigenstate Thermalization Hypothesis (ETH). Technically, the argument for ETH (see e.g. \cite{Deutsch-rpp}) does not allow for degenerate eigenspaces (one should desymmetrize the system).
Nevertheless, other than having complex values, which are a result of considering the discrete-time unitary propagators instead of a Hamiltonian, we prove a version of ETH for randomly chosen eigenstates of the Walsh-quantized baker's maps.
Since the ``randomness'' in \eqref{eqn:eth-phys} refers to fluctuations over nearby eigenstates such as those in a small spectral window, and not to any inherent randomness in the system or eigenspace, we will consider an empirical distribution over some number of eigenstates from the same eigenspace(s) of the Walsh-quantized baker's maps.

\begin{thm}\label{thm:eth}
For $N=D^k$, let $\{\varphi^{(k,j)}\}_{j=1}^N$ be an orthonormal eigenbasis for the $D$-baker's map $\Ba_k$ chosen according to Haar measure in each eigenspace $E_\jj$, with corresponding eigenvalues $\{e^{2\pi i\lambda_j}\}_{j=1}^N$.
Let $a:\T^2\to\R$ be a real Lipschitz observable and choose quantization parameter $\ell(k)$ so that $\ell,k-\ell\ge3\log_D k$. 
For any sets of eigenvectors $\{\varphi^{(k,j)}\}_{j\in J_\jj(k)}\subseteq E_\jj$ and $\{\varphi^{(k,i)}\}_{i\in J_\jjtwo(k)}\subseteq E_\jjtwo$ (where $\alpha,\beta$ can depend on $k$, and we allow $\jj=\jjtwo$), consider the empirical spectral measures
\begin{align}
\nu^{(k)}_{\jj\jjtwo}&=\frac{1}{\#\{(j,i)\in J_\jj(k)\times J_\jjtwo(k):j\ne i\}}\sum_{\substack{j\in J_\jj(k),i\in J_\jjtwo(k)\\ j\ne i}}\delta_{\frac{\sqrt{N}}{\sqrt{V(a)}f_a(\jj,\jjtwo)}\langle\varphi^{(k,i)}|\Opkl(a_0)|\varphi^{(k,j)}\rangle},\\
\nu^{(k)}_{\jj}&=\frac{1}{|J_\jj(k)|}\sum_{j\in J_\jj(k)}\delta_{\frac{\sqrt{N}}{\sqrt{V(a)}}\langle\varphi^{(k,j)}|\Opkl(a_0)|\varphi^{(k,j)}\rangle},
\end{align}
where $V(a)$ is as in \eqref{eqn:Va} and
\begin{align}\label{eqn:f}
\begin{aligned}
f_a(\jj,\jjtwo)^2&= \frac{1}{V(a)}\sum_{t=-\infty}^\infty e^{2\pi it(\jj-\jjtwo)}\left(\int_{\T^2}a_0(\x)a_0(B^t\x)\,d\x+\mathbf{1}_{D\ge3}\int_{\T^2}a_0(\x)a_0(B^tR\x)\,d\x\right).
\end{aligned}
\end{align}

\begin{enumerate}[(i)]
\item Let $D\ne 4$. As $k\to\infty$, if $|J_\jj(k)|,|J_\jjtwo(k)|\to\infty$, then
\begin{align}\label{eqn:eth}
\nu^{(k)}_{\jj\jjtwo}\overset{w,\P}{\to}\CN(0,1),
\quad \text{ and }\quad\nu^{(k)}_\jj\overset{w,\P}{\to}\RN(0,1),
\end{align}
where the convergence is weakly in probability. 

\item Let $D=4$. As $k\to\infty$, if $|J_\jj(k)|,|J_\jjtwo(k)|\to\infty$, then
\begin{align}\label{eqn:eth4}
\nu^{(k)}_{\jj\jjtwo}\overset{w,\P}{\to}\CN(0,1).
\end{align}
For a sequence of eigenspace indices $\jj=\jj(k)$ which is eventually always even or always odd,
\begin{align}\label{eqn:eth4-diag}
\nu^{(k)}_\jj\overset{w,\P}{\to}\RN\Big((-1)^{\jj}\dfoura/\sqrt{V(a)},1\Big),
\end{align}
where $(-1)^\jj$ represents the eventual sign of $(-1)^{\jj(k)}$.
\end{enumerate}
\end{thm}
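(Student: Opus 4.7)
My plan is to handle the diagonal statements on $\nu_\alpha^{(k)}$ by direct reduction to Theorems~\ref{thm:mat}--\ref{thm:ind}, and the off-diagonal statements on $\nu_{\alpha\beta}^{(k)}$ by the method of moments. The $\omega$-averages inside each eigenspace are computed via Weingarten calculus, and the resulting traces are converted into the classical correlations appearing in $f_a(\alpha,\beta)^2$ using the eigenspace projector identity $\Pi_\alpha=\frac{1}{q(k)}\sum_{s=0}^{q(k)-1}e^{-2\pi i s\alpha/q(k)}\Ba_k^s$ together with the quantum Egorov property $\Ba_k^s\Opkl(a_0)\Ba_k^{-s}\approx\Opkl(a_0\circ B^s)$ already in use in Theorems~\ref{thm:mat}--\ref{thm:ind}.

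For the diagonal case, the eigenvectors indexed by $J_\alpha(k)$ all lie in the single eigenspace $E_\alpha$, and the individual moment estimates \eqref{eqn:center}--\eqref{eqn:moments} of Theorem~\ref{thm:ind} are uniform in $j\in\intbrr{1:N}$; Remark~\ref{rmk:thm1}(3) extends the empirical-distribution convergence of Theorem~\ref{thm:mat} to any subcollection with $|J(k)|\to\infty$. Applying this with $J(k)=J_\alpha(k)$ yields $\nu_\alpha^{(k)}\to\RN(0,1)$ for $D\ne4$, and the shifted Gaussian for $D=4$ once the parity of $\alpha(k)$ has stabilized, so that $(-1)^{\alpha_j}\dfoura$ in \eqref{eqn:center} is consistent across $j\in J_\alpha(k)$.

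For the off-diagonal case, let $M_{ij}=\langle\varphi^{(k,i)}|\Opkl(a_0)|\varphi^{(k,j)}\rangle$ with $j\in J_\alpha(k)$, $i\in J_\beta(k)$, $j\ne i$. Since the two vectors are Haar random within their eigenspaces (independent if $\alpha\ne\beta$, and part of the same Haar orthonormal set if $\alpha=\beta$), Weingarten calculus gives $\E_\omega[M_{ij}]=0$ and
\[
\E_\omega[|M_{ij}|^2]=\frac{\Tr(\Pi_\beta\Opkl(a_0)\Pi_\alpha\Opkl(a_0))}{\dim E_\alpha\,\dim E_\beta}+O(N^{-2}).
\]
Expanding both projectors as averages of $\Ba_k^s$, applying Egorov together with the trace identity $\Tr(\Opkl(a)\Opkl(b))\approx N\int_{\T^2}ab$ (valid under $\min(\ell,k-\ell)\ge3\log_D k$), and reindexing so that the time surviving the inner summation is the classical correlation time, the double sum collapses to $V(a)\,f_a(\alpha,\beta)^2/N\cdot(1+o(1))$; the factor $f_a(\alpha,\beta)^2$ emerges precisely from the eigenspace phases picked up along the way. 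Higher mixed moments $\E_\omega[M_{ij}^p\bar M_{ij}^q]$ are handled analogously: only matched pairings ($p=q$) survive the Weingarten sum and yield $p!\,(\E_\omega[|M_{ij}|^2])^p+o(N^{-p})$, which after the $\sqrt{N}/(\sqrt{V(a)}f_a(\alpha,\beta))$ rescaling matches the moments of $\CN(0,1)$. To promote this moment convergence in $\omega$-expectation to weak convergence in probability of $\nu_{\alpha\beta}^{(k)}$, I compute the $\omega$-variance of $\int f\,d\nu_{\alpha\beta}^{(k)}$ by expanding joint moments $\E_\omega[M_{ij}\bar M_{i'j'}\cdots]$ over index quadruples; the dominant Weingarten contractions decouple for distinct quadruples up to $O(N^{-1})$ per extra pair, so the variance vanishes as $|J_\alpha|,|J_\beta|\to\infty$.

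The main technical obstacle is controlling the traces $\Tr(\Ba_k^s\Opkl(a_0)\Ba_k^t\Opkl(a_0))$ for $|s|,|t|$ up to half the quantum period $\sim2k$ (for $D\ge3$), which is well past the Ehrenfest time $k=\log_D N$ where naive Egorov breaks down. This is exactly the regime handled in the proofs of Theorems~\ref{thm:mat}--\ref{thm:ind} via the tensor-product Walsh structure of $\Ba_k$ and the resulting phase cancellations. After fixing a single surviving classical-correlation time $r$ and summing over the other time variable against the eigenspace phase, the off-diagonal computation reduces to the same single-time Egorov estimates used previously. Note that the $D=4$ half-period anomaly responsible for the shift in the diagonal case does not appear here, since $\E_\omega[M_{ij}]=0$ holds identically for $j\ne i$, consistent with the absence of a shift in \eqref{eqn:eth4}.
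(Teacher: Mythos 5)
Your proposal is correct in substance and, for the core of the argument, follows the same route as the paper: the diagonal statements are reduced to Theorems~\ref{thm:mat}--\ref{thm:ind} via Remark~\ref{rmk:thm1}(3); a single off-diagonal element is analyzed by the method of moments for complex variables using Weingarten calculus, with the leading contribution coming from pair partitions exchanging the two index blocks (whence the $(p/2)!$ count matching $\E|Z|^{p}$ for $Z\sim\CN(0,1)$); and the variance $\Tr(P_\jjtwo\Opkl(a_0)P_\jj\Opkl(a_0))$ is evaluated by expanding both projectors in powers of $\Ba_k$ and reusing the phase-cancellation trace estimates, which is exactly the paper's Proposition~\ref{prop:var-ind-2} (the eigenspace phases surviving on the diagonal $t_2=-t_1$ produce $f_a(\jj,\jjtwo)^2$, and the $R$-correlation term comes from the post-Ehrenfest times where the classical shadow of $\Ba_k^t$ is $B^{\pm\cdot}R$, not from naive Egorov). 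The one place you genuinely diverge is the upgrade from convergence of $\E_\omega$-moments to weak convergence in probability of $\nu^{(k)}_{\jj\jjtwo}$: the paper works with the characteristic function of the empirical measure and proves a factorization lemma (Lemma~\ref{lem:chf-split-2}) via Gram--Schmidt representation of the Haar columns plus concentration of the overlaps $\langle u|v\rangle$ and of $\langle u|\mathcalboondox M|v\rangle$, whereas you propose to control the $\omega$-variance of empirical moments by joint Weingarten expansions. Your route works but needs two clarifications: (a) for a general bounded continuous $f$ the variance of $\int f\,d\nu$ is not accessible by moment expansions, so you must restrict to empirical \emph{moments}, show each converges in probability to the corresponding $\CN(0,1)$ moment, and then invoke moment-determinacy of the Gaussian to pass to weak convergence in probability; (b) when $\jj=\jjtwo$ the entries $M_{ij}$ for distinct index pairs are quadratic forms in overlapping columns of one Haar unitary, so the claimed decoupling for distinct quadruples must be verified by an explicit $2(p+q)$-fold Weingarten computation rather than asserted (the shared-index quadruples are a vanishing fraction and only need a uniform bound). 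Finally, for $D=4$ your heuristic ``$\E_\omega[M_{ij}]=0$'' is not by itself enough: in the higher-moment expansion the anomalous $\Tr M\sim\sqrt{N}/q(k)$ could a priori contribute through cycle shapes containing singletons, and one must observe (as the paper does) that the column-index constraint forces $\tau$ to have no fixed points, so such $\sigma$ are killed by the Weingarten asymptotics.
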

For comparison to the physics literature statement \eqref{eqn:eth-phys}, let $A_{ij}:=\langle\varphi^{(k,i)}|\Opkl(a)|\varphi^{(k,j)}\rangle$ and $\langle A\rangle:=\int_{\T^2}a(\x)\,d\x$. 
For $D\ne4$, informally, one could write \eqref{eqn:eth} as
\begin{align}\label{eqn:eth-ph}
A_{ij}\sim \langle A\rangle \delta_{ij} + \sqrt{\frac{V(a)}{N}} f_a(\jj,\jjtwo)R_{ij},
\end{align}
where $R_{ij}$ are $\CN(0,1)$ complex normal random variables if $i\ne j$, and $R_{ii}$ are standard real $\RN(0,1)$ random variables.
The randomness in \eqref{eqn:eth-ph} is understood to come from considering the distribution over nearby eigenvectors; in the setting of \eqref{eqn:eth} where there are degeneracies, we consider the empirical distribution of a set of eigenvectors chosen according to Haar measure in the eigenspaces.
Note that $f_a(\jj,\jjtwo)=1$ if $\jj=\jjtwo$, and that the definition \eqref{eqn:f} for $f_a(\jj,\jjtwo)^2$ is seen to be a real number for $\jj\ne\jjtwo$ by considering $t\leftrightarrow -t$, and will also be seen to be eventually $\ge0$ by its derivation in Proposition~\ref{prop:var-ind-2}.
Note also that the diagonal case of $\nu^{(k)}_\jj$ in \eqref{eqn:eth} is already stated in Remark~\ref{rmk:thm1}(3); the new statement here is the off-diagonal distributions.

The presence of $V(a)$ indicates the microscopic correlations in the eigenstates of $\Ba_k$ play a role in \eqref{eqn:eth} as well.
For contrast, the statement that
\begin{align}\label{eqn:eth-0}
\max_{i,j\in\intbrr{1:N}}\left|\langle \varphi^{(k,i)}|\Opkl(a)|\varphi^{(k,j)}\rangle-\delta_{ij}\int_{\T^2}a(\x)\,d\x\right|&\le N^{-1/2+\delta}
\end{align}
with high probability as $k\to\infty$ and any $\delta>0$, is easier to prove.
For the maximum over diagonal entries $i=j$, the above follows from the random eigenbasis QUE estimates in \cite{wbaker}, as we explain in \eqref{eqn:ind-que} and Appendix~\ref{sec:que}.
Also in Appendix~\ref{sec:que}, the off-diagonal estimates in \eqref{eqn:eth-0} follow from moment bounds with Weingarten calculus and Markov's inequality, and it is enough to use rougher absolute value estimates that do not consider the correlations between eigenstate coordinates.

\begin{rmk}
To the best of the author's knowledge, the results in Theorem~\ref{thm:mat} appear to be the only matrix element fluctuation proofs for a non-arithmetic quantized chaotic system. 
These results are however made possible by the high eigenspace degeneracies and periodicity, which are non-generic and allow for a probabilistic result.
\end{rmk}

\subsection{Outline and notation}

\begin{itemize}
\item In Section~\ref{sec:background}, we provide background for the classical $D$-baker's maps, Walsh-quantized $D$-baker's maps, and Weingarten calculus for random quadratic forms. 
We also give an outline of the proof method in Section~\ref{subsec:overview}.

\item In Section~\ref{sec:lemmas}, we prove lemmas concerning quantum vs classical time evolution, and apply these to calculate the Haar-expectation values of the mean and variance of the $\{F_j^{(k)}\}_j$, motivating how the classical value $V(a)$ can be extracted from the quantum variance.

\item In Section~\ref{sec:phases}, we prove several lemmas involving cancellation due to phases of the quantum propagator $\Ba_k^t$, which will be used in the remaining sections to prove Theorems~\ref{thm:mat}, \ref{thm:ind}, and \ref{thm:eth}.

\item In Section~\ref{sec:tr-ind}, we prove \eqref{eqn:center} and \eqref{eqn:e-var} on the individual expectation values $\E_\omega F_j^{(k)}$ and $\E_\omega (F_j^{(k)})^2$, and use this to prove the convergence in probability of the quantum variance stated in \eqref{eqn:var} and \eqref{eqn:var4} of Theorem~\ref{thm:mat}.

\item In Section~\ref{sec:exp}, we prove \eqref{eqn:moments} on higher moments, and use this to prove the weak convergence in probability of the empirical distribution to a Gaussian in Theorem~\ref{thm:mat}. This completes the proofs of Theorems~\ref{thm:mat} and \ref{thm:ind}.

\item In Section~\ref{sec:off-diag}, we prove the off-diagonal matrix element fluctuations for Theorem~\ref{thm:eth} on ETH.

\end{itemize}

Throughout this paper, when convenient and clear from context, we may drop superscripts $k$ and instead write, e.g. $\varphi^{(j)}$ for $\varphi^{(k,j)}$, and $F_j$ for $F_j^{(k)}$. We make use of standard little-o and big-O asymptotic notation $o,O$. We will use the interval notation $\intbrr{a:b}:=\{a,a+1,\ldots,b-1,b\}$ for $a,b\in\Z$. Throughout the paper, generic constants $C$ and $c$ may change from line to line.

\section{Background and preliminaries}\label{sec:background}

\subsection{Classical baker's map}
For the classical $D$-baker's map \eqref{eqn:classical-baker}, we will primarily use its representation using symbolic dynamics with $D$-ary expansions of $(q,p)\in\T^2$. 
We will always assume that $q$ and $p$ are written in their standard $D$-ary expansions without any trailing $(D-1)$s, i.e. an infinitely repeating sequence of $(D-1)$s is replaced by the $D$-ary expansion terminating in a zero.
For position coordinate $q=0.q_1q_2\cdots$ and momentum coordinate $p=0.p_1p_2\cdots$ written in base $D$, the action of the classical map $B$ is the (Bernoulli) left shift
\begin{align*}
\underbrace{\cdots p_3 p_{2}p_{1}}_{\longleftarrow p}\cdot \underbrace{q_1q_2q_3\cdots}_{q \longrightarrow}
\;\mapsto\;\cdots p_3p_{2}p_{1}q_1\cdot q_2q_3\cdots.
\end{align*}
Since the $D$-baker's map is equivalent to a Bernoulli shift, it is immediately seen to be ergodic, maximally chaotic, etc.
The $D$-baker's map possess two classical symmetries: a reflection symmetry $R:(q,p)\mapsto(1-q,1-p)$, and an anticanonical time-reversal symmetry $(q,p)\mapsto(p,q)$ \cite{BV1989}.

\subsection{Quantum baker's map}\label{subsec:qbakermap}
We briefly introduce the quantum Hilbert space, baker's maps, and observables, referring to \cite{AN2007} for further details.
For the Hilbert space for the Walsh-quantized baker's maps, one considers only dimensions $N=D^k$ for $k\in\N$, so that $\mathcal H_N$ can be viewed as the tensor product $(\C^D)^{\otimes k}$.
The semiclassical parameter is $\hbar\propto N^{-1}$, and the semiclassical limit $\hbar\to0$ is the large dimension limit $N\to\infty$.

\subsubsection{Walsh-quantized baker's maps} 
For $D\ge3$, the Walsh-quantized baker's map $\Ba_k$ can be defined through its action on tensor product states,
\begin{align}\label{eqn:walsh-action}
\Ba_k(v^{(1)}\otimes\cdots\otimes v^{(k)})=v^{(2)}\otimes v^{(3)}\otimes\cdots\otimes v^{(k)}\otimes \hat{F}_D^\dagger v^{(1)},
\end{align}
where $\hat{F}_D^\dagger$ is the inverse of the discrete $D\times D$ Fourier transform matrix,
\begin{align*}
\langle m|\hat{F}_D^\dagger|n\rangle &= \frac{1}{\sqrt{D}}e^{2\pi imn/D},\quad\text{for }m,n=0,\ldots,D-1.
\end{align*}
From the definition, $\hat B_k$ is immediately seen to be unitary and to have period $4k$ if $D\ge3$, and $2k$ if $D=2$. We will let $q(k)\in\{4k,2k\}$ denote this period throughout the paper, to make it easier to treat the cases $D\ge3$ and $D=2$ together. With this notation, $\hat B_k^{q(k)}=\operatorname{Id}$. 
We discuss how $\Ba_k$ is a (Walsh) quantization of the classical baker's map once we define quantum observables in Section~\ref{subsubsec:walsh}.
For now, we just note that $\Ba_k$ can be viewed as a left shift on a string of qudits \cite{SchackCaves2000}, which makes it an intuitive quantum analogue of the classical baker's map which acts as a left shift on a string of bits (or other base strings).

\subsubsection{Coherent state bases}
The Walsh quantization of a classical observable $a\in \Lip(\T^2)$ is defined in terms of coherent state bases. 
For $\eta\in\intbrr{0:D-1}$, let $|\eta\rangle\in\C^D$ denote the standard basis vector for the $\eta$th coordinate.  For $\ell=\ell(k)\in\intbrr{0:k}$, and for momentum coordinate $\varepsilon'=\varepsilon_{\ell+1}\cdots\varepsilon_k\in\intbrr{0:D-1}^{k-\ell}$ and position coordinate $\varepsilon=\varepsilon_{\ell}\cdots\varepsilon_1\in\intbrr{0:D-1}^\ell$, the \emph{$(k,\ell)$-coherent state} $|\cs{\varepsilon}\rangle$ is defined as
\begin{align}\label{eqn:cs}
|\cs{\varepsilon}\rangle &= |\varepsilon_\ell\rangle\otimes\cdots\otimes|\varepsilon_1\rangle\otimes\hat{F}_D^\dagger|\varepsilon_k\rangle\otimes\cdots\otimes\hat{F}_D^\dagger|\varepsilon_{\ell+1}\rangle.
\end{align}
We note that this indexing for the coordinates of $|\cs{\varepsilon}\rangle$ is very non-standard, differing from the more usual indexing in e.g. \cite{AN2007}, as well as from the indexing used in \cite{wbaker}. However, because we consider long time evolution, which by \eqref{eqn:walsh-action} cyclically permutes the $\varepsilon_j$, it is especially convenient to have all the indices be in descending order (modulo $k$). 
The corresponding classical representation of the coordinates of $|\cs{\varepsilon}\rangle$ given by writing $q=0.\varepsilon_\ell\cdots\varepsilon_1$ and $p=0.\varepsilon_{\ell+1}\cdots\varepsilon_k$, is
\[
\underbrace{\varepsilon_k\cdots\varepsilon_{\ell+1}}_{\longleftarrow p}\cdot \underbrace{\varepsilon_\ell\cdots\varepsilon_1}_{q \longrightarrow},
\]
which now also has all indices in descending order, making comparison between the two maps easier.
The set $\{|\cs{\varepsilon}\rangle:\varepsilon'\in\intbrr{0:D-1}^{k-\ell},\varepsilon\in\intbrr{0:D-1}^\ell\}$ of $(k,\ell)$-coherent states forms an orthonormal basis for $(\C^D)^{\otimes k}$.

A coherent state $|\cs{\varepsilon}\rangle$ is associated
with a $(k,\ell)$ quantum rectangle $[\cs{\varepsilon}]\subseteq\T^2$, consisting of points $(q,p)\in\T^2$ with 2-sided $D$-ary representation 
\[
\cdots**\,\varepsilon_k\cdots \varepsilon_{\ell+1}\bullet\varepsilon_\ell\cdots \varepsilon_1**\cdots,
\]
where the $*$'s indicate any digit.
Since we do not take representations with trailing $(D-1)$s, we have
\begin{align}\label{eqn:rectangle}
\begin{aligned}
[\cs{\varepsilon}]=\{(q,p)\in\T^2:&\,0.\varepsilon_\ell\cdots\varepsilon_1\le q<0.\varepsilon_\ell\cdots\varepsilon_1+D^{-\ell},\\
&0.\varepsilon_{\ell+1}\cdots\varepsilon_k\le p<0.\varepsilon_{\ell+1}\cdots\varepsilon_k+D^{-(k-\ell)}\},
\end{aligned}
\end{align}
where the values $0.\varepsilon_i\ldots\varepsilon_j$ are written in base $D$.
The quantum rectangle $[\cs{\varepsilon}]$ will also serve as a classical rectangle, since it can be viewed as a $D^{-\ell}\times D^{-(k-\ell)}$ rectangle in $\T^2$.
The set of all $(k,\ell)$ rectangles will be denoted by $\mathcal R_{k,\ell}$. If $\ell=\ell(k)$ is chosen so that
\begin{align}\label{eqn:l}
\ell\to\infty,\quad k-\ell\to\infty,
\end{align}
as $k\to\infty$, then the side lengths of the $(k,\ell)$-rectangles shrink to zero. We will always assume this for a semiclassical limit, though for the proofs we will often need a slightly stronger condition such as $\min(\ell,k-\ell)\ge c\log k$ anyway. This is always met for example by the symmetric choice $\ell=\lfloor k/2\rfloor$.

\subsubsection{Walsh quantization of observables}\label{subsubsec:walsh}
For a $(k,\ell)$-coherent state basis $|\cs{\varepsilon}\rangle$, the Walsh (or Walsh-anti-Wick) quantization of the observable $a\in\Lip(\T^2)$ is defined \cite{AN2007} as
\begin{align}\label{eqn:opkl}
\Opkl(a)&=\sum_{[\cs{\varepsilon}]\in\mathcal R_{k,\ell}}|\cs{\varepsilon}\rangle\langle\cs{\varepsilon}|\fint_{[\cs{\varepsilon}]}a(\x)\,d\x,
\end{align}
where $\fint_{R}a(\x)\,d\x$ denotes the average value $\frac{1}{|R|}\int_Ra(\x)\,d\x$.
We immediately have
\begin{align}\label{eqn:trace-walsh}
\frac{1}{D^k}\operatorname{Tr}\Opkl(a) &= \int_{\T^2}a(\x)\,d\x,
\end{align}
for any $\ell\in\intbrr{0:k}$, as well as $\|\Opkl(a)\|\le\|a\|_\infty$.
The Walsh quantization of observables differs from the more usual Weyl quantization on the torus, which is derived from periodicizing the Weyl quantization on $\R^2$.
The definition \eqref{eqn:opkl} is similar in structure to the anti-Wick quantization (which with Gaussian states can be associated with the Weyl quantization in the semiclassical limit), but with ``Walsh'' coherent states instead of Gaussian coherent states, which creates the differences.
In the Walsh quantization, position and momentum are related via the Walsh (or Walsh--Fourier) transform $W_{D^k}$, defined via
\[
W_{D^k}(v^{(1)}\otimes\cdots\otimes v^{(k)})=\hat{F}_D v^{(k)}\otimes\cdots\otimes\hat{F}_Dv^{(1)},
\]
instead of the usual Fourier transform.

The Walsh-quantized baker's maps $\Ba_k$ are quantizations in the sense that for Walsh-quantized observables $\Opkl(a)$, they recover the action of the classical observable $a$ in the semiclassical limit.
It was shown in \cite{TracyScott2002} for $D=2$, that $\Ba_k$ is not a Weyl quantization of the 2-baker's map; however $\Ba_k$ can instead still be viewed as a Weyl quantization of a multi-valued, or stochastic, version of the classical 2-baker's map.

While we will not use the following representation in the proofs, we note that in the position basis (which is the coherent state basis with $\ell=k$), the map $\Ba_k$ can be written as the matrix
\begin{align}\label{eqn:wbaker}
\Ba_k = W_{D^k}^{-1}\begin{pmatrix}W_{D^{k-1}}&0&0\\0&\ddots&0\\0&0&W_{D^{k-1}}\end{pmatrix},
\end{align}
for Walsh transforms $W_{D^k}$ and $W_{D^{k-1}}$.
This is the same structure as the (Weyl-quantized) baker's map constructed in \cite{BV1989}, but with the Walsh transform matrices replacing discrete Fourier transform (DFT) matrices.
We note that while the Weyl-quantized baker's map from \cite{BV1989} involves DFT matrices, a \emph{Walsh}--Hadamard transform was used in \cite{ML2005} to study multifractal behavior of some of its eigenstates.

\subsubsection{Useful properties}
We collect several useful properties for later use in the proofs.
\begin{enumerate}[(i)]
\item The classical baker's map is exponential mixing for H\"older continuous observables (``exponential decay of correlations'') \cite{Bowen2008}.
This can be seen using the symbolic dynamics\footnote{By considering $(k,\ell)$-rectangles $[\cs{\varepsilon}]\subset\T^2$, the classical symbolic dynamics show that $m([\cs{\varepsilon}]\cap B^t[\cs{\delta}])=m([\cs{\varepsilon}])m([\cs{\delta}])$ for $|t|\ge k$ and any $(k,\ell)$-rectangles $[\cs{\varepsilon}],[\cs{\delta}]$. Taking $k=|t|$ or $|t|-1$ even and $\ell=k/2$, we can then directly estimate the mixing rate in \eqref{eqn:expmix} by approximating $a(\x)$ by $\fint_{[\cs{\varepsilon}]}a$, using that $\int_{[\cs{\varepsilon}]}b(B^t\x)\,d\x=\sum_{[\cs{\delta}]}\int_{[\cs{\delta}]\cap B^{-t}[\cs{\varepsilon}]}b(\x)\,d\x$, and approximating $b(\x)$ by $\fint_{[\cs{\delta}]}b$.}, or from more general results involving Markov partitions such as \cite{Chernov1992}. In particular, for Lipschitz observables $a,b$ and any $t\in\Z$,
\begin{align}\label{eqn:expmix}
\left|\int_{\T^2}a(\x)b(B^{t}\x)\,d\x-\int_{\T^2}a(\x)\,d\x\int_{\T^2}b(\x)\,d\x\right|&\le C\|a\|_{\Lip}\|b\|_{\Lip}e^{-\Gamma|t|},
\end{align}
for some $C,\Gamma>0$ and where $\|a\|_\Lip$ and $\|b\|_\Lip$ denote the Lipschitz norm
\begin{align}
\|f\|_\Lip := \|f\|_\infty+\sup_{\x\ne \mathbf y}\frac{|f(\x)-f(\mathbf y)|}{\|\x-\mathbf y\|_2}.
\end{align}

\item Periodicity and spectral projection matrix formula:
Recalling that $q(k)=4k$ for $D\ge3$ and $q(k)=2k$ for $D=2$, the quantized map $\Ba_k$ satisfies $\Ba_k^{q(k)}=\operatorname{Id}$. As a result, considering polynomials like $(z^{q(k)}-1)/(z-1)$ which are zero on all eigenvalues except one, we see the spectral projection matrix $P_\jj$ onto the eigenspace of $e^{2\pi i\jj/q(k)}$, $\jj=0,\ldots,q(k)-1$, is given by
\begin{align}\label{eqn:ppoly0}
P_\jj &= \frac{1}{q(k)}\left(\sum_{t=-q(k)/2}^{q(k)/2-1}e^{2\pi i \jj t/q(k)}\Ba_k^t\right).
\end{align}

\item Leading order degeneracy for each eigenspace: As shown in \cite[Cor. 8.2]{wbaker}, 
each eigenspace $E_\jj$ of $\Ba_k$ has dimension
\begin{align}\label{eqn:dim}
d_\jj=\Tr P_\jj &= \frac{N}{q(k)}(1+o(1)),
\end{align}
with $o(1)$ remainder term uniform over the eigenspace index $\jj\in\intbrr{0:q(k)-1}$.
\end{enumerate}

We also discuss quantum ergodicity and QUE here, though we do not directly use the following results for the proofs.
A quantum ergodic theorem for the Walsh-quantized baker's maps was proved in \cite{AN2007}: 
For any sequence of orthonormal eigenbases $\{\psi_{k,j}\}_{j=1}^N$ of $\Ba_k$ over $k\ge1$, there is a subset $J_k\subset\intbrr{0:D^k-1}$ such that (a) $\lim_{k\to\infty}\frac{\# J_k}{D^k}=1$, and (b) if $\ell,k-\ell\to\infty$, then for any sequence $j(k)\in J_k$ and observable $a\in\Lip(\T^2)$,
\begin{align}
\lim_{k\to\infty}\langle \psi^{k,j(k)}|\Opkl(a)|\psi^{k,j(k)}\rangle =\int_{\T^2}a(\x)\,d\x.
\end{align}
Explicitly constructed eigenstates in \cite{AN2007} show that QUE does not hold for every eigenbasis.
However, as shown in \cite[\S8.5.2]{wbaker}, if one considers random eigenbases chosen according to Haar measure in each eigenspace, then a QUE statement holds with high probability: 
there is a sequence $\epsilon_k\to0$ so that
\begin{align}\label{eqn:ind-que}
\P\left[\max_{j\in\intbrr{1:N}}\Big|\langle\varphi^{(k,j)}|\Opkl(a)|\varphi^{(k,j)}\rangle-\int_{\T^2}a(\x)\,d\x\Big|>\epsilon_k\|a\|_\infty\right]&\le C_1 D^k\exp\left[-C_2\epsilon_k^2\frac{D^k}{q(k)}\right],
\end{align}
and the decay is fast enough to extend to other Lipschitz $a$ using a dense countable subset (with respect to $\|\cdot\|_\infty$ norm) of Lipschitz observables $(a_j)_j$.
While not stated there, one is permitted to take $\epsilon_k=c N^{-1/2+\delta}$ for any $\delta>0$,
which gives the expected near optimal QUE convergence rate, up to the factor of $N^{\delta}$.
One can also extend this to off-diagonal entries using moment bounds to show 
\begin{align}\label{eqn:therm}
\max_{i,j\in\intbrr{1:N}}\left|\langle \varphi^{(k,i)}|\Opkl(a)|\varphi^{(k,j)}\rangle-\delta_{ij}\int_{\T^2}a(\x)\,d\x\right|&\le N^{-1/2+\delta},
\end{align}
with high probability as $k\to\infty$. 
As discussed in the introduction, this is a much coarser statement than the fluctuation statements in Theorems~\ref{thm:mat} and \ref{thm:eth}.
We provide details for \eqref{eqn:therm} in Appendix~\ref{sec:que}.

\subsection{Random quadratic forms}\label{subsec:weingarten}

For a real-valued classical observable $a\in \Lip(\T^2)$, let $a_0=a-\int_{\T^2}a$ be the centered observable.
Since $\varphi^{(j)}$ is chosen randomly according to Haar measure in its eigenspace, a matrix element fluctuation $F_j=\sqrt{N}\left(\langle\varphi^{(k,j)}|\Opkl(a_0)|\varphi^{(k,j)}\rangle\right)$ is a quadratic form in random variables, and we can compute its mean and variance (as well as its higher moments later in Section~\ref{subsec:moments}).

To study $F_j$ coming from the eigenspace $E_\jj$ corresponding to eigenvalue $e^{2\pi i\jj/q(k)}$,
let the eigenspace dimension be $d_\jj=\Tr P_\jj=\frac{N}{q(k)}(1+o(1))$, and let $\Lambda_\jj$ be an $N\times d_\jj$ matrix with columns which form an orthonormal basis of $E_\jj$. 
Then for a Haar random unitary matrix $U=(u^{(1)}\cdots u^{(d_\jj)})\in\mathcal{U}(d_\jj)$ with columns $u^{(1)},\ldots u^{(d_\jj)}$, the matrix $\Lambda_\jj U$ is distributed according to Haar measure on $E_\jj$. Also, the spectral projection matrix is $P_\jj=\Lambda_\jj\Lambda_\jj^\dagger$. Letting $u=u^{(1)}$, the distribution of a single $F_j$ corresponding to the eigenspace $E_\jj$ thus coincides with that of
\begin{align}\label{eqn:F-unitary}
\sqrt{N}\langle u|\Lambda_\jj^\dagger \Opkl(a_0)\Lambda_\jj|u\rangle.
\end{align}
In order to compute expectation values, we use the Weingarten calculus \cite{Collins2003,CollinsSniady2006}:
\begin{thm}[{\cite[Corollaries 2.4, 2.7]{CollinsSniady2006}}]\label{thm:weingarten}
Let $\mathcal U(d)$ be the space of $d\times d$ unitary matrices equipped with Haar measure, and let $S_n$ be the permutation group on $n$ elements. 
Let $n$ be a positive integer and $i=(i_1,\ldots,i_n)$, $i'=(i_1',\ldots,i_n')$, $j=(j_1,\ldots,j_n)$, $j'=(j_1',\ldots,j_n')$ be $n$-tuples of positive integers. Then
\begin{align}\label{eqn:weingarten}
\int_{\mathcal{U}(d)}U_{i_1j_1}\cdots U_{i_nj_n}\overline{U_{i_1'j_1'}}\cdots\overline{U_{i_n'j_n'}}\,dU
&= \sum_{\sigma,\tau\in S_n}\delta_{i_1i'_{\sigma(1)}}\cdots\delta_{i_ni'_{\sigma(n)}}\delta_{j_1j'_{\tau(1)}}\cdots\delta_{j_nj'_{\tau(n)}} \Wg(\tau\sigma^{-1}),
\end{align}
where $\Wg$ denotes the Weingarten function defined in \cite[Eq.~(9)]{CollinsSniady2006}.
The Weingarten function satisfies the asymptotics, for fixed $n$ as $d\to\infty$,
\begin{align}\label{eqn:weingarten-asymptotics}
\Wg(\operatorname{Id})&=\frac{1}{d^n}\left(1+O\left(\frac{1}{d^2}\right)\right),\quad
\Wg(\sigma)=O\left(\frac{1}{d^{n+|\sigma|}}\right),
\end{align}
where $|\sigma|$ denotes the minimal number of factors needed to write $\sigma$ as a product of transpositions, or equivalently, $|\sigma|=n-(\#\text{ of disjoint cycles})$.

If $n\ne n'$, then
\begin{align}\label{eqn:weingarten-diff}
\int_{\mathcal{U}(d)}U_{i_1j_1}\cdots U_{i_nj_n}\overline{U_{i_1'j_1'}}\cdots\overline{U_{i_{n'}'j_{n'}'}}\,dU=0.
\end{align}
\end{thm}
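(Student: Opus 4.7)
The plan is to recast the integral as a matrix element of the Haar-averaged tensor operator
\[
T_{n,n'}:=\int_{\mathcal{U}(d)}U^{\otimes n}\otimes\ol{U}^{\otimes n'}\,dU
\]
acting on $(\C^d)^{\otimes n}\otimes\ol{(\C^d)^{\otimes n'}}$, and to analyze it via the bi-invariance of Haar measure combined with Schur--Weyl duality. For the vanishing statement \eqref{eqn:weingarten-diff}, it suffices to use the central $U(1)\subset\mathcal{U}(d)$ of scalar unitaries: substituting $U\mapsto e^{i\theta}U$ preserves Haar measure and multiplies the integrand by $e^{i(n-n')\theta}$, so integrating over $\theta\in[0,2\pi)$ forces the integral to vanish when $n\ne n'$.

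For the main identity \eqref{eqn:weingarten}, full bi-invariance implies that $T_{n,n}$ commutes with the diagonal $\mathcal{U}(d)$-action on both tensor factors, and therefore lies in the commutant of $\mathcal{U}(d)^{\otimes n}$ acting on $(\C^d)^{\otimes n}$. By Schur--Weyl duality this commutant is spanned by the permutation operators $\rho(\sigma)\in\operatorname{End}((\C^d)^{\otimes n})$ for $\sigma\in S_n$. Expanding
\[
T_{n,n}=\sum_{\sigma,\tau\in S_n}c(\sigma,\tau)\,\rho(\sigma)\otimes\ol{\rho(\tau)}
\]
and reading off matrix elements in the standard basis reproduces exactly the product-of-deltas structure on the right-hand side of \eqref{eqn:weingarten}; a further invariance argument shows $c(\sigma,\tau)$ depends only on $\tau\sigma^{-1}$, which defines $\Wg$. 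The coefficients are pinned down by demanding that $T_{n,n}$ act as the orthogonal projection onto the invariants, equivalently by inverting the Gram matrix $G(\sigma,\tau)=\Tr(\rho(\sigma)\rho(\tau)^{-1})=d^{n-|\sigma\tau^{-1}|}$, the last equality being the standard cycle-counting identity for traces of permutations on tensor space.

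For the asymptotics \eqref{eqn:weingarten-asymptotics}, I would factor $G=d^n(I+M)$, where $M$ vanishes on the diagonal and has off-diagonal entries $M(\sigma,\tau)=d^{-|\sigma\tau^{-1}|}=O(d^{-1})$ uniformly (since $|S_n|=n!$ is a fixed constant in this limit). For $d$ large, the Neumann series $G^{-1}=d^{-n}\sum_{j\ge0}(-M)^j$ converges and gives $\Wg(\operatorname{Id})=d^{-n}(1+O(d^{-2}))$, the first correction arising from $M^2$ whose diagonal entries are $\sum_{\pi\ne\operatorname{Id}}d^{-2|\pi|}=O(d^{-2})$; similarly, $\Wg(\sigma)=O(d^{-n-|\sigma|})$ for $\sigma\ne\operatorname{Id}$ from the leading $-M(\operatorname{Id},\sigma)=-d^{-|\sigma|}$ contribution. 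The main obstacle in this program is the representation-theoretic setup: verifying the Gram identity $G(\sigma,\tau)=d^{n-|\sigma\tau^{-1}|}$ cleanly and tracking the leading behaviour of $G^{-1}$ along each conjugacy class. Once those are in hand the rest is linear algebra, and the finer character-formula identification of $\Wg$ developed in \cite{CollinsSniady2006} is not needed for the asymptotic bounds that the paper actually uses.
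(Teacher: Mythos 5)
This theorem is not proved in the paper at all: it is quoted verbatim from Collins--\'Sniady, so there is no internal argument to compare against. Your sketch is a correct outline of the standard proof of the Weingarten formula, essentially the one in the cited reference and its descendants: the $U(1)$-center argument for \eqref{eqn:weingarten-diff} is exactly right; the identification of $T_{n,n}$ with the projection onto invariants, the passage through Schur--Weyl duality to the span of permutation operators, the observation that the coefficient depends only on $\tau\sigma^{-1}$, and the characterization of $\Wg$ as the (pseudo-)inverse of the Gram matrix $G(\sigma,\tau)=d^{n-|\sigma\tau^{-1}|}$ are all the standard route, and they avoid the character-theoretic formula for $\Wg$ that Collins--\'Sniady develop but that the paper never uses. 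Two points deserve care in a full write-up. First, the bound $\Wg(\sigma)=O(d^{-n-|\sigma|})$ does not follow from the single leading term $-M(\operatorname{Id},\sigma)$ alone; you need every term of the Neumann series at position $(\operatorname{Id},\sigma)$ to be $O(d^{-|\sigma|})$, which follows from the subadditivity $|\pi_1|+|\pi_2|\ge|\pi_1\pi_2|$ of the transposition length (so that along any factorization $\pi_1\pi_1^{-1}\pi_2\cdots=\sigma$ the exponents sum to at least $|\sigma|$), together with the fact that $|S_n|=n!$ is fixed so the series converges for $d$ large. Second, for $d<n$ the Gram matrix is singular and one must work with the pseudo-inverse; this is immaterial for the fixed-$n$, $d\to\infty$ asymptotics the paper relies on, but should be flagged if you state \eqref{eqn:weingarten} for all $d$.
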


Let $\mz:=\Lambda_\jj^\dagger \Opkl(a_0)\Lambda_\jj$, which is self-adjoint for $a_0$ real. Note that $\Tr(\mz^n)=\Tr((\Opkl(a_0)P_\jj)^n)$ since $P_\jj=\Lambda_\jj\Lambda_\jj^\dagger$. 
We will use a subscript $\omega$ on expectations and probabilities to emphasize it is taken over Haar measure in each eigenspace.
Using Theorem~\ref{thm:weingarten}, \eqref{eqn:F-unitary}, and \eqref{eqn:dim}, we obtain
\begin{align*}
\E_\omega F_j= \sqrt{N}\sum_{i_1,i_1'=1}^{d_\jj}\mz_{i_1'i_1}\E[\bar{u}_{i_1'}u_{i_1}] 
&=\sqrt{N}\frac{\Tr(\mz)}{d_\jj}
=\frac{q(k)}{\sqrt{N}}\Tr(\Opkl(a_0)P_\jj)(1+o(1)),\numberthis\label{eqn:mean}
\end{align*}
and
\begin{align*}
\E_\omega|F_j|^2 &= N\sum_{i_1,i_2,i_1',i_2'=1}^{d_\jj}\mz_{i_1'i_1}\mz_{i_2'i_2}\E[\bar{u}_{i_1'}u_{i_1}\bar{u}_{i_2'}u_{i_2'}]\\
&=N\sum_{\tau\in S_2}\left[(\Tr \mz)^2\Wg(\tau)+(\Tr \mz^2)\Wg(\tau \circ(1\,2))\right]\\
&= \frac{q(k)^2}{N}\left[(\Tr \Opkl(a_0) P_\jj)^2 + \Tr(\Opkl(a_0) P_\jj \Opkl(a_0)P_\jj)\right](1+o(1)),\numberthis\label{eqn:variance}
\end{align*}
where we used the Weingarten function asymptotics \eqref{eqn:weingarten-asymptotics} to evaluate $\Wg(\tau)$ and $\Wg(\tau\circ(1\;2))$ for the last line.

In Sections~\ref{subsec:gaussian} and \ref{subsec:eth-conv}, we will also use the fact that one can generate Haar-distributed unitary matrices using the Gram--Schmidt procedure on independent standard complex Gaussian random vectors; see e.g.  \cite[\S1.2]{Meckes2019-book}.

\subsection{Proof outline}\label{subsec:overview}

In this section, we give an outline of the proof method.
As can be seen using the previous section on random quadratic forms, the moments (with respect to Haar measure) of the $F_j$ are given in terms of traces such as  $\Tr(\Opkl(a_0)P_\jj)$ and $\Tr(\Opkl(a_0)P_\jj\Opkl(a_0)P_\jj)$. 
Using \eqref{eqn:ppoly0}, these traces can be written in terms of the quantum propagator $\Ba_k^t$ for $t$ along a full period $-q(k)/2$ to $q(k)/2-1$. 
This means we will have to consider $\Ba_k^t$ for times $t$ near and at the Ehrenfest time $t\approx k$. 
For these times, we will have to rely on cancellations due to the quantum phases.
Away from the Ehrenfest time, we will use a precise correspondence between the classical and quantum dynamics proved in Section~\ref{subsec:ctime}. 
 
Once we estimate all the above traces to obtain the behavior of the moments $\E_\omega F_j^p$ in Theorem~\ref{thm:ind}, we can use Weingarten calculus and some estimates (since the $F_j$ are not independent) to show the desired convergences in Theorem~\ref{thm:mat}. Letting $A_0:=\Opkl(a_0)$, the sequence in summary is:
\begin{center}
\stepcounter{tikznumber} 
\begin{tikzpicture} 
\node[matrix,right] at (0,0)
{
\node {$\E F_j^m$}; & \node {$\rightsquigarrow$}; 
& \node {$\Tr((A_0P_\jj)^p)$}; & \node{$\rightsquigarrow$}; 
& \node {\parbox{2.2cm}{$\Tr(A_0\Ba_k^{t})$,\\$\Tr(A_0\Ba_k^t A_0\Ba_k^s)$}}; & \node{$\rightsquigarrow$};
& \node{\parbox{2.9cm}{\flushleft classical dynamics,\\quantum phase cancellation}}; & \node{$+$};
& \node{\parbox{1.5cm}{some \\additional estimates}}; & \node{$\rightsquigarrow$};
& \node{Theorem~\ref{thm:mat}.};
\\
};
\end{tikzpicture}
\end{center}
Similar methods involving different eigenstates and eigenspaces are used to prove Theorem~\ref{thm:eth}.

Rather than starting with the moments $\E_\omega F_j^m$ for a single $F_j$, it is easiest to first compute the averaged expectation values $\E_\omega\frac{1}{N}\sum_{j=1}^NF_j$ and $\E_\omega\frac{1}{N}\sum_{j=1}^N|F_j|^2$, which we do in Section~\ref{sec:lemmas}.
Due to the averaging over $j$, it turns out there is a great deal of cancellation of terms in the trace expansions. 
Some of the remaining terms can be directly related to the classical dynamics using Lemmas~\ref{lem:classical} and \ref{lem:intsum}.

Computing the individual moments $\E_\omega F_j^p$ turns out to be trickier, because a simple triangle inequality/absolute value bound on the trace expansion does not work near the Ehrenfest time $t\approx k$; the remainder terms, some of which previously canceled when averaging over $j$, are too large.
In order to show the remainder terms are sufficiently small for a single $j$, we must consider the previously mentioned phases of entries of $\Ba_k^t$, which do not have a classical analogue, and demonstrate that there is cancellation (Section~\ref{sec:phases}).
Applying this, we determine the asymptotics of $\Tr(\Opkl(a_0)P_\jj)$ and $\Tr(\Opkl(a_0)P_\jj \Opkl(a_0)P_\jj)$ in Section~\ref{sec:tr-ind}, followed by some additional Weingarten calculus to show convergence in probability of the quantum variance.
To show the convergence of the empirical distribution $\mu_k$, we show that the higher moments $\E \tilde F_j^p$ converge to those of a Gaussian (Theorem~\ref{thm:ind}). With some additional estimates for non-independent $F_j,F_k$, this will show the desired convergence.

\section{Time evolution and average expectation values}\label{sec:lemmas}

In this section, we start with some ``warm-up'' calculations in Proposition~\ref{prop:averages}, which use averaging to obtain explicit cancellations. 
In later sections, we will have to obtain cancellations without the averaging, which will instead be obtained through phase cancellations in the quantum map.
\begin{prop}[average expectation values]\label{prop:averages}
Let $\ell,k-\ell\ge 2\log_D k$. Then as $k\to\infty$, we have the following.
\begin{enumerate}[(i)]
\item Expected mean values: For any $D\ge2$ and $N=D^k$,
\begin{align}
\E_\omega\left[\frac{1}{N}\sum_{j=1}^NF_j\right]=0.\label{eqn:avg-mean}
\end{align}
We can also identify a difference in the $D=4$ case which is consistent with \eqref{eqn:center}: For $D\ge3$, letting $\jeven\subset\intbrr{1:N}$ denote the eigenvector indices $j$ corresponding to eigenvalues $e^{2\pi i\jj/q(k)}$ with even $\jj$, then
\begin{align}\label{eqn:jeven}
\E_\omega\Bigg[\frac{1}{|\jeven|}\sum_{j\in\jeven}F_j\Bigg]&=\begin{cases}
o(1),&D\ne 4\\
\langle a_0\rangle_{\mathcal B^{(4)}_{0,2}}+o(1),&D=4
\end{cases},
\end{align}
where $\langle a_0\rangle_{\mathcal B^{(4)}_{0,2}}$ is defined as in \eqref{eqn:a0-avg}. Similarly, with $\jodd\subset\intbrr{1:N}$ denoting the indices corresponding to odd $\jj$,
\begin{align}
\E_\omega\Bigg[\frac{1}{|\jodd|}\sum_{j\in\jodd}F_j\Bigg]&=\begin{cases}
o(1),&D\ne 4\\
-\langle a_0\rangle_{\mathcal B^{(4)}_{0,2}}+o(1),&D=4
\end{cases}.
\end{align}

\item Expected variance: For any $D\ge2$,
\begin{multline}
\E_\omega\left[\frac{1}{N}\sum_{j=1}^N|F_j|^2\right] =\sum_{t=-\infty}^\infty \left(\int_{\T^2}a_0(\x)a_0(B^t\x)\,d\x+\mathbf{1}_{D\ge3}\int_{\T^2}a_0(\x)a_0(B^tR\x)\,d\x \right)\\
+\frac{1}{N}\sum_{\substack{t=-q(k)/2\\t\ne0}}^{q(k)/2-1}|\Tr(\Opkl(a_0)\hat B_k^t)|^2+o(1).\label{eqn:avg-var}
\end{multline}
\end{enumerate}
\end{prop}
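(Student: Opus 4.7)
The proof proceeds by applying the Weingarten formulas \eqref{eqn:mean}–\eqref{eqn:variance} to each $\E_\omega F_j$ and $\E_\omega |F_j|^2$, then averaging over $j$ by grouping according to eigenspace (each eigenspace $E_\alpha$ contributes $d_\alpha \sim N/q(k)$ equal summands) and exploiting the structure of the spectral projectors $P_\alpha$ from \eqref{eqn:ppoly0}.

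For part (i), \eqref{eqn:mean} gives $\E_\omega F_j = \sqrt{N}\,\Tr(\Opkl(a_0)P_\alpha)/d_\alpha$ whenever $\varphi^{(j)}\in E_\alpha$, so summing over $j$ with multiplicity $d_\alpha$ yields exactly $\sqrt{N}\sum_\alpha \Tr(\Opkl(a_0)P_\alpha)=\sqrt{N}\Tr(\Opkl(a_0))=0$ via $\sum_\alpha P_\alpha=I$ and \eqref{eqn:trace-walsh}. For the parity-restricted sums, the discrete Fourier identity on $\Z/q(k)\Z$ gives $\sum_{\alpha\text{ even}}P_\alpha=\tfrac{1}{2}(I+\Ba_k^{q(k)/2})$ and $\sum_{\alpha\text{ odd}}P_\alpha=\tfrac{1}{2}(I-\Ba_k^{q(k)/2})$, so modulo $o(1)$ the question reduces to the size of $\frac{1}{\sqrt{N}}\Tr(\Opkl(a_0)\Ba_k^{q(k)/2})$. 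For $D\ge3$ the iteration in \eqref{eqn:walsh-action} gives $\Ba_k^{2k}=(\hat F_D^2)^{\otimes k}$, where $\hat F_D^2$ acts as the reflection $|n\rangle\mapsto|{-n}\bmod D\rangle$ with fixed set $\{0\}$ when $D$ is odd and $\{0,D/2\}$ when $D$ is even. Expanding the trace in the $(k,\ell)$-coherent state basis picks out rectangles whose indices lie in the fixed set in every slot, and the diagonal values are the rectangle averages $\fint a_0$. For $D=4$ this produces exactly $2^k=\sqrt{N}$ diagonal terms indexed by $\{0,2\}^k$, whose averages converge to $\langle a_0\rangle_{\mathcal B^{(4)}_{0,2}}$; for odd $D$ only $\{0\}^k$ contributes giving a trace $O(1)=o(\sqrt{N})$, and for even $D\ge6$ the $2^k$ contributions are swamped by $\sqrt{N}=D^{k/2}$. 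The sign reversal for the odd-parity sum follows from the projection formula.

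For part (ii), \eqref{eqn:variance} followed by eigenspace grouping gives
\begin{align*}
\frac{1}{N}\sum_{j=1}^N\E_\omega|F_j|^2 &=\frac{q(k)}{N}\sum_\alpha\Big[(\Tr(\Opkl(a_0)P_\alpha))^2+\Tr(\Opkl(a_0)P_\alpha\Opkl(a_0)P_\alpha)\Big]+o(1).
\end{align*}
Expanding each $P_\alpha$ via \eqref{eqn:ppoly0} and using the character orthogonality $\sum_{\alpha=0}^{q(k)-1}e^{2\pi i\alpha(t+s)/q(k)}=q(k)\mathbf 1_{t+s\equiv 0\,(\mathrm{mod}\,q(k))}$, the first bracketed term collapses to $\frac{1}{N}\sum_{t}|\Tr(\Opkl(a_0)\Ba_k^t)|^2$, in which the $t=0$ piece vanishes by \eqref{eqn:trace-walsh}; this produces the second line of \eqref{eqn:avg-var} exactly. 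The second bracketed term collapses to the autocorrelation sum $\frac{1}{N}\sum_{t=-q(k)/2}^{q(k)/2-1}\Tr(\Opkl(a_0)\Ba_k^t\Opkl(a_0)\Ba_k^{-t})$.

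It remains to identify this autocorrelation sum with the classical expression $\sum_{t}\int a_0(\x)a_0(B^t\x)\,d\x+\mathbf 1_{D\ge3}\sum_{t}\int a_0(\x)a_0(B^tR\x)\,d\x$. For $|t|$ well below the Ehrenfest time $k$, a classical--quantum correspondence (to be proved in Section~\ref{sec:lemmas} as Lemma~\ref{lem:classical}) approximates $\frac{1}{N}\Tr(\Opkl(a_0)\Ba_k^t\Opkl(a_0)\Ba_k^{-t})$ by $\int a_0(\x)a_0(B^t\x)\,d\x$, with errors summable by the classical exponential mixing \eqref{eqn:expmix}; Lemma~\ref{lem:intsum} then extends the truncated classical sum to an infinite sum. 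For $D\ge3$ the reflection $\Ba_k^{q(k)/2}=(\hat F_D^2)^{\otimes k}$ plays the role of the classical $R$, so writing $\Ba_k^t=\Ba_k^{q(k)/2}\Ba_k^{t-q(k)/2}$ for $t$ near $q(k)/2$ and reapplying the correspondence yields the second (reflected) sum. The main obstacle is the transition region $|t|\approx k$ where the Egorov-type correspondence degrades; here the averaging over $\alpha$ has removed the dangerous terms, and the remaining control relies on the symbolic-dynamics structure of $\Ba_k^t$ encoded in the cited lemmas, together with uniform operator-norm bounds $\|\Opkl(a_0)\|\le\|a_0\|_\infty$ to control error terms near the Ehrenfest cutoff.
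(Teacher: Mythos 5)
Your proposal is correct and follows essentially the same route as the paper: the Weingarten formulas \eqref{eqn:mean}--\eqref{eqn:variance}, the spectral-projector expansion \eqref{eqn:ppoly0} with character orthogonality over $\alpha$, evaluation of $\Tr(\Opkl(a_0)\Ba_k^{\pm 2k})$ via the reflection $\hat R_D^{\otimes k}$ and its fixed set ($\{0,2\}^k$ for $D=4$), and Lemma~\ref{lem:intsum} for the classical-correlation identification. One small clarification: in the averaged autocorrelation $\Tr(\Opkl(a_0)\Ba_k^{t}\Opkl(a_0)\Ba_k^{-t})$ only $|\langle\cs{\delta}|\Ba_k^{t}|\cs{\varepsilon}\rangle|^2$ enters, so Lemma~\ref{lem:classical} yields an exact (up to Lipschitz errors) classical correspondence at \emph{every} $t$, including $|t|\approx k$ where the classical correlations are simply exponentially small by mixing --- there is no Egorov-type degradation to patch with operator-norm bounds here (which alone would give an insufficient $O(k\|a_0\|_\infty^2)$ contribution); phase cancellation only becomes necessary for the unaveraged quantities treated in later sections.
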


It will turn out, due to later results in Sections~\ref{sec:phases} and \ref{sec:tr-ind}, that the term in the second line of \eqref{eqn:avg-var} is $o(1)$ for $D\ne4$, and is $\dfoura^2+o(1)$ for $D=4$.
Keeping this in mind, \eqref{eqn:avg-var} suggests that $V(a)+\oneb_{D=4}\dfoura^2$, where $V(a)$ appears in the first line of \eqref{eqn:avg-var}, is indeed the correct variance for the matrix element fluctuations.
However, to prove the evaluation of the above mentioned term, as well as the convergence of the quantum variance to $V(a)+\oneb_{D=4}\dfoura^2$ as stated in Theorem~\ref{thm:mat}, we will need to determine the individual matrix element means and variances, that is $\E_\omega F_j$ and $\E_\omega F_j^2$.
As discussed previously this is more complicated than the expectations computed above, since we will need a better understanding of the phases of the quantum propagator $\Ba_k^t$.

We note that, strictly speaking, we do not actually need the result of Proposition~\ref{prop:averages}, since it will be implied by the individual estimates on $\E_\omega F_j$ and $\E_\omega F_j^2$ that we prove later. However, the proof is useful for demonstrating several computational techniques and for motivating how the classical quantity $V(a)$ arises from the quantum dynamics.
Additionally, we will need Lemmas~\ref{lem:classical} and \ref{lem:intsum} proved below, which relate quantum and classical time evolution, to analyze terms in individual $\E_\omega F_j^2$ as well.
The terms that produce the leading order terms in Proposition~\ref{prop:averages} are the same ones that produce the end result in $\E_\omega F_j$ and $\E_\omega F_j^2$; the challenge for the latter expectations is to show that all the other terms (which magnitude-wise can be a similar order) are negligible after cancellations.

Before starting to analyze the time evolution of $\Ba_k^t$, which will be used for the expected variance in \eqref{eqn:avg-var}, we can prove part (i) of the proposition using just properties of the spectral projectors $P_\jj$ and the definition of $\Ba_k^t$. 
\begin{proof}[Proof of Proposition~\ref{prop:averages}(i)]
Recall from Section~\ref{subsec:weingarten} that
\begin{align}
\E_\omega F_j&=\frac{q(k)}{\sqrt{N}}\Tr(\Opkl(a_0)P_\jj)(1+o(1)),
\end{align}
where $P_\jj$ is the projection onto the eigenspace of $e^{2\pi i\jj/q(k)}$.
Using the eigenspace dimensionality \eqref{eqn:dim} and the fact that $\sum_{\jj=0}^{q(N)-1}P_\jj=\operatorname{Id}$, we see
\begin{align*}
\E_\omega\left[\frac{1}{N}\sum_{j=1}^NF_j\right]&=\frac{1}{N}\frac{N}{q(k)}\sum_{\jj=0}^{q(k)-1}\frac{q(k)}{\sqrt{N}}{\Tr(\Opkl(a_0)P_\jj)}(1+o(1))\\
&=\frac{1+o(1)}{\sqrt{N}}\Tr(\Opkl(a_0))=0,\numberthis\label{eqn:exp-average}
\end{align*}
proving \eqref{eqn:avg-mean}.

We next consider the averages over every other eigenspace for $D\ge3$.
Applying \eqref{eqn:dim} and \eqref{eqn:ppoly0}, we obtain
\begin{align*}
\E_\omega\Bigg[\frac{1}{|\jeven|}\sum_{j\in\jeven}F_j\Bigg]&=\frac{2}{N}\frac{N}{q(k)}(1+o(1))\sum_{\substack{\jj=0\\\jj\text{ even}}}^{q(k)-1}\frac{q(k)}{\sqrt{N}}\Tr(\Opkl(a_0)P_\jj)\\
&=\frac{2}{\sqrt{N}}(1+o(1))\frac{1}{q(k)}\sum_{t=-q(k)/2}^{q(k)/2-1}\sum_{\substack{\jj=0\\\jj\text{ even}}}^{q(k)-1}\Tr(\Opkl(a_0)\Ba_k^t)e^{2\pi i\jj t/q(k)}.
\end{align*}
The only term dependent on $\jj$ is the exponential phase term, leading to the sum
\begin{align*}
\sum_{\substack{\jj=0\\\jj\text{ even}}}^{q(k)-1}e^{2\pi i\jj t/q(k)}&=\sum_{\gamma=0}^{q(k)/2-1}e^{4\pi i\gamma t/q(k)}=\begin{cases}
\frac{q(k)}{2},&t=0\text{ or }-q(k)/2\\
0,&\text{otherwise}
\end{cases}.
\end{align*}
Thus
\begin{align}\label{eqn:jeven-2trace}
\E_\omega\Bigg[\frac{1}{|\jeven|}\sum_{j\in\jeven}F_j\Bigg]&=\frac{1+o(1)}{\sqrt{N}}\left[\Tr(\Opkl(a_0))+\Tr(\Opkl(a_0)\Ba_k^{-2k})\right].
\end{align}
As before $\Tr(\Opkl(a_0))=0$, while using the definition \eqref{eqn:walsh-action} gives
$\Ba_k^{-2k}=\big[(\hat F_D)^2\big]^{\otimes k}$. 
Defining $\hat{R}_D:=(\hat{F}_D^\dagger)^2=(\hat{F}_D)^2$, one can check by direct calculation that $\hat R_D$ is the map $|x\rangle\mapsto|-x\;\mathrm{mod}\;D\rangle$. Then expanding the trace $\Tr(\Opkl(a_0)\Ba_k^{-2k})$ in the $(k,\ell)$-coherent state basis gives
\begin{align*}
\Tr(\Opkl(a_0)\Ba_k^{-2k}) = \sum_{[\cs{\varepsilon}]\in\mathcal R_{k,\ell}}\langle \cs{\varepsilon}|\hat R_D^{\otimes k}|\cs{\varepsilon}\rangle\fint_{[\cs{\varepsilon}]}a_0
&=\sum_{\substack{[\cs{\varepsilon}]\in\mathcal R_{k,\ell}\\\varepsilon_i=-\varepsilon_i\;\mathrm{mod}\;D}}\fint_{[\cs{\varepsilon}]}a_0,\numberthis\label{eqn:tracehalf}
\end{align*}
with the last equality since $\langle\varepsilon_i|\hat R_D|\varepsilon_i\rangle=\langle\varepsilon_i|-\varepsilon_i\;\mathrm{mod}\;D\rangle$. 
For $D$ odd, the sum is thus only over the single term $[\cs{\varepsilon}]$ with $\varepsilon_1=\cdots=\varepsilon_k=0$, and \eqref{eqn:tracehalf} is $\le\|a_0\|_\infty=o(\sqrt{N})$. 
For $D$ even, the sum is over $\varepsilon_i\in\{0,D/2\}$, so there are $2^k$ terms in the sum, and \eqref{eqn:tracehalf} is $\le 2^k\|a_0\|_\infty$. For $D\ge6$ even, this is also $o(\sqrt{N})$.

However, for $D=4$, $2^k=\sqrt{N}$ and so \eqref{eqn:tracehalf} can be $\ge c\sqrt{N}$ for certain $a_0$. In summary we thus have
\begin{align}\label{eqn:trace2k}
\frac{1}{\sqrt{N}}\Tr(\Opkl(a_0)\Ba_k^{-2k})&=\begin{cases}
o(1),&D\ne4\\
\displaystyle\frac{1}{2^k}\sum_{\substack{[\cs{\varepsilon}]\in\mathcal R_{k,\ell}\\\varepsilon_j\in\{0,2\}}}\fint_{[\cs{\varepsilon}]}a_0,&D=4
\end{cases}.
\end{align}
The sequence $(s_k)_k$ defined by
\begin{align*}
s_k:=\frac{1}{2^k}\sum_{\substack{[\cs{\varepsilon}]\in\mathcal R_{k,\ell(k)}\\\varepsilon_j\in\{0,2\}}}\fint_{[\cs{\varepsilon}]}a_0
\end{align*}
converges as $k\to\infty$: For $\ell(k),k-\ell(k)\ge v$ and $(k,\ell)$-symbolic expression $\cs{\varepsilon}=\varepsilon_k\cdots\varepsilon_{\ell+1}\bullet\varepsilon_{\ell}\cdots\varepsilon_1$, let $(\cs{\varepsilon})_v:=\varepsilon_{2v}\cdots\varepsilon_{v+1}\bullet \varepsilon_v\cdots\varepsilon_1$ be its truncation to $v$ $D$-ary places on each side. Since $a_0$ is Lipschitz continuous, then for any $k$ such that $\ell(k),k-\ell(k)\ge v$,
\begin{align*}
s_k&=\frac{1}{2^k}\sum_{\substack{[\cs{\varepsilon}]\in\mathcal R_{k,\ell(k)}\\\varepsilon_j\in\{0,2\}}}\left[a_0((\cs{\varepsilon})_v)+O(\|a\|_\Lip D^{-v})\right]\\
&=\frac{1}{2^k}\sum_{\substack{[\cs{\tilde\varepsilon}]\in\mathcal R_{2v,v}\\\tilde\varepsilon_j\in\{0,2\}}}2^{\ell-v}2^{k-\ell-v}\left[a_0(\cs{\tilde\varepsilon})+O(\|a\|_\Lip D^{-v})\right]\\
&=\frac{1}{2^{2v}}\sum_{\substack{[\cs{\tilde\varepsilon}]\in\mathcal R_{2v,v}\\\tilde\varepsilon_j\in\{0,2\}}}a_0(\cs{\tilde\varepsilon})+O(\|a\|_\Lip D^{-v}).
\end{align*}
Since $\ell(k),k-\ell(k)\to\infty$ as $k\to\infty$, taking $v\to\infty$ implies $(s_k)_k$ is a Cauchy sequence, so
\begin{align}
\lim_{k\to\infty}s_k=\lim_{k\to\infty}\frac{1}{2^k}\sum_{\substack{[\cs{\varepsilon}]\in\mathcal R_{k,\ell(k)}\\\varepsilon_j\in\{0,2\}}}\fint_{[\cs{\varepsilon}]}a_0
\end{align}
exists and equals $\langle a_0\rangle_{\mathcal B^{(4)}_{0,2}}$ as defined in \eqref{eqn:a0-avg}. Thus \eqref{eqn:trace2k} becomes for $D\ge3$,
\begin{align}\label{eqn:trace2k-final}
\frac{1}{\sqrt{N}}\Tr(\Opkl(a_0)\Ba_k^{-2k})&=\begin{cases}
o(1),&D\ne4\\
\langle a_0\rangle_{\mathcal B^{(4)}_{0,2}}+o(1),&D=4
\end{cases}.
\end{align}
With \eqref{eqn:jeven-2trace}, this implies \eqref{eqn:jeven}. The case with $\jodd$ follows similarly.
\end{proof}

\subsection{Quantum vs classical time evolution}\label{subsec:ctime}
The powers $\Ba_k^t$ for $t=1,2,\ldots$ describe the quantum time evolution.
A key property of the $\Ba_k^t$ is that the pattern of nonzero matrix elements in a coherent state basis are directly related to the classical $D$-baker's map. They mimic the behavior of time evolution of the classical baker's map before an Ehrenfest time $t\approx k$, at which they appear to ``decohere''. Afterwards, instead of further decohering, they reverse and continue to imitate the classical map evolution with some twists.
This is clearest to see visually in the position basis, where for $t=1,2,\ldots,k-1$, the nonzero matrix elements of $\Ba_k^t$ resemble the expanding map $q\mapsto D^tq\;\mathrm{mod}\;1$, which is the action of $t$ compositions of the classical $D$-baker's map on the position coordinate. (See Figure~\ref{fig:matrix-powers-pos}.)
\begin{figure}[ht]
\includegraphics[width=\textwidth]{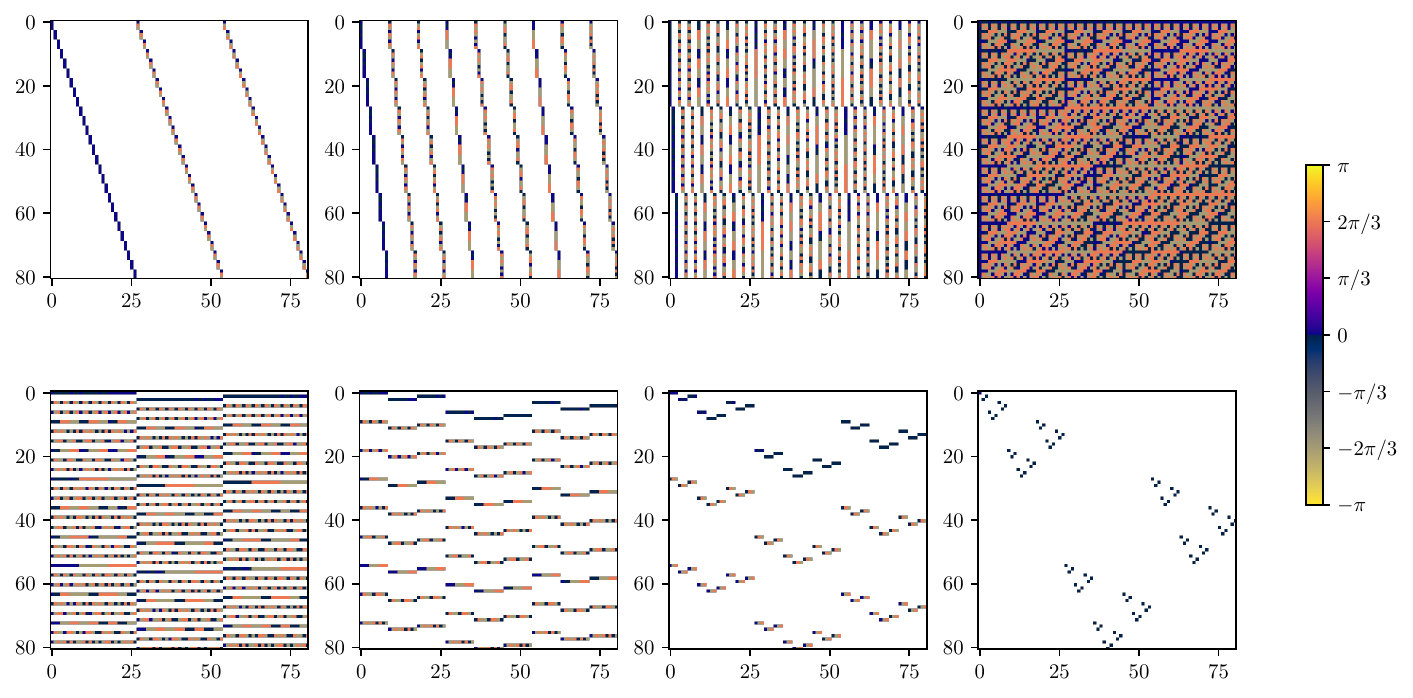}
\caption{Graphical plot of the matrix entries of $\Ba_k^t$ for $D=3$ and $k=4$ in the position basis according to \eqref{eqn:wbaker}, for $t=1,\ldots,8$ from left to right, top to bottom. The phases are plotted in color, with matrix entries that are zero shown in white. 
Noting that the $(0,0)$ entry of the matrix $\Ba_k^t$ is the top left, we see for early $t$ that the nonzero matrix entries of $\Ba_k^t$ resemble the graph of the classical map $q\mapsto D^tq\;\mathrm{mod}\;1$, which is the $D$-baker's map action on the position coordinate.
At $t$ approaches the Ehrenfest time $k=4$ (top right), the correspondence breaks down as the matrix $\Ba_k^t$ becomes fully dense. Afterwards, however, $\Ba_k^t$ ``reverses'' form and becomes sparse again.
}\label{fig:matrix-powers-pos}
\end{figure}

We first recall the following result from \cite{wbaker}, which counts the number of nonzero entries of $\hat B_k^t$.
\begin{prop}[{\cite[Prop. 8.4]{wbaker}}]\label{prop:walsh-powers}
Define the tent-shaped function $\eta(t)=\eta_k(t)$ periodically in $t\in\Z/(2k)\Z$ via 
\begin{align}\label{eqn:eta}
\eta_k(t)&=\begin{cases}
t,&0\le t\le k\\
k-(t-k),&k+1\le t\le 2k-1
\end{cases}.
\end{align}
Then for $D\ge2$, $t\in\Z$, and any $0\le \ell\le k$,
\begin{enumerate}[(i)]
\item For $t\not\in (2k)\Z$, there are exactly $D^{\eta_k(t)}$ $(k,\ell)$-coherent state basis vectors $|\varepsilon'\cdot\varepsilon\rangle$ such that $\langle \varepsilon'\cdot\varepsilon|\hat B_k^t|\varepsilon'\cdot\varepsilon\rangle\ne0$. 
For $t\in (4k)\Z$, there are $D^k$ such solutions.
For $t\in 2k+(4k)\Z$, there is $D^{\eta_k(t)}=1$ such $|\varepsilon'\cdot\varepsilon\rangle$ if $D$ is odd, and $2^k$ such $|\varepsilon'\cdot\varepsilon\rangle$ if $D$ is even. 
\item If  $\langle \varepsilon'\cdot\varepsilon|\hat B_k^t|\varepsilon'\cdot\varepsilon\rangle\ne0$, then it has absolute value $|\langle \varepsilon'\cdot\varepsilon|\hat B_k^t|\varepsilon'\cdot\varepsilon\rangle|=D^{-\eta_k(t)/2}$.
\item There are $D^k\cdot D^{\eta_k(t)}$ non-zero entries $\langle \delta'\cdot\delta|\hat B_k^t|\varepsilon'\cdot\varepsilon\rangle$, and for these entries, $|\langle\delta'\cdot\delta|\hat B_k^t|\varepsilon'\cdot\varepsilon\rangle|=D^{-\eta_k(t)/2}$.
\end{enumerate}
\end{prop}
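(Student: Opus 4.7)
The plan is to exploit the explicit cyclic structure of $\hat{B}_k$ from \eqref{eqn:walsh-action}: the operator $\hat{B}_k^t$ cyclically left-shifts the $k$ tensor factors by $t$ positions and applies $\hat{F}_D^\dagger$ to every factor that wraps around to the right end. Equivalently, the $i$-th tensor factor of $\hat{B}_k^t|\varepsilon'\cdot\varepsilon\rangle$ is $(\hat{F}_D^\dagger)^{q_{i,t}}$ applied to the original factor at position $i_0 := ((i+t-1)\bmod k)+1$, where $q_{i,t}:=\lfloor(i+t-1)/k\rfloor$. Using the $q(k)$-periodicity of $\hat{B}_k$ and the adjoint relation $\hat{B}_k^{-t}=(\hat{B}_k^t)^\dagger$, it suffices to analyze $0\le t\le 2k$.

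For each tensor position $i$, combining the coherent-state definition \eqref{eqn:cs} of $|\varepsilon'\cdot\varepsilon\rangle$ and $\langle \delta'\cdot\delta|$ with the extra power of $\hat{F}_D^\dagger$ introduced by $\hat{B}_k^t$ yields a net ``Fourier level'' $\mu_i\in\Z/4\Z$ sandwiched between the bra and ket. Reading this off gives $\mu_i=q_{i,t}+\mathbf{1}[i_0>\ell]-\mathbf{1}[i>\ell]\pmod 4$, which depends only on which side of $\ell$ the indices $i$ and $i_0$ sit on together with $q_{i,t}$. The per-position inner product is then $\langle \delta_a|(\hat{F}_D^\dagger)^{\mu_i}|\varepsilon_b\rangle$ for the appropriate bra/ket indices $a,b$, evaluating to a Kronecker delta $\delta_{\delta_a,\varepsilon_b}$ when $\mu_i=0$, a ``reflected'' delta $\delta_{\delta_a,-\varepsilon_b\bmod D}$ when $\mu_i=2$, and a unit-modulus phase of magnitude $D^{-1/2}$ when $\mu_i\in\{1,3\}$.

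Multiplying across the $k$ positions, the matrix element $\langle\delta'\cdot\delta|\hat{B}_k^t|\varepsilon'\cdot\varepsilon\rangle$ either vanishes (when some (reflected) delta condition fails) or has magnitude $D^{-m_t/2}$, where $m_t$ counts positions with $\mu_i$ odd. A direct case analysis split into the ranges $t\le\min(\ell,k-\ell)$, $\min(\ell,k-\ell)<t\le k$, and $k<t\le 2k$ confirms $m_t=\eta_k(t)$, proving (ii). For (iii), enumerating the $(\delta,\varepsilon)$-pairs satisfying the $k-\eta_k(t)$ delta constraints---which take the form $\delta_m=\varepsilon_{\pi(m)}$ or $\delta_m=-\varepsilon_{\pi(m)}\bmod D$, each coupling one pair of indices---produces $2k-(k-\eta_k(t))=k+\eta_k(t)$ free digit choices and hence $D^{k+\eta_k(t)}$ nonzero entries. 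Specializing $\delta=\varepsilon$ in the same constraints gives (i): for $t\not\in 2k\Z$ all constraints are ordinary deltas, forcing a $t$-periodic string and contributing $D^{\eta_k(t)}$ solutions, whereas at $t\in 2k+(4k)\Z$ the constraints collapse to $2\varepsilon_j\equiv 0\pmod D$ for each $j$, giving the unique zero solution when $D$ is odd and $2^k$ solutions $\varepsilon_j\in\{0,D/2\}$ when $D$ is even.

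The main obstacle is essentially bookkeeping: tracking how $\mu_i$ and the coupling pattern of the delta constraints evolve as $t$ crosses the thresholds $\ell$, $k-\ell$, $k$, and $2k$ demands care (several subranges may be empty at the extremes $\ell\in\{0,k\}$, but no cases are lost), yet every subcase reduces to the single per-position inner-product computation above. The classical picture in Figure~\ref{fig:matrix-powers-pos}---with $\hat{B}_k^t$ mimicking the expanding classical map for $t<k$, becoming dense near $t=k$, and ``reversing'' for $k<t<2k$---serves as a useful guide to the casework.
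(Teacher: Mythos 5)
Your overall strategy is sound and, in fact, coincides with the machinery this paper develops for the closely related Lemma~\ref{lem:classical} and Lemma~\ref{lem:trace} (the proposition itself is only cited here from \cite{wbaker}, not reproved): track the net power $\mu_i$ of $\hat F_D^\dagger$ at each tensor position, note that odd powers give unit-modulus entries of size $D^{-1/2}$ while even powers give (possibly reflected) Kronecker deltas, and count. Your formula $\mu_i=q_{i,t}+\mathbf{1}[i_0>\ell]-\mathbf{1}[i>\ell]\pmod 4$ is correct, the reduction to $0\le t\le 2k$ via $\hat B_k^{-t}=(\hat B_k^t)^\dagger$ and $\eta_k(-t)=\eta_k(t)$ is legitimate, and the off-diagonal count in (iii) is clean because each delta constraint couples one $\delta$-digit to one $\varepsilon$-digit, so the bipartite constraint system can never close into a cycle.

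There is, however, one inaccurate justification in your treatment of (i). For $k<t<2k$ (mod $2k$, $D\ge 3$) the even-$\mu_i$ positions carry $\mu_i=2$, not $\mu_i=0$: the diagonal constraints are the \emph{reflected} relations $\varepsilon_j=\bar\varepsilon_{j+(k-[t]_k)}$ with $\bar\varepsilon:=D-\varepsilon\bmod D$ (this is exactly the appearance of the symmetry $R$ in \eqref{eqn:H4} and table~\eqref{eqn:t2}), so it is false that ``all constraints are ordinary deltas forcing a $t$-periodic string.'' The count $D^{\eta_k(t)}$ nevertheless survives, but for a reason you should state: each reflected constraint is still a bijection determining one digit from another, and for $[t]_k\ne 0$ the constraint chains have strictly increasing indices and hence never close into a cycle, so the $k-\eta_k(t)$ constraints remain independent. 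This point matters because it is precisely the closing of a chain with an odd number of reflections that produces the anomalous count $1$ or $2^k$ at $t\in 2k+(4k)\Z$; as written, your argument does not explain why that degeneracy cannot occur at other times. With that repair, and with the asserted casework $m_t=\eta_k(t)$ actually carried out (it is routine and parallels the tables in the proof of Lemma~\ref{lem:classical}), the proof is complete.
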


For this paper, we will need to know how the specific coherent states $|\cs{\varepsilon}\rangle,|\cs{\delta}\rangle$ where $\langle\cs{\varepsilon}|\hat B_k^t|\cs{\delta}\rangle\ne0$ are related to the classical baker's map evolution. For this we have the following more precise lemma:
\begin{lem}[time evolution]\label{lem:classical}
Let $D\ge3$, $k\in\N$, and $\ell\in\intbrr{0:k}$, and consider $(k,\ell)$-coherent states $|\cs{\varepsilon}\rangle,|\cs{\delta}\rangle$. 
For $t\in\Z$, define $[t]_k:=t\;\mathrm{mod}\,k\in\intbrr{0:k-1}$. For $B$ the classical $D$-baker's map and $R$ the map $(q,p)\mapsto(1-q,1-p)$, define the classical map $H(t)$, periodically in $\Z/(4k\Z)$, via
\begin{align}\label{eqn:H4}
H(t):=\begin{cases}
B^t,&0\le t\le k-1\\
B^{-(k-[t]_k)}R,&k\le t\le 2k-1\\
B^{[t]_k}R,&2k\le t\le 3k-1\\
B^{-(k-[t]_k)},&3k\le t\le 4k-1
\end{cases}.
\end{align}
Then
\begin{align}\label{eqn:H3}
\{(|\cs{\varepsilon}\rangle,|\cs{\delta}\rangle):\langle \cs{\delta}|\Ba_k^t|\cs{\varepsilon}\rangle\ne0\}
=\{([\cs{\varepsilon}],[\cs{\delta}]):(H(t)[\cs{\varepsilon}])\cap[\cs{\delta}]\ne\emptyset\}.
\end{align}

For $D=2$ and $t\in\intbrr{0:2k-1}$, the same holds with
\begin{align}\label{eqn:H2}
H(t):=\begin{cases}B^t,&0\le t\le k-1\\
B^{-(k-[t]_k)},&k\le t\le 2k-1
\end{cases}.
\end{align}
\end{lem}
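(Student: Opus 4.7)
The plan is to compute both sides of the claimed equality directly and match them. Two structural facts streamline everything: first, $\Ba_k^k = (\hat F_D^\dagger)^{\otimes k}$, since the defining action \eqref{eqn:walsh-action} cyclically permutes the $k$ tensor slots and after $k$ applications every slot has returned to its original position while receiving exactly one factor of $\hat F_D^\dagger$; second, $(\hat F_D^\dagger)^{\otimes k}$ commutes with $\Ba_k$, a one-line verification. Writing $t = mk + r$ with $m \in \{0,1,2,3\}$ (or $m \in \{0,1\}$ for $D=2$, where $\hat R_2 = I$ forces $\Ba_k^{2k} = I$) and $r = [t]_k$, we therefore decompose $\Ba_k^t = ((\hat F_D^\dagger)^m)^{\otimes k}\,\Ba_k^r$.

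On the quantum side, $\Ba_k^r$ moves tensor slot $i$ of $|\cs{\varepsilon}\rangle$ to new slot $(i-r)\;\mathrm{mod}\;k$, inserting one extra $\hat F_D^\dagger$ on the wrap-around slots $i \le r$; the outer factor then adds $m$ more applications of $\hat F_D^\dagger$ to every slot. Using $(\hat F_D^\dagger)^2|x\rangle = |-x\;\mathrm{mod}\;D\rangle$ and $(\hat F_D^\dagger)^4 = I$, each slot of $\Ba_k^t|\cs{\varepsilon}\rangle$ reduces to either a position vector $|y\rangle$ or a momentum vector $\hat F_D^\dagger|y\rangle$ with label $y = \pm\varepsilon_{j(p)}\;\mathrm{mod}\;D$ for an explicit bookkeeping function $j$. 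Taking the slot-wise inner product with $|\cs{\delta}\rangle$, each slot contributes either a Kronecker delta between a specific digit of $\delta$ and $\pm\varepsilon_{j(p)}\;\mathrm{mod}\;D$ (when the two $\hat F_D^\dagger$-counts agree mod 2) or a pure phase of modulus $D^{-1/2}$; the condition $\langle\cs{\delta}|\Ba_k^t|\cs{\varepsilon}\rangle \ne 0$ is the conjunction of all these Kronecker equalities.

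For the classical side I use symbolic dynamics on the bi-infinite $D$-ary expansion: $B^s$ shifts the decimal $|s|$ positions (right if $s>0$, left if $s<0$), and $R$ effects the digit-wise reflection coming from $(q,p)\mapsto(1-q,1-p)$ (which agrees with digit-negation mod $D$ off a measure-zero set of boundary points, not affecting whether open rectangles intersect). Applied to $[\cs{\varepsilon}]$, the map $H(t)$ yields either another $(k,\ell')$-rectangle (when the decimal remains inside the $\varepsilon$-block) or a union of parallel strips indexed by the digits freed past that block. In either case the condition $H(t)[\cs{\varepsilon}] \cap [\cs{\delta}] \ne \emptyset$ reduces to the equality of a prescribed substring of $\delta$-digits with a shifted, possibly negated, substring of $\varepsilon$-digits. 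Regime by regime these digit equalities coincide with the Kronecker conditions from the previous step: the outer twist $((\hat F_D^\dagger)^m)^{\otimes k}$ produces precisely the digit-negations and slot shifts corresponding to pre- or post-composition with $R$ and a negative power of $B$ as prescribed by \eqref{eqn:H4} (and \eqref{eqn:H2} when $D=2$).

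The main obstacle is the bookkeeping across the four regimes of $t\;\mathrm{mod}\;4k$ (two for $D=2$), tracking for each regime (i) which slots produce Kronecker constraints versus phase contributions, (ii) the index map $p \mapsto j(p)$ relating new slot positions to original $\varepsilon$-labels, and (iii) the sign on the right-hand side of each Kronecker equality. A short induction in the sub-range $0 \le t \le \ell$ using the slot-shift formula identifies $\Ba_k^t|\cs{\varepsilon}\rangle$ with the $(k,\ell-t)$-coherent state of the same labels and disposes of that portion directly; the remaining sub-ranges of each regime follow mechanically from the decomposition above combined with the single-slot action of $((\hat F_D^\dagger)^m)^{\otimes k}$.
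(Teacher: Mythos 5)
Your proposal follows essentially the same route as the paper's proof: decompose $\Ba_k^t$ as a cyclic slot permutation composed with $((\hat F_D^\dagger)^m)^{\otimes k}$ (the paper phrases this as reusing its tables \eqref{eqn:t1}--\eqref{eqn:t2} with an extra Fourier factor applied to the bottom row), reduce non-vanishing of $\langle\cs{\delta}|\Ba_k^t|\cs{\varepsilon}\rangle$ to slot-wise Kronecker conditions on digits, and match these regime by regime to the symbolic shifts and digit negations encoded in $H(t)$. The one place where you are no more careful than the paper is the identification of the digit-wise negation $x\mapsto -x\;\mathrm{mod}\;D$ arising from $\hat R_D$ with the digit action of $R:(q,p)\mapsto(1-q,1-p)$ (whose a.e.\ digit action is $x\mapsto D-1-x$); both your sketch and the paper's proof assert this correspondence at the level of rectangle intersections rather than verify it, so if you write this out you should treat that step as requiring its own justification rather than a measure-zero remark.
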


\begin{proof}
Recall the action of $\Ba_k$ on tensor product states,
\begin{align*}
\Ba_k(v^{(1)}\otimes\cdots\otimes v^{(k)})&=v^{(2)}\otimes v^{(3)}\otimes\cdots\otimes v^{(k)}\otimes\hat{F}_D^\dagger v^{(1)},
\end{align*}
and that a $(k,\ell)$-coherent state $|\cs{\varepsilon}\rangle$ is defined as the tensor product state
\begin{align*}
|\cs{\varepsilon}\rangle &= |\varepsilon_\ell\rangle\otimes\cdots\otimes|\varepsilon_1\rangle\otimes\hat{F}_D^\dagger|\varepsilon_k\rangle\otimes\cdots\otimes\hat{F}_D^\dagger|\varepsilon_{\ell+1}\rangle,
\end{align*}
corresponding to the quantum rectangle $[\cs{\varepsilon}]$ of points $(q,p)\in\T^2$ with 2-sided $D$-ary representation $\leftarrow p\bullet q\rightarrow=\cdots**\,\varepsilon_k\cdots \varepsilon_{\ell+1}\bullet\varepsilon_\ell\cdots \varepsilon_1**\cdots$, with no trailing $(D-1)$s.

We first work with $D\ge3$, and consider different cases depending on the value of $t$.
\begin{itemize}[leftmargin=*]
\item For $0\le t<\ell$, we have $\Ba_k^t|\cs{\varepsilon}\rangle=|\varepsilon_{\ell-t}\rangle\otimes\cdots\otimes|\varepsilon_1\rangle\otimes\hat{F}_D^\dagger|\varepsilon_k\rangle\otimes\cdots\otimes\hat{F}_D^\dagger|\varepsilon_{\ell-t+1}\rangle$. To compare easily with $|\cs{\delta}\rangle$, we can write this in table form as 
\begin{align}\label{eqn:t1}
\def\arraystretch{1.3}
\begin{array}{c||ccc|ccc|ccc}
 & 1&\cdots & \ell-t &\ell-t+1 &\cdots & \ell&\ell+1 &\cdots &k\\\hline
|\cs{\delta}\rangle&|{\delta_\ell}\rangle & \cdots &|{\delta_{t+1}}\rangle &|\delta_{t}\rangle&\cdots& |{\delta_1}\rangle&\hat{F}_D^\dagger |{\delta_k}\rangle&\cdots&\hat{F}_D^\dagger |{\delta_{\ell+1}}\rangle\\\hline
\Ba_k^t|\cs{\varepsilon}\rangle&|{\varepsilon_{\ell-t}}\rangle&\cdots  &|{\varepsilon_1}\rangle & \hat{F}_D^\dagger |{\varepsilon_k}\rangle &\cdots&\hat{F}_D^\dagger |{\varepsilon_{k-t+1}}\rangle &\hat{F}_D^\dagger |{\varepsilon_{k-t}}\rangle&\cdots&\hat{F}_D^\dagger |{\varepsilon_{\ell-t+1}}\rangle
\end{array}
\end{align}
Since $|\langle\delta_i |\hat{F}_D^\dagger|\varepsilon_j\rangle|=D^{-1/2}\ne0$ for any coordinate basis vectors $|\delta_i\rangle,|\varepsilon_j\rangle$, we see the inner product $\langle\cs{\delta}|\Ba_k^t|\cs{\varepsilon}\rangle$ is nonzero exactly when
\begin{align}
\begin{aligned}\label{eqn:matelem1}
&\delta_\ell\cdots\delta_{t+1}=\varepsilon_{\ell-t}\cdots\varepsilon_1,\\
\text{and }\;&\delta_k\cdots\delta_{\ell+1}=\varepsilon_{k-t}\cdots\varepsilon_{\ell-t+1},
\end{aligned}
\end{align}
corresponding to the inner products of coordinate basis vectors in the first and third sections of \eqref{eqn:t1}.
The other coordinates of $|\cs{\varepsilon}\rangle$ and $|\cs{\delta}\rangle$ are free to vary. In terms of the two-sided $D$-ary expansions of points in  $[\cs{\varepsilon}]$ and $[\cs{\delta}]$, we line up the expansions as
\begin{equation}\label{eqn:classical1}
\tp{\begin{tikzpicture}[baseline=(current  bounding  box.center)]
\def\pw{1pt}
\node at (0,0) {\parbox{\pw}{$\bullet$\\$\bullet$}};
\node at (.25,0) {\parbox{\pw}{$\delta_\ell$\\$\varepsilon_{\ell-t}$}};
\node at (1,0) {\parbox{\pw}{$\cdots$\\$\cdots$}};
\node at (1.75,0) {\parbox{\pw}{$\delta_{t+1}$\\$\varepsilon_1$}};
\node at (2.5,0) {\parbox{\pw}{$\cdots\delta_1$\\}};
\node at (-1,0) {\parbox{\pw}{$\delta_{\ell+1}$\\$\varepsilon_{\ell-t+1}$}};
\node at (-1.5,0) {\parbox{\pw}{$\cdots$\\$\cdots$}};
\node at (-2.25,0) {\parbox{\pw}{$\delta_k$\\$\varepsilon_{k-t}$}};
\node at (-2.85,-.24) {$\varepsilon_k\cdots$};
\draw[xshift=-.1cm,dotted,line width=1pt] (-2.3,-.5)--(-2.3,.5)--(2.5,.5)--(2.5,-.5)--cycle;
\node at (4,-.1) {,};
\node at (-4,-.24) {$\cdots*\,*$};
\node at (-3,.2) {$\cdots*\,*$};
\node at (3.8,.2) {$*\,*\cdots$};
\node at (3,-.24) {$*\,*\cdots$};
\end{tikzpicture}}
\end{equation}
so that \eqref{eqn:matelem1} corresponds to each coordinate in the outlined rectangle being the same as the one above/below it. The bottom row is left shifted by $t$, which is the classical action $B^t$ on $[\cs{\varepsilon}]$. The set $B^t[\cs{\varepsilon}]$ intersects $[\cs{\delta}]$ exactly when there is a $(q,p)\in\T^2$ so that the two rows of \eqref{eqn:classical1} represent the same point, which occurs exactly when \eqref{eqn:matelem1} holds. (Recall we always take the $D$-ary expansion to have no trailing $(D-1)$s, so every number has a unique $D$-ary expansion.)

\item For $\ell\le t\le k-1$, the equivalent of \eqref{eqn:t1} is the table
\begin{align}\label{eqn:t2}
\def\arraystretch{1.3}
\begin{array}{c||ccc|ccc|ccc}
 & 1&\cdots & \ell &\ell+1 &\cdots & k-t+\ell&k-t+\ell+1 &\cdots &k\\\hline
|\cs{\delta}\rangle&|{\delta_\ell}\rangle & \cdots &|{\delta_{1}}\rangle &\hat{F}_D^\dagger|\delta_{k}\rangle&\cdots& \hat{F}_D^\dagger|{\delta_{t+1}}\rangle&\hat{F}_D^\dagger |{\delta_t}\rangle&\cdots&\hat{F}_D^\dagger |{\delta_{\ell+1}}\rangle\\\hline
\Ba_k^t|\cs{\varepsilon}\rangle&\hat{F}_D^\dagger|{\varepsilon_{k-t+\ell}}\rangle&\cdots  &\hat{F}_D^\dagger|{\varepsilon_{k-t+1}}\rangle & \hat{F}_D^\dagger |{\varepsilon_{k-t}}\rangle &\cdots&\hat{F}_D^\dagger |{\varepsilon_{1}}\rangle &\hat{R}_D |{\varepsilon_{k}}\rangle&\cdots&\hat{R}_D |{\varepsilon_{k-t+\ell+1}}\rangle
\end{array}
\end{align}
where $\hat{R}_D:=(\hat{F}_D^\dagger)^2$.
By the same reasoning as before, the inner product 
$\langle\cs{\delta}|\Ba_k^t|\cs{\varepsilon}\rangle$ is nonzero exactly when
\begin{align}
\begin{aligned}\label{eqn:matelem2}
\delta_k\delta_{k-1}\cdots\delta_{t+1}&=\varepsilon_{k-t}\varepsilon_{k-t-1}\cdots\varepsilon_1,
\end{aligned}
\end{align}
coming from the requirements from the middle section of \eqref{eqn:t2}.
This is the exact same condition as \eqref{eqn:matelem1}, which again corresponds to the classical map $B^t$ in the same way.

\item For $k\le t\le 2k-1$, we write $t=k+[t]_k$ and note that
\begin{align}\label{eqn:timek}
\Ba_k^k(v^{(1)}\otimes\cdots\otimes v^{(k)}) &= \hat{F}_D^\dagger v^{(1)}\otimes\cdots\otimes \hat{F}_D^\dagger v^{(k)}.
\end{align}
Thus we can use the same tables \eqref{eqn:t1} and \eqref{eqn:t2} with $[t]_k$ replacing $t$, and with an additional Fourier matrix $\hat{F}_D^\dagger$ applied to every term in the bottom row of \eqref{eqn:t1} and \eqref{eqn:t2}.
Note that by direct computation, one can check that $\hat{R}_D=(\hat{F}_D^\dagger)^2$ is the map $|x\rangle\mapsto|-x\;\mathrm{mod}\;D\rangle$. Then $\langle\delta_i|\hat{R}_D|\varepsilon_j\rangle\ne0$ iff $\delta_i=\bar\varepsilon_j$, where $\bar\varepsilon_j:=D-\varepsilon_j\;(\mathrm{mod}\;D)$.

Considering the case $[t]_k<\ell$ using \eqref{eqn:t1} and the case $[t]_k\ge\ell$ using \eqref{eqn:t2}, 
we get the same condition for both,
that $\langle \cs{\delta}|\Ba_k^t|\cs{\varepsilon}\rangle\ne0$
exactly when
\begin{align}
\delta_{[t]_k}\cdots\delta_1&=\bar\varepsilon_k\cdots\bar\varepsilon_{k-[t]_k+1}.
\end{align}
The analogue of \eqref{eqn:classical1} for $[t]_k<\ell$ is
\begin{equation}\label{eqn:classical3}
\tp{\begin{tikzpicture}[baseline=(current  bounding  box.center)]
\def\pw{1pt}
\node at (0,0) {\parbox{\pw}{$\bullet$\\$\bullet$}};
\node at (.25,0) {\parbox{\pw}{$\delta_\ell$\\$*$}};
\node at (.75,0) {\parbox{\pw}{$\cdots$\\$\cdots$}};
\node at (1.25,0) {\parbox{\pw}{$\delta_{[t]_k+1}$\\$*$}};
\node at (2.25,0) {\parbox{\pw}{$\delta_{[t]_k}$\\$\bar\varepsilon_k$}};
\node at (2.75,0) {\parbox{\pw}{$\cdots$\\$\cdots$}};
\node at (3.25,0) {\parbox{\pw}{$\delta_1$\\$\bar\varepsilon_{k-[t]_k+1}$}};
\node at (5,-.24) {$\cdots\bar\varepsilon_1$};
\node at (-.75,0) {\parbox{\pw}{$\delta_{\ell+1}$\\$*$}};
\node at (-1.25,0) {\parbox{\pw}{$\cdots$\\$\cdots$}};
\node at (-1.5,0) {\parbox{\pw}{$\delta_k$\\$*$}};
\draw[xshift=4.45cm,dotted,line width=1pt] (-2.3,-.5)--(-2.3,.5)--(0.15,.5)--(.15,-.5)--cycle;
\node at (6.75,-.1) {,};
\node at (-2.25,-.05) {\parbox{1cm}{$\cdots*\,*$\\$\cdots*\,*$}};
\node at (5.15,.2) {$*\,*\cdots$};
\node at (6,-.24) {$*\,*\cdots$};
\end{tikzpicture}}
\end{equation}
and a similar figure can be drawn for the case $[t]_k\ge\ell$. 
In either case, the bottom row has each digit $\varepsilon_j$ mapped to $\bar\varepsilon_j$, and then shifted to the right by $k-[t]_k$.
The right shift is generated by the inverse map $B^{-1}$, and so we see \eqref{eqn:classical3}, with entries in the dotted box equal to the one above/below, describes exactly when there is a
$(q,p)\in(B^{-(k-[t]_k)}R[\cs{\varepsilon}])\cap[\cs{\delta}]$, where
$R$ is the map $(q,p)\mapsto(1-q,1-p)$.

\item For $2k\le t\le 3k-1$, we note that
\begin{align}
\Ba_k^{2k}(v^{(1)}\otimes\cdots\otimes v^{(k)}) &= \hat{R}_D v^{(1)}\otimes\cdots\otimes \hat{R}_D v^{(k)},
\end{align}
so we may use tables \eqref{eqn:t1} and \eqref{eqn:t2} but with an additional factor of $\hat{R}_D$ applied to each entry in the bottom row, and with $t\mapsto[t]_k$. Similar reasoning as before gives the condition
\begin{align}
\delta_k\cdots\delta_{[t]_k+1}=\bar\varepsilon_{k-[t]_k}\cdots\bar\varepsilon_1,
\end{align}
which corresponds to the map $R$ followed by the left shift by $[t]_k$, or $B^{[t]_k}R$.

\item For $3k\le t\le 4k-1$, we use
\begin{align}
\Ba_k^{3k}(v^{(1)}\otimes\cdots\otimes v^{(k)}) &= \hat{F}_D v^{(1)}\otimes\cdots\otimes \hat{F}_D v^{(k)},
\end{align}
so tables \eqref{eqn:t1} and \eqref{eqn:t2} will acquire an additional factor of $\hat{F}_D$ on each entry in the bottom row.
Similar reasoning as before gives the condition
\begin{align}
\delta_{[t]_k}\cdots\delta_1&=\varepsilon_k\cdots\varepsilon_{k-[t]_k+1},
\end{align}
which corresponds to the right shift by $k-[t]_k$, or $B^{-(k-[t]_k)}$.
\end{itemize}

For $D=2$, the analysis is essentially the same but simpler, since 
$\hat{F}_2^\dagger=\hat{F}_2$ and so $\hat{R}_D=\operatorname{Id}$. (Recall the period of $\Ba_k$ is only $2k$.)
\end{proof}

Using the above, we can show
\begin{lem}\label{lem:intsum}
Recall the definition of the tent-shaped function $\eta(t)=\eta_k(t)$ in \eqref{eqn:eta},
which gives the power of $B$ or $B^{-1}$ in $H(t)$ from \eqref{eqn:H4} and \eqref{eqn:H2}.
Then for a Lipschitz observable $a:\T^2\to\R$ and any $t\in\Z$,
\begin{align}\label{eqn:walsh-int}
\frac{1}{D^{\eta(t)}}\sum_{\substack{[\cs{\varepsilon}],[\cs{\delta}]\in\mathcal R_{k,\ell}:\\\langle\cs{\varepsilon}|\Ba_k^t|\cs{\delta}\rangle\ne0}}\fint_{[\cs{\varepsilon}]}a\fint_{[\cs{\delta}]}a
&= N\left[\int_{\T^2}a(\x)a(H(t)\x)\,d\x+O(\|a\|_\infty\|a\|_\Lip D^{-\min(\ell,k-\ell)})\right],
\end{align}
where $H(t)$ is defined in \eqref{eqn:H4} for $D\ge3$, or in \eqref{eqn:H2} for $D=2$.
As a consequence, if $\ell,k-\ell\ge 2\log_D k\to\infty$, where $\Gamma$ is the constant in \eqref{eqn:expmix}, then for $a\in \operatorname{Lip}(\T^2)$, 
\begin{multline}\label{eqn:inf-int}
\sum_{t=-q(k)/2}^{q(k)/2-1}\Tr(\Opkl(a)\hat B_k^t\Opkl(a)\hat B_k^{-t})= \sum_{t=-q(k)/2}^{q(k)/2-1}\sum_{\substack{[\cs{\varepsilon}],[\cs{\delta}]\in\mathcal R_{k,\ell}}}|\langle\cs{\delta}|\Ba_k^t|\cs{\varepsilon}\rangle|^2\fint_{[\cs{\varepsilon}]}a\fint_{[\cs{\delta}]}a\\
=N\Bigg[\sum_{t=-\infty}^\infty \left(\int_{\T^2}a(\x)a(B^t\x)\,d\x+\mathbf{1}_{D\ge3}\int_{\T^2}a(\x)a(B^tR\x)\,d\x\right)\\
+O\left(\|a\|_{\mathrm{Lip}}^2 k\left(D^{-\min(\ell,k-\ell)}+e^{-\Gamma k}\right)\right)\Bigg],
\end{multline}
with the $O(\cdots)$ remainder term is $o(1)$ as $k\to\infty$, and where $R$ is the map $(q,p)\mapsto(1-q,1-p)$.
\end{lem}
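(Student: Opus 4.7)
The strategy splits into two stages: (A) prove the single-$t$ formula \eqref{eqn:walsh-int}; (B) sum over a full period and extend to $\Z$ to obtain \eqref{eqn:inf-int}. The crux of (A) is the identity
\[
|\langle\cs{\varepsilon}|\hat B_k^t|\cs{\delta}\rangle|^2 \;=\; N\cdot m\bigl(H(t)[\cs{\delta}]\cap[\cs{\varepsilon}]\bigr),
\]
valid for every pair of $(k,\ell)$-coherent states. By Proposition~\ref{prop:walsh-powers}(ii) the LHS equals $D^{-\eta(t)}$ on its support and zero elsewhere, so the identity reduces to (a) equality of supports, which is exactly Lemma~\ref{lem:classical}, and (b) every nonempty intersection having measure $D^{-k-\eta(t)}$. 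Claim (b) is a uniformity statement: $H(t)$ is measure-preserving with $m(H(t)[\cs{\delta}])=D^{-k}$, while Proposition~\ref{prop:walsh-powers}(iii) combined with Lemma~\ref{lem:classical} forces exactly $D^{\eta(t)}$ nonempty intersections per column. Uniformity itself is read off from the symbolic tables \eqref{eqn:t1}--\eqref{eqn:t2} used to prove Lemma~\ref{lem:classical}: in each regime of $t$, the image $H(t)[\cs{\delta}]$ is a disjoint union of congruent $D$-adic strips, each aligned identically to the $(k,\ell)$-grid, so every nonempty intersection has the same measure.

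Given this identity, \eqref{eqn:walsh-int} follows by approximating $a$ by its rectangle averages. For $\x\in[\cs{\mu}]$ one has $|a(\x)-\fint_{[\cs{\mu}]}a|\le C\|a\|_\Lip D^{-\min(\ell,k-\ell)}$; partitioning $\int_{\T^2}a(\x)a(H(t)\x)\,d\x$ over the pair-grid $\{[\cs{\delta}]\cap H(t)^{-1}[\cs{\varepsilon}]\}$ then gives
\[
\int_{\T^2}a\cdot(a\circ H(t))\,d\x \;=\; D^{-k-\eta(t)}\!\!\sum_{\text{nonzero pairs}}\!\fint_{[\cs{\delta}]}a\,\fint_{[\cs{\varepsilon}]}a \;+\; O\bigl(\|a\|_\infty\|a\|_\Lip D^{-\min(\ell,k-\ell)}\bigr),
\]
using measure-preservation of $H(t)$ combined with the key identity for the main term and $\sum m(\cdots)=1$ to bound the error. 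Multiplying through by $N=D^k$ rearranges to \eqref{eqn:walsh-int}.

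For stage (B), the first equality in \eqref{eqn:inf-int} is the standard trace expansion, using the spectral decomposition of $\Opkl(a)$ in the coherent basis, orthonormality, and $\langle\cs{\delta}|\hat B_k^{-t}|\cs{\varepsilon}\rangle=\overline{\langle\cs{\varepsilon}|\hat B_k^t|\cs{\delta}\rangle}$. Applying \eqref{eqn:walsh-int} to each $t$ in the window yields the main term $N\sum_t\int a\cdot(a\circ H(t))$ with total single-step error $O(Nq(k)\|a\|_\Lip^2 D^{-\min(\ell,k-\ell)})=O(Nk\|a\|_\Lip^2 D^{-\min(\ell,k-\ell)})$. A direct reading of \eqref{eqn:H4} shows that as $t$ ranges over $\intbrr{-q(k)/2:q(k)/2-1}$ (equivalent by periodicity to a full period), $H(t)$ hits each $B^s$ with $s\in\intbrr{-k:k-1}$ exactly once and, when $D\ge3$, also each $B^sR$ for $s\in\intbrr{-k:k-1}$ exactly once; for $D=2$ only the $B^s$ family appears. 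Extending from $\sum_{|s|\le k-1}$ to $\sum_{s\in\Z}$ (meaningful in the centered setting $\int a=0$ in which the lemma will be applied) is controlled by exponential mixing \eqref{eqn:expmix}, which also handles the $B^sR$ correlations via the substitution $y=R\x$ that reduces $\int a\cdot a(B^sR\cdot)$ to $\int(a\circ R)\cdot(a\circ B^s)$ with $\|a\circ R\|_\Lip=\|a\|_\Lip$. This adds a tail error $O(\|a\|_\Lip^2 e^{-\Gamma k})$, and the hypothesis $\min(\ell,k-\ell)\ge2\log_D k$ ensures both error contributions are $o(1)$.

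The main obstacle I anticipate is claim (b) — the uniform intersection area. Although the identity is extremely natural and encodes the quantum-classical correspondence inherent in the Walsh quantization, rigorous verification requires careful case analysis over all four regimes defining $H(t)$. A subtlety worth flagging: for $t$ near the period midpoint, $H(t)[\cs{\delta}]$ is \emph{not} a single strip — for instance, $B^{-k}[\cs{\delta}]$ is a disjoint union of $D^\ell$ thin vertical strips — and one must track both the strip count and their grid-alignment to confirm uniformity. Once (b) is in hand, the remainder is routine algebra plus standard exponential-mixing tail estimates.
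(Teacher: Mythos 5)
Your proposal is correct and follows essentially the same route as the paper: the support equality from Lemma~\ref{lem:classical} combined with the uniform $D^{-\eta(t)}$ fractional overlap of $H(t)[\cs{\varepsilon}]$ with each intersected $(k,\ell)$-rectangle (your claim (b), which the paper establishes by the same symbolic-dynamics strip picture), followed by Lipschitz approximation of $a$ by rectangle averages and exponential mixing for the tails. Your packaging of the overlap statement as the identity $|\langle\cs{\varepsilon}|\hat B_k^t|\cs{\delta}\rangle|^2=N\,m(H(t)[\cs{\delta}]\cap[\cs{\varepsilon}])$ and your remark that the infinite sum in \eqref{eqn:inf-int} is only meaningful for centered $a$ are both accurate and consistent with how the paper actually uses the lemma.
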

\begin{proof}
Applying Lemma~\ref{lem:classical} gives
\begin{align}
\sum_{\substack{|\cs{\varepsilon}\rangle,|\cs{\delta}\rangle:\\\langle\cs{\varepsilon}|\Ba_k^t|\cs{\delta}\rangle\ne0}}\fint_{[\cs{\varepsilon}]}a\fint_{[\cs{\delta}]}a &=
\sum_{\substack{[\cs{\varepsilon}],[\cs{\delta}]:\\(H(t)[\cs{\varepsilon}])\cap[\cs{\delta}]\ne\emptyset}}\fint_{[\cs{\varepsilon}]}a\fint_{[\cs{\delta}]}a.
\end{align}
For the latter expression, we first fix $[\cs{\varepsilon}]$ and sum over all $(k,\ell)$-rectangles $[\cs{\delta}]$ that intersect $H(t)[\cs{\varepsilon}]$. 
By considering the symbolic classical dynamics, we can see the set $H(t)[\cs{\varepsilon}]$ intersects $D^{\eta(t)}$ distinct $(k,\ell)$-rectangles $[\cs{\delta}]$, overlapping a fractional $D^{-\eta(t)}$ part of each such $[\cs{\delta}]$.
This is because the exponent on the power of the classical baker's map $B$ in $H(t)$ is $\pm\eta(t)$, so $H(t)$ is either a left or right shift by $\eta(t)$ positions, with a possible digit swap from $R$. As illustrated below for a left shift, there are $D^{\eta(t)}$ different choices of $[\cs{\delta}]$, and $H(t)[\cs{\varepsilon}]$ intersects a fractional $D^{-\eta(t)}$ of the area of each $[\cs{\delta}]$:
\begin{equation}\label{eqn:intersection}
\tp{\begin{tikzpicture}[baseline=(current  bounding  box.center)]
\def\pw{1pt}
\node at (0,0) {\parbox{\pw}{$\bullet$\\$\bullet$}};
\node at (.25,0) {\parbox{\pw}{$\delta_\ell$\\$\varepsilon_{\ell-\eta(t)}$}};
\node at (1.25,0) {\parbox{\pw}{$\cdots$\\$\cdots$}};
\node at (1.75,0) {\parbox{\pw}{$\delta_{\eta(t)+1}$\\$\varepsilon_1$}};
\node at (2.75,0) {\parbox{\pw}{$\delta_{\eta(t)}\cdots\delta_1$\\}};
\node at (-1.4,0) {\parbox{\pw}{$\delta_{\ell+1}$\\$\varepsilon_{\ell-\eta(t)+1}$}};
\node at (-2,0) {\parbox{\pw}{$\cdots$\\$\cdots$}};
\node at (-3,0) {\parbox{\pw}{$\delta_k$\\$\varepsilon_{k-\eta(t)}$}};
\node at (-3.75,-.24) {$\varepsilon_k\cdots$};
\draw[xshift=-.1cm,dotted,line width=1pt] (-3,-.5)--(-3,.5)--(2.85,.5)--(2.85,-.5)--cycle;
\node at (-4.75,-.24) {$\cdots*\,*$};
\node at (-3.75,.2) {$\cdots*\,*$};
\node at (4.75,.2) {$*\,*\cdots$};
\node[right] at (2.75,-.24) {$*\quad*\cdots$};
\draw[decoration={brace,raise=3pt,aspect=.5,amplitude=4pt},decorate] (2.8,.5)--(4.1,.5);
\node[above right] at (3.1,.75) {$D^{\eta(t)}$ choices};
\draw[decoration={brace,raise=3pt,aspect=.5,amplitude=4pt},decorate] (-3.2,-.4)--(-4.2,-.4);
\node[below right] at (-4.25,-.6) {$D^{-\eta(t)}$ fractional overlap};
\end{tikzpicture}}
\end{equation}

On a single $[\cs{\delta}]$, the observable $a$ varies by at most $O(\|a\|_\Lip D^{-\min(\ell,k-\ell)})$; thus for any point $\x_0\in[\cs{\delta}]$ and any region $R\subseteq[\cs{\delta}]$, we have
\begin{align*}
\fint_{[\cs{\delta}]}a(\x)\,d\x&=a(\x_0)+O(\|a\|_\Lip D^{-\min(\ell,k-\ell)})\\
&=\fint_{R}a(\x)\,d\x+O(\|a\|_\Lip D^{-\min(\ell,k-\ell)}).\numberthis
\end{align*}
Applying this 
with the $D^{\eta(t)}$ non-empty regions $R_{[\cs{\delta}]}:=(H(t)[\cs{\varepsilon}])\cap[\cs{\delta}]\subseteq[\cs{\delta}]$, which each have area $|R_{[\cs{\delta}]}|=D^{-\eta(t)}N^{-1}$, and using that $B$, and hence $H(t)$, is measure-preserving and invertible, we see
\begin{align*}
\sum_{\substack{[\cs{\varepsilon}],[\cs{\delta}]:\\(H(t)[\cs{\varepsilon}])\cap[\cs{\delta}]\ne\emptyset}}&\fint_{[\cs{\varepsilon}]}a\fint_{[\cs{\delta}]}a \\
&=\sum_{[\cs{\varepsilon}]}\fint_{[\cs{\varepsilon}]}a(\x)\,d\x\, \sum_{[\cs{\delta}]}\left(\fint_{(H(t)[\cs{\varepsilon}])\cap [\cs{\delta}]}a(\x)\,d\x+O(\|a\|_\Lip D^{-\min(\ell,k-\ell)})\right)\\
&=\sum_{[\cs{\varepsilon}]}\fint_{[\cs{\varepsilon}]}a(\x)\,d\x
\left(D^{\eta(t)}\fint_{H(t)[\cs{\varepsilon}]}a(\x)\,d\x+O(\|a\|_{\Lip}D^{\eta(t)-\min(\ell,k-\ell)})\right)\\
&=D^{\eta(t)}\left(\sum_{[\cs{\varepsilon}]}\fint_{[\cs{\varepsilon}]}a(\x)\,d\x
\fint_{[\cs{\varepsilon}]}a(H(t)\x)\,d\x\right)+O(N\|a\|_\infty\|a\|_{\Lip}D^{\eta(t)-\min(\ell,k-\ell)}).\numberthis\label{eqn:mix1}
\end{align*}
For functions $a,b$ continuous on $[\cs{\varepsilon}]$,
\begin{align*}
\left|\fint_{[\cs{\varepsilon}]}a(\x)b(\x)\,d\x-\fint_{[\cs{\varepsilon}]}a(\x)\,d\x\fint_{[\cs{\varepsilon}]}b(\x)\,d\x\right|
&=\left|\fint_{[\cs{\varepsilon}]}\left(a(\x)-\fint_{[\cs{\varepsilon}]} a\right)b(\x)\,d\x\right|
\le 2\|a\|_{\Lip}D^{-\min(\ell,k-\ell)}\|b\|_\infty.
\end{align*}
Applying this with $b=a\circ H(t)$ to \eqref{eqn:mix1} yields 
\begin{align*}
\frac{1}{D^{\eta(t)}}\sum_{\substack{|\cs{\varepsilon}\rangle,|\cs{\delta}\rangle:\\\langle\cs{\varepsilon}|\Ba_k^t|\cs{\delta}\rangle\ne0}}\fint_{[\cs{\varepsilon}]}a\fint_{[\cs{\delta}]}a
&= N\int_{\T^2}a(\x)a(H(t)\x)\,d\x + O(N\|a\|_\infty\|a\|_\Lip D^{-\min(\ell,k-\ell)}),
\end{align*}
as desired.

For \eqref{eqn:inf-int}, note that $|\langle\cs{\delta}|\Ba_k^t|\cs{\varepsilon}\rangle|^2=D^{-\eta(t)}\mathbf1_{\langle\cs{\varepsilon}|\Ba_k^t|\cs{\delta}\rangle\ne0}$, and recall that $q(k)=4k$ if $D\ge3$, and $q(k)=2k$ if $D=2$.
By the exponential mixing property \eqref{eqn:expmix} with functions $a_0$ and $a_0\circ R$ (note $R$ is a classical symmetry which commutes with the classical baker's map $B$), we have
\begin{align}\label{eqn:remaining}
\begin{aligned}
&\sum_{|t|\ge k}\int_{\T^2}a_0(\x)a_0(B^t\x)\,d\x\le C_\Gamma e^{-\Gamma k}\|a_0\|_\mathrm{Lip}^2,\\
&\sum_{|t|\ge k}\int_{\T^2}a_0(\x)a_0(B^tR\x)\,d\x\le C_\Gamma e^{-\Gamma k}\|a_0\|_\mathrm{Lip}^2.
\end{aligned}
\end{align}
Therefore to obtain \eqref{eqn:inf-int}, it suffices to sum \eqref{eqn:walsh-int} over $t=-q(k)/2,\ldots,q(k)/2-1$,
since we can add in the remaining terms in \eqref{eqn:remaining} (skipping the bottom remainder if $D=2$) as part of the error term in \eqref{eqn:inf-int}.  
\end{proof}

\subsection{Averaged expected variance}\label{sec:tr-exp}

In this section, we apply the lemmas in the previous section to prove the remaining part of Proposition~\ref{prop:averages} on the expectation value of $\frac{1}{N}\sum_{j=1}^NF_j^2$.
First applying \eqref{eqn:variance} from Section~\ref{subsec:weingarten}, we have
\begin{align}\label{eqn:warm-start}
\E_\omega\left[\frac{1}{N}\sum_{j=1}^N|F_j|^2\right] &= 
\frac{q(k)}{N}\sum_{\jj=0}^{q(k)-1}\left[(\Tr \Opkl(a_0)P_\jj)^2+\Tr(\Opkl(a_0)P_\jj \Opkl(a_0)P_\jj)\right](1+o(1)),
\end{align}
where $P_\jj$ is the projection onto the eigenspace of $e^{2\pi i\jj/q(k)}$.
We will utilize the sum over $\jj$ to get cancellation that simplifies the expression compared to the individual $\E_\omega F_j$ or $\E_\omega|F_j|^2$ cases.
The main term will turn out to be the second trace, which is
\begin{align}
\sum_{\jj=0}^{q(k)-1}\Tr(\Opkl(a_0)P_\jj \Opkl(a_0)P_\jj)
&=\frac{1}{q(k)^2}\sum_{\jj=0}^{q(k)-1}\sum_{t,s=-q(k)/2}^{q(k)/2-1}\Tr(\Opkl(a_0)\Ba_k^t\Opkl(a_0)\Ba_k^s)e^{2\pi i\jj(t+s)/q(k)}.
\end{align}
Evaluating the sum over $\jj$, which only involves the term $e^{2\pi i\jj(t+s)/q(k)}$, gives complete cancellation unless $t+s=0\;\mathrm{mod}\;q(k)$, which corresponds to $t=-s$ for $|t|=|s|\le q(k)/2-1$, or $t=s=-q(k)/2$, in which case $B^s=B^{-q(k)/2}=B^{q(k)/2}=B^{-t}$, so we can still consider $t=-s$. We thus get
\begin{align}\label{eqn:tr-expand}
\sum_{\jj=0}^{q(k)-1}\Tr(\Opkl(a_0)P_\jj \Opkl(a_0)P_\jj)&=\frac{1}{q(k)}\sum_{t=-q(k)/2}^{q(k)/2-1}\Tr(\Opkl(a_0)\Ba_k^t\Opkl(a_0)\Ba_k^{-t}).
\end{align}
Equation~\eqref{eqn:inf-int} of Lemma~\ref{lem:intsum} then implies that
\begin{multline}\label{eqn:mean-var}
\frac{1}{N}\sum_{t=-q(k)/2}^{q(k)/2-1}\Tr(\Opkl(a_0)\Ba_k^t\Opkl(a_0)\Ba_k^{-t})\\
=\sum_{t=-\infty}^\infty \left(\int_{\T^2}a_0(\x)a_0(B^t\x)\,d\x+\mathbf{1}_{D\ge3}\int_{\T^2}a_0(\x)a_0(B^tR\x)\,d\x \right)+o(1).
\end{multline}

For the other trace term in \eqref{eqn:warm-start}, similar calculations and using that the $t=s=0$ term below does not contribute in the end since $\int a_0=0$, gives
\begin{align*}
\sum_{\jj=0}^{q(k)-1}(\Tr \Opkl(a_0)P_\jj)^2&=
\frac{1}{q(k)^2}\sum_{t,s=-q(k)/2}^{q(k)/2-1}\Tr(\Opkl(a_0)\Ba_k^t)\Tr(\Opkl(a_0)\Ba_k^s)\sum_{\jj=0}^{q(k)-1}e^{2\pi i\jj(t+s)/q(k)}\\
&=\frac{1}{q(k)}\sum_{\substack{t=-q(k)/2\\t\ne0}}^{q(k)/2-1}|\Tr(\Opkl(a_0)\Ba_k^t)|^2.
\numberthis\label{eqn:var-mean0}
\end{align*}
This finishes the proof of \eqref{eqn:avg-var}. \qed

Note that attempting to estimate the term $\frac{q(k)}{N}\sum_{t=-q(k)/2}^{q(k)/2-1}|\Tr(\Opkl(a_0)\Ba_k^t)|^2$ in \eqref{eqn:avg-var} using the triangle inequality combined with Proposition~\ref{prop:walsh-powers} gives a constant rather than the desired $o(1)$ bound for $D\ne4$:
\begin{align*}
\frac{1}{N}\sum_{\substack{t=-q(k)/2\\t\ne0}}^{q(k)/2-1}|\Tr(\Opkl(a_0)\Ba_k^t)|^2&=\frac{1}{N}\sum_{\substack{t=-q(k)/2\\t\ne0}}^{q(k)/2-1}\left|\sum_{[\cs{\varepsilon}]\in\mathcal R_{k,\ell}}\langle \cs{\varepsilon}|\Ba_k^t|\cs{\varepsilon}\rangle \fint_{[\cs{\varepsilon}]}a_0\right|^2\\
&\le \frac{1}{N}\sum_{\substack{t=-q(k)/2\\t\ne0}}^{q(k)/2-1}|D^{\eta(t)}D^{-\eta(t)/2}|^2\|a_0\|_\infty^2+\frac{1}{N}\underbrace{4^k\oneb_{\{D\ge4\text{ even}\}}\|a_0\|_\infty^2}_{\text{ from }t=-2k\text{ term}}\\
&\le\frac{C}{N}\sum_{t=1}^{k}D^{t}\|a_0\|_\infty^2\le C'\|a_0\|_\infty^2,
\end{align*}
where we used that $N=D^k$ and $q(k)/2\ge k$. 
In the next section, we will use the phases of $\Ba_k^t$ to obtain cancellations which will in particular imply such terms are actually $o(1)$ for $D\ne4$.

\section{Quantum time evolution and phase cancellation}\label{sec:phases}

In order to show \eqref{eqn:var} in Theorem~\ref{thm:mat}, we need to show convergence in probability of the quantum variance to its mean. 
To this end, we want to prove that for an {individual} $F_j$, that $\E_\omega F_j^2-(\E_\omega F_j)^2\to V(a)$. 
This will require a better understanding of the quantum phases of the time evolution $\Ba_k^t$ for all times $t$, in particular, up to and after the Ehrenfest time $t=k$, at least well enough to demonstrate some cancellation of terms.
Conveniently, we only need a fairly small, though important, amount of cancellation.
The main result of this section is Proposition~\ref{thm:trace-cancel} on bounding $\Tr(\Opkl(a)\Ba_k^{t_1})$ and $\Tr(\Opkl(a)\Ba_k^{t_1}\Opkl(a)\Ba_k^{t_2})$. We will prove this using several lemmas regarding cancellation due to phases, and then apply it in Section~\ref{sec:tr-ind} to prove convergence of the individual means and variances of the $F_j$.

To explain and motivate the cancellation due to phases, we first consider a naive bound for $\Tr(\Ba_k^t)$, estimated by applying Proposition~\ref{prop:walsh-powers}(i,ii), in particular that there are $D^{\eta(t)}$ nonzero diagonal entries $\langle\cs{\varepsilon}|\Ba_k^t|\cs{\varepsilon}\rangle$, each of magnitude $|\langle\cs{\varepsilon}|\Ba_k^t|\cs{\varepsilon}\rangle|=D^{-\eta(t)/2}$: 
\begin{align}\label{eqn:trace0}
|\Tr(\Ba_k^t)|&=\bigg|\sum_{[\cs{\varepsilon}]}\langle\cs{\varepsilon}|\Ba_k^t|\cs{\varepsilon}\rangle\bigg|
\le\sum_{[\cs{\varepsilon}]}|\langle\cs{\varepsilon}|\Ba_k^t|\cs{\varepsilon}\rangle|
\le D^{\eta(t)}D^{-\eta(t)/2}=D^{\eta(t)/2}.
\end{align}
This type of estimate is useful when $\eta(t)$ is small, but we will need a better estimate for $t$ near $\pm k$. 
We should expect that for $t$ near the Ehrenfest time $k$, that the phases of $\langle\cs{\varepsilon}|\Ba_k^t|\cs{\varepsilon}\rangle$ oscillate over quantum rectangles $[\cs{\varepsilon}]$, and so there should be lots of cancellation in sums like in $\Tr(\Ba_k^t)$. 
As a numerical picture, Fig.~\ref{fig:phases} plots the phases of each term $\langle\cs{\varepsilon}|\Ba_k^t|\cs{\varepsilon}\rangle$ in the case $D=3$ and $N=3^8=6561$ for $t=k$ and $t=k-1$. As pictured, the phases vary over nearby rectangles $[\cs{\varepsilon}]$, which will cause cancellation.

\begin{figure}[htb]
\includegraphics[height=2.5in]{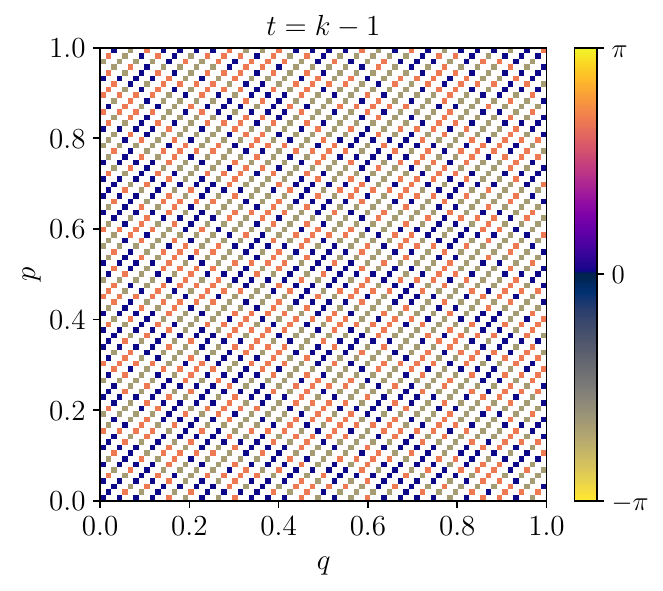}
\includegraphics[height=2.5in]{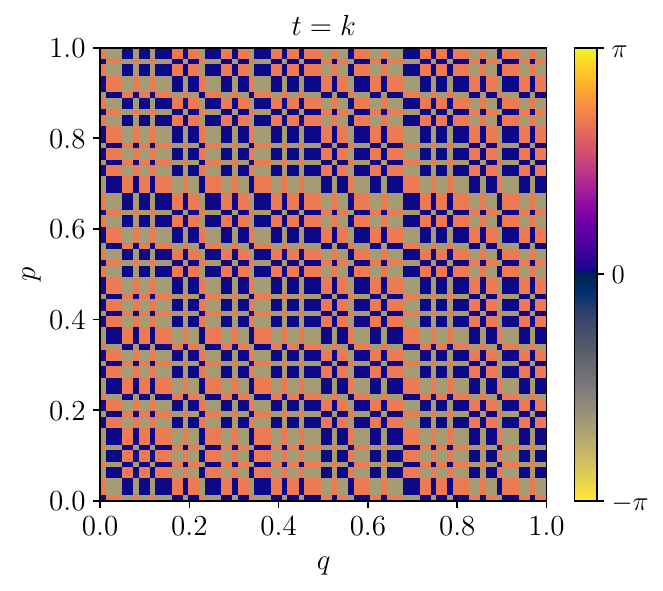}
\caption{Phases of $\langle\cs{\varepsilon}|\Ba_k^t|\cs{\varepsilon}\rangle$ for $D=3$, $k=8$, $\ell=4$, and $t=7,8$, plotted as a function of the quantum rectangle $[\cs{\varepsilon}]\subset\T^2$ in coordinates $q=\varepsilon_\ell\cdots\varepsilon_1$ and $p=\varepsilon_k\cdots\varepsilon_{\ell+1}$. Entries where the value is zero are shown in white. 
There are no large groupings of a single phase, suggesting that the phases vary enough over nearby rectangles to cause cancellations.} \label{fig:phases}
\end{figure}

Similarly, in order to estimate a more complicated quantity like $\Tr(\Opkl(a)\Ba_k^{t})=\sum_{[\cs{\varepsilon}]}\langle\cs{\varepsilon}|\Ba_k^{t}|\cs{\varepsilon}\rangle\fint_{[\cs{\varepsilon}]}a$, we will rely on cancellations of $\langle\cs{\varepsilon}|\Ba_k^t|\cs{\varepsilon}\rangle$ over small regions in the torus. We will partition $\T^2$ into many small squares $\{S\}$, and show that $\sum_{[\cs{\varepsilon}]\in S}\langle\cs{\varepsilon}|\Ba_k^t|\cs{\varepsilon}\rangle$, the sum over a single square $S$, is ``small'' due to phase cancellations (Lemmas~\ref{lem:trace-part}, \ref{lem:trace-part2}), as visualized in Fig.~\ref{fig:phases}. Then on a single square $S$, $a$ will be nearly constant, so 
\[\sum_{[\cs{\varepsilon}]\in S}\langle\cs{\varepsilon}|\Ba_k^t|\cs{\varepsilon}\rangle\fint_{[\cs{\varepsilon}]}a\approx \Bigg(\sum_{[\cs{\varepsilon}]\in S}\langle\cs{\varepsilon}|\Ba_k^t|\cs{\varepsilon}\rangle\Bigg)\fint_S a
\]
is also small, and summing over all squares $S$ will show $\Tr(\Opkl(a)\Ba_k^t)$ is small (Lemma~\ref{lem:a-cancellation}).

We start with the following lemma on just $\Tr(\Ba_k^t)$, and then will use its proof ideas to prove the main result of this section, Proposition~\ref{thm:trace-cancel}, on traces with $\Opkl(a)$.
\begin{lem}\label{lem:trace}
For any $t\in\Z$, 
\begin{align}\label{eqn:trace-gcd}
|\Tr(\Ba_k^t)|&\le D^{\gcd([t]_k,k)}.
\end{align}
For $t=\pm k$ (or more generally, $t=k$ or $3k\;\mathrm{mod}\;q(k)$), we have the precise result
\begin{align}\label{eqn:trace-k}
|\Tr(\Ba_k^{\pm k})|&=\begin{cases} 1,&D=1\;\mathrm{or }\;3\;\mathrm{mod}\;4\\
0,&D=2\;\mathrm{mod}\;4\\
2^{k/2},&D=0\;\mathrm{mod}\;4
\end{cases}.
\end{align}
For $t=2k$, we have
\begin{align}\label{eqn:trace-2k}
\Tr(\Ba_k^{2k})&=\begin{cases}1,&D\text{ odd}\\
2^k,&D\text{ even}
\end{cases}.
\end{align}
\end{lem}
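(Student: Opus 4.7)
The plan is to derive an exact product formula for $\Tr(\Ba_k^t)$ by exploiting the orbit decomposition of the coordinate shift induced by $\Ba_k^t$, and then read off \eqref{eqn:trace-gcd}, \eqref{eqn:trace-k}, \eqref{eqn:trace-2k} from that identity.

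First I would write $t=\alpha k+s$ with $s=[t]_k\in\intbrr{0:k-1}$ and $\alpha\in\{0,1,2,3\}$ (reducing mod $q(k)$ when necessary). Iterating \eqref{eqn:walsh-action} and using $\Ba_k^{k}=(\hat F_D^\dagger)^{\otimes k}$, one checks that in the coordinate basis
\[
\Ba_k^t|e_1\cdots e_k\rangle=\bigotimes_{p=1}^k (\hat F_D^\dagger)^{\alpha_p}|e_{\pi(p)}\rangle,
\]
where $\pi:p\mapsto((p+s-1)\bmod k)+1$ is the cyclic left shift by $s$ on $\intbrr{1:k}$, and $\alpha_p:=\alpha+\mathbf{1}[p>k-s]$ records the extra power of $\hat F_D^\dagger$ on the wrapping positions. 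The diagonal matrix elements $\prod_{p=1}^k\langle e_p|(\hat F_D^\dagger)^{\alpha_p}|e_{\pi(p)}\rangle$ couple coordinates only within a single orbit of $\pi$, so $\Tr(\Ba_k^t)$ factorizes as a product over those orbits.

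Let $d:=\gcd(s,k)=\gcd([t]_k,k)$, so that $\pi$ has exactly $d$ orbits, each of length $\ell:=k/d$. On one orbit $O=\{p_1,\ldots,p_\ell\}$, summing over the coordinates attached to $O$ gives
\[
\sum_{e_{p_1},\ldots,e_{p_\ell}}\prod_{j=1}^\ell\langle e_{p_j}|(\hat F_D^\dagger)^{\alpha_{p_j}}|e_{p_{j+1}}\rangle=\Tr\!\left((\hat F_D^\dagger)^{\beta_O}\right),\qquad \beta_O:=\sum_{j=1}^\ell\alpha_{p_j},
\]
because the factors commute. The key combinatorial observation is that $\beta_O$ is independent of $O$: a single traversal of $O$ produces total raw displacement $\ell s$, which is a multiple of $k$ since $O$ closes up, so the number of wraps per orbit is $\ell s/k=s/d$, giving $\beta_O=\ell\alpha+s/d=t/d$. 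Multiplying over the $d$ orbits produces the clean identity
\[
\Tr(\Ba_k^t)=\Tr\!\left((\hat F_D^\dagger)^{t/d}\right)^{d},\qquad d=\gcd([t]_k,k).
\]

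From this identity the three claims follow. The trivial bound $|\Tr((\hat F_D^\dagger)^m)|\le D$ gives \eqref{eqn:trace-gcd}. For $t\equiv k$ or $3k\pmod{q(k)}$ we have $d=k$ and $t/d\in\{\pm1,\pm3\}$, so $(\hat F_D^\dagger)^{t/d}\in\{\hat F_D,\hat F_D^\dagger\}$; its trace is $D^{-1/2}$ times the classical quadratic Gauss sum $G(D)=\sum_{n\in\Z/D\Z}e^{2\pi in^2/D}$, whose magnitudes $|G(D)|\in\{\sqrt{2D},\sqrt{D},0,\sqrt{D}\}$ for $D\equiv 0,1,2,3\pmod 4$ yield \eqref{eqn:trace-k} after taking the $k$-th power. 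For $t=2k$, $d=k$ and $t/d=2$, so $(\hat F_D^\dagger)^2=\hat R_D:|x\rangle\mapsto|-x\bmod D\rangle$ has trace $\#\{x\in\Z/D\Z:2x\equiv 0\}$, equal to $1$ for $D$ odd and $2$ for $D$ even, giving \eqref{eqn:trace-2k}. The argument applies uniformly to $D=2$ and $D\ge 3$. The main obstacle is verifying cleanly that $\beta_O$ is orbit-independent, as this is what collapses the full sum into a $d$-th power; once in hand, the rest reduces to standard Gauss sum evaluations.
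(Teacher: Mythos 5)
Your proof is correct, and the orbit decomposition you use is the same combinatorial skeleton as the paper's: the paper also works in the position basis, identifies the cycles of the shift $p\mapsto p+[t]_k\bmod k$ (drawn there as a two-row graph $G_{k,t}$), counts $\gcd([t]_k,k)$ of them, and bounds each cycle's contribution by $D$. The genuine difference is that the paper deliberately does \emph{not} track the power $\gamma$ of $\hat F_D^\dagger$ accumulated along a cycle (``so that we do not need to determine the precise values of $\gamma$''), settling for the bound $|\sum_{\varepsilon}\langle\varepsilon|\hat F_D^\gamma|\varepsilon\rangle|\le D$ and then handling $t=\pm k$ and $t=2k$ by separate direct computations where the tensor structure is trivial. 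You instead prove that the accumulated power $\beta_O=t/\gcd([t]_k,k)$ is the same on every orbit — your wrap-counting argument $\#\{j:p_j>k-s\}=\ell s/k=s/d$ is sound — which yields the exact identity $\Tr(\Ba_k^t)=\Tr\bigl((\hat F_D^\dagger)^{t/d}\bigr)^{d}$ and makes \eqref{eqn:trace-gcd}, \eqref{eqn:trace-k}, and \eqref{eqn:trace-2k} all corollaries of one formula. This is a clean strengthening: it unifies the three cases and would in principle sharpen \eqref{eqn:trace-gcd} for other values of $t$, at the modest cost of the orbit-independence verification, which the paper's coarser bound lets it skip. Both routes rely on the same Gauss-sum evaluation \eqref{eqn:gauss} and the fixed-point count of $\hat R_D$ at the end.
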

\begin{proof}
We take the trace in the position basis for notational convenience, though one could just as easily use a $(k,\ell)$-coherent state basis, which we will do later in this section. The position basis elements will be denoted by $|\varepsilon\rangle\equiv|\varepsilon_k\cdots\varepsilon_1\rangle\equiv|\varepsilon_k\rangle\otimes\cdots\otimes|\varepsilon_1\rangle$ for $\varepsilon_j\in\intbrr{0:D-1}$, so we have
\begin{align}\label{eqn:trace-1}
\Tr(\Ba_k^t)&=\sum_{\varepsilon_k=0}^{D-1}\cdots\sum_{\varepsilon_1=0}^{D-1}\langle \varepsilon_k\cdots\varepsilon_1|\Ba_k^t|\varepsilon_k\cdots\varepsilon_1\rangle.
\end{align}
From the definition \eqref{eqn:walsh-action} of $\Ba_k$, we see the effect of $\Ba_k^t$ on tensor product states $|\varepsilon\rangle$ is always a leftwards cycle by $[t]_k=t\;\mathrm{mod}\;k$, along with some power of discrete Fourier transform matrices $\hat{F}_D^\dagger$ applied to some of the vectors. (Note for $t<0$, it is a rightwards cycle by $[|t|]_k=[-t]_k=[k-t]_k$, which is equivalent to a left shift by $[t]_k$.)
Due to the tensor product structure we can represent the terms in the inner product $\langle \varepsilon_k\cdots\varepsilon_1|\Ba_k^t|\varepsilon_k\cdots\varepsilon_1\rangle$ graphically, similarly as in the proof of Lemma~\ref{lem:classical}, as
\begin{equation}\label{eqn:tikz1}
\tp{\begin{tikzpicture}[baseline=(current  bounding  box.center),ampersand replacement=\&]
\node[matrix,right] at (0,0)
{
	\node {$|\varepsilon\rangle$:}; \&\node {$\varepsilon_k$}; \&\node {$\varepsilon_{k-1}$}; \&\node {$\cdots$}; \&\node {$\varepsilon_2$}; \& \node {$\varepsilon_1$}; \\
	\node{};\&\\
	\node {$\Ba_k^t|\varepsilon\rangle$:}; \&\node {$\fmat\varepsilon_{k-[t]_k}$}; \&\node {$\fmat\varepsilon_{k-[t]_k-1}$}; \&\node {$\cdots$}; \&\node {$\fmat\varepsilon_{2-[t]_k+k}$}; \& \node {$\fmat\varepsilon_{1-[t]_k+k}$}; \\
};
\node at (9.8,0) {,};
\foreach \pos in {1.5, 3.2, 6, 8}{
\draw[line width=2] (\pos,-.15)--++(0,.4);
}
\end{tikzpicture}}
\end{equation}
where $\fmat$ represents an operator in $\{\operatorname{Id},\hat{F}_D^\dagger,\hat{R}_D,\hat{F}_D\}$, which may differ for different $\varepsilon_j$, and where the indices on the $\varepsilon_j$ are taken modulo $k$. The thick vertical lines between the rows indicate an inner product.

Each term $\varepsilon_j$, $j=1,\ldots,k$, appears exactly twice in \eqref{eqn:trace-1} or \eqref{eqn:tikz1}, once on the left side of an inner product (top row of \eqref{eqn:tikz1}), and once on the right side (bottom row of \eqref{eqn:tikz1}).
Using that $\sum_{\varepsilon_j=0}^{D-1}|\varepsilon_j\rangle\langle\varepsilon_j|=\operatorname{Id}$, we sum over the variables in \eqref{eqn:tikz1} by removing or ``contracting'' them whenever they appear as $|\varepsilon_j\rangle\langle\varepsilon_j|$.
Graphically, we express this by drawing connections between $\varepsilon_j$ in the top row of \eqref{eqn:tikz1} and the same $\varepsilon_j$ in the bottom row of \eqref{eqn:tikz1}.
This generates a graph $G_{k,t}$ with a number of cycles.
A specific example where $k=6$ and $t=3$, with different cycles shown in different colors, is
\begin{equation}\label{eqn:tikz2}
\tp{\begin{tikzpicture}[baseline=(current  bounding  box.center),ampersand replacement=\&]
\node[matrix,right] at (0,0)
{
	\node {$|\varepsilon\rangle$:}; \&\node {$\varepsilon_6$}; \&\node {$\varepsilon_{5}$}; \&\node {$\varepsilon_4$}; \&\node {\;\;\;$\varepsilon_3$};\&\node {\;\;$\varepsilon_2$}; \& \node {\;\;$\varepsilon_1$}; \\
	\node{};\&\\
	\node{};\&\\
	\node {$\Ba_k^t|\varepsilon\rangle$:}; \&\node {$\varepsilon_{3}$}; \&\node {$\varepsilon_{2}$}; \&\node {$\varepsilon_{1}$}; \&\node{$\hat{F}_D^\dagger\varepsilon_6$}; \&\node {$\hat{F}_D^\dagger\varepsilon_{5}$}; \& \node {$\hat{F}_D^\dagger\varepsilon_{4}$}; \\
};
\def\topc{.4};
\def\botc{-.25};
\def\lw{.8};
\draw[color=blue,line width=\lw] (5.5,\topc)--(2.6,\botc)--(2.6,\topc)--(5.5,\botc)--cycle;
\draw[color=orange,line width=\lw]  (4.5,\topc)--(2.1,\botc)--(2.1,\topc)--(4.5,\botc)--cycle;
\draw[color=green,line width=\lw] (3.5,\topc)--(1.6,\botc)--(1.6,\topc)--(3.5,\botc)--cycle;
\def\tlw{3};
\draw[color=blue, line width=\tlw] (5.5,\topc)--(5.5,\botc);
\draw[color=blue, line width=\tlw] (2.6,\topc)--(2.6,\botc);
\draw[color=orange, line width=\tlw] (4.5,\topc)--(4.5,\botc);
\draw[color=orange, line width=\tlw] (2.1,\topc)--(2.1,\botc);
\draw[color=green, line width=\tlw] (3.5,\topc)--(3.5,\botc);
\draw[color=green, line width=\tlw] (1.6,\topc)--(1.6,\botc);
\end{tikzpicture}}
\end{equation}
For each distinct cycle, summing over all except one of the variables leaves a term of the form
\begin{align}\label{eqn:allbutone}
\sum_{\varepsilon_j=0}^{D-1}\langle\varepsilon_j|\hat{F}_D^\gamma|\varepsilon_j\rangle,
\end{align}
where $\gamma\in\{0,1,2,3\}$, and $\hat{F}_D^\gamma$ represents all the terms $\fmat$ from \eqref{eqn:tikz1} that are picked up along the cycle. The sum \eqref{eqn:allbutone} has magnitude at most $D$, corresponding to the ``worst case'' $\gamma=0$.
Using this bound for each cycle, so that we do not need to determine the precise values of $\gamma$, yields
\begin{align}\label{eqn:trace-cycles}
|\Tr(\Ba_k^t)|&\le D^{\#\text{cycles in $G_{k,t}$}}.
\end{align}
Because the action of $\Ba_k^t$ is a left cycle by $[t]_k$, the length of each cycle is twice (due to the 2 rows of $G_{k,t}$) the order of $[t]_k$ in the additive group $\Z/k\Z$, i.e. twice the smallest $s>0$ so that $s[t]_k\in k\Z$. Since the order of $[t]_k$ is $s=k/\operatorname{gcd}(k,[t]_k)$, the total number of cycles is
\begin{align}
\frac{2k}{2s}&=\gcd([t]_k,k),
\end{align}
giving \eqref{eqn:trace-gcd}.

For the case $t=\pm k$ in \eqref{eqn:trace-k}, we note that
\begin{align}
\langle\varepsilon|\Ba_k^{\pm k}|\varepsilon\rangle&=\prod_{j=1}^k\langle\varepsilon_j|\hat{F}_D^\pm|\varepsilon_j\rangle,
\end{align}
where $\hat{F}_D^\pm$ denotes $\hat{F}_D^\dagger$ if $+$ and $\hat{F}_D$ if $-$.
In this case, we evaluate directly,
\begin{align*}
\left|\sum_{\varepsilon_k=0}^{D-1}\cdots\sum_{\varepsilon_1=0}^{D-1}\langle{\varepsilon}|\Ba_k^{\pm k}|{\varepsilon}\rangle\right|&=\left|\frac{1}{D^{k/2}}\sum_{\varepsilon_k=0}^{D-1}\cdots\sum_{\varepsilon_1=0}^{D-1}\prod_{j=1}^De^{\mp2\pi i\varepsilon_j^2/D}\right|,
\end{align*}
and use the Gauss quadratic sum formula
\begin{align}\label{eqn:gauss}
\sum_{m=0}^{D-1}e^{2\pi im^2/D}&=\begin{cases}\sqrt{D},&D=1\;\mathrm{mod}\;4\\
0,&D=2\;\mathrm{mod}\;4\\
i\sqrt{D},&D=3\;\mathrm{mod}\;4\\
(1+i)\sqrt{D},&D=0\;\mathrm{mod}\;4
\end{cases},
\end{align}
to obtain \eqref{eqn:trace-k}.

For the case $t=2k$ in \eqref{eqn:trace-2k}, we have
\begin{align}
\langle\varepsilon|\Ba_k^{2k}|\varepsilon\rangle&=\prod_{j=1}^k\langle\varepsilon_j|\hat R_D|\varepsilon_j\rangle,
\end{align}
where $\hat R_D=(\hat F_D^\dagger)^2:|x\rangle\mapsto|-x\;\mathrm{mod}\;D\rangle$. If $D$ is even, there are exactly two values for $\varepsilon_j\in\{0,\ldots,D-1\}$ such that $\langle\varepsilon_j|\hat R_D|\varepsilon_j\rangle\ne0$; these are $\varepsilon_j=0$ and $\varepsilon_j=D/2$. (This holds trivially for $D=2$ for which $\hat R_D=I_2$). This gives $\Tr(\Ba_k^{2k})=2^k$. If $D$ is odd, there is only one such $\varepsilon_j$, which is $\varepsilon_j=0$, giving $\Tr(\Ba_k^{2k})=1$.
\end{proof}

Similar ideas will be used to prove the main result of this section:
\begin{prop}[trace bounds]\label{thm:trace-cancel}
Let $a:\T^2\to\C$ be a Lipschitz observable, and let $\Ba_k$ be the quantization of the $D$-baker's map.
Then for $t_1,t_2\in\Z$ and $r\in\intbrr{0:\min(\ell,k-\ell)}$,
\begin{align}\label{eqn:tr-bound1}
|\Tr(\Opkl(a)\Ba_k^{t_1})|\le \begin{cases}
\|a\|_\infty D^{2r}G_{D,k}(t_1)+\|a\|_\mathrm{Lip}\sqrt{2}D^{k/2-r},&t_1\not\in q(k)\Z\\
N\left|\int_{\T^2}a(\x)\,d\x\right|,&t_1\in q(k)\Z
\end{cases},
\\
|\Tr(\Opkl(a)\Ba_k^{t_1}\Opkl(a)\Ba_k^{t_2})|\le \|a\|_\infty^2 D^{4r}G_{D,k}(t_1+t_2)+2\sqrt{2}\|a\|_\mathrm{Lip}\|a\|_\infty D^{k-r},\label{eqn:tr-bound2}
\end{align}
where 
\begin{align}\label{eqn:tr-boundp1}
G_{D,k}(t):=D^{\operatorname{gcd}([t]_k,k)},
\end{align}
for $[t]_k=t\;\mathrm{mod}\;k$.
We also have better estimates for certain $t_1,t_2$. For \eqref{eqn:tr-bound1}, we can take
\begin{align}
G_{D,k}(\pm k)=D^{k/4}. 
\end{align}
For \eqref{eqn:tr-bound2} and $[t_1+t_2]_k=0$ with $t_1+t_2\ne0\;\mathrm{mod}\;q(k)$, we can take
\begin{align}\label{eqn:tr-boundp2}
G_{D,k}(t_1+t_2)&=\begin{cases}1,&D=2\text{ or $D$ odd}\\2^k,&D\ge4\text{ even}\end{cases}.
\end{align}
\end{prop}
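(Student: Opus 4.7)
The strategy is to reduce both trace bounds to phase-cancellation estimates for partial sums, by partitioning $\T^2$ into small squares on which the Lipschitz observable $a$ is nearly constant. Given $r \in \intbrr{0:\min(\ell,k-\ell)}$, I would partition $\T^2$ into $D^{2r}$ congruent squares $\{S\}$ of side $D^{-r}$, each obtained by fixing the leading $r$ base-$D$ digits of both $q$ and $p$. Each such $S$ contains exactly $D^{k-2r}$ quantum rectangles $[\cs{\varepsilon}]$, and Lipschitz continuity gives
\begin{align*}
\bigg|\fint_{[\cs{\varepsilon}]} a - \fint_S a\bigg| \le \sqrt{2}\,\|a\|_\Lip D^{-r}\qquad\text{for }[\cs{\varepsilon}] \subset S,
\end{align*}
producing the $\sqrt{2}$ prefactor in the bounds. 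Each trace then splits into a main piece, in which $\fint_{[\cs{\varepsilon}]} a$ is replaced by $\fint_S a$ and the $\Ba_k^{t}$-matrix elements are summed over $[\cs{\varepsilon}] \subset S$, and a Lipschitz remainder controlled by elementary total-variation bounds on $\Ba_k^t$.

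For \eqref{eqn:tr-bound1}, I would expand $\Tr(\Opkl(a)\Ba_k^{t_1}) = \sum_{[\cs{\varepsilon}]}\langle\cs{\varepsilon}|\Ba_k^{t_1}|\cs{\varepsilon}\rangle\fint_{[\cs{\varepsilon}]}a$ and apply a phase-cancellation lemma (to be proven later in this section) asserting $\bigl|\sum_{[\cs{\varepsilon}]\subset S}\langle\cs{\varepsilon}|\Ba_k^{t_1}|\cs{\varepsilon}\rangle\bigr| \le G_{D,k}(t_1)$ uniformly in $S$; summed over the $D^{2r}$ squares this gives the $\|a\|_\infty D^{2r} G_{D,k}(t_1)$ term. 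The Lipschitz remainder is at most $\sqrt 2\,\|a\|_\Lip D^{-r}\sum_{[\cs{\varepsilon}]}|\langle\cs{\varepsilon}|\Ba_k^{t_1}|\cs{\varepsilon}\rangle| \le \sqrt{2}\,\|a\|_\Lip D^{-r}\cdot D^{\eta(t_1)/2} \le \sqrt{2}\,\|a\|_\Lip D^{k/2-r}$ by Proposition~\ref{prop:walsh-powers}(i,ii), giving the second term. The case $t_1\in q(k)\Z$ follows immediately from $\Ba_k^{t_1}=\mathrm{Id}$ and \eqref{eqn:trace-walsh}. For \eqref{eqn:tr-bound2} I would proceed analogously with a double square decomposition, using
$\fint_{[\cs{\varepsilon}]}a\fint_{[\cs{\delta}]}a = \fint_{S_1}a\fint_{S_2}a + (\text{Lipschitz cross and quadratic terms})$,
and invoking a companion phase-cancellation lemma that bounds the $(S_1,S_2)$-partial sum of $\langle\cs{\varepsilon}|\Ba_k^{t_1}|\cs{\delta}\rangle\langle\cs{\delta}|\Ba_k^{t_2}|\cs{\varepsilon}\rangle$ by $G_{D,k}(t_1+t_2)$; summing over the $D^{4r}$ pairs yields the $\|a\|_\infty^2 D^{4r} G_{D,k}(t_1+t_2)$ term. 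The Lipschitz remainders in this case are controlled via the elementary bound $\sum_{[\cs{\varepsilon}],[\cs{\delta}]}|\langle\cs{\varepsilon}|\Ba_k^{t_1}|\cs{\delta}\rangle\langle\cs{\delta}|\Ba_k^{t_2}|\cs{\varepsilon}\rangle|\le D^k$, which follows from Cauchy--Schwarz combined with unitarity of $\Ba_k^{t_i}$.

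The main obstacle is proving the two partial-sum phase-cancellation lemmas, together with the sharpened Ehrenfest-time bounds $G_{D,k}(\pm k)=D^{k/4}$ and \eqref{eqn:tr-boundp2}. I would attack these by generalizing the cycle-counting argument from the proof of Lemma~\ref{lem:trace}: expanding the tensor-product inner product coordinate by coordinate as in \eqref{eqn:tikz1}--\eqref{eqn:tikz2}, summing over the free coordinates (those \emph{not} frozen by the square-membership condition), and bounding each resulting cycle contribution by $D$. The extra frozen digits in a single square correspond exactly to a prefactor of $D^{2r}$ in the main bound, while the surviving cycles continue to recombine into $\gcd([t_1]_k,k)$ groups, giving $G_{D,k}(t_1)$. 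The sharper bound at $t_1=\pm k$ replaces the trivial per-cycle bound by the Gauss quadratic sum estimate \eqref{eqn:gauss} of magnitude $\sqrt D$, so that $k$ such factors collectively produce $D^{k/2}$, which when combined with the prefactor $D^{-k/2}$ yields the stated $D^{k/4}$ after accounting for the square-fixed digits. The two-trace improvement \eqref{eqn:tr-boundp2} requires separating the sub-cases $t_1+t_2\equiv 2k \pmod{q(k)}$ (handled by direct evaluation in the spirit of \eqref{eqn:trace-2k}, whence the bifurcation between $D$ even and $D$ odd) and $t_1+t_2\equiv \pm k\pmod{q(k)}$ (where Gauss-sum cancellations again apply).
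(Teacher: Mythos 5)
Your proposal is correct and follows essentially the same route as the paper: a decomposition of $\T^2$ into $D^{2r}$ dyadic-type squares on which $a$ is nearly constant (the paper's Lemma~\ref{lem:a-cancellation}), combined with partial-trace phase-cancellation bounds over each square proved by the frozen-coordinate cycle-counting argument (Lemmas~\ref{lem:trace-part} and \ref{lem:trace-part2}), with Gauss sums supplying the sharpened bounds at $t_1=\pm k$ and $[t_1+t_2]_k=0$. The only quibble is bookkeeping in your $t_1=\pm k$ accounting: each free coordinate contributes a \emph{normalized} Gauss sum of magnitude $\sqrt{2}$ when $D\equiv0\bmod4$ (and $\le 1$ otherwise), so the partial sum is $\le 2^{k/2}\le D^{k/4}$ since $D\ge4$ there, rather than arising from a $D^{k/2}\cdot D^{-k/2}$ cancellation as written.
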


\subsection{Proof of Proposition~\ref{thm:trace-cancel}}
The rest of this section is devoted to the proof of Proposition~\ref{thm:trace-cancel}.
To prove the theorem, we will use similar ideas as in Lemma~\ref{lem:trace}, but we will need estimates for when we only sum over coherent states corresponding to certain regions of $\T^2$. It will be enough for our purposes to show that the bound in Lemma~\ref{lem:trace} for the trace also holds for these ``partial'' traces in coherent state bases.
We state Lemma~\ref{lem:trace-part} below in more generality than needed; we only need the cases $p=1,2$ for Proposition~\ref{thm:trace-cancel}, but we give a proof for $p\in\N$ since $p>2$ is the same method as for $p=2$.

\begin{lem}[partial traces over squares] 
\label{lem:trace-part}
For any $p\in\N$ and $r\in\intbrr{0:\min(\ell,k-\ell)}$, let $S_1,\ldots,S_p\subseteq\T^2$ be $D^{-r}\times D^{-r}$ squares with edges in the grid $D^{-r}\Z^2$. Let $(k,\ell)$-coherent states be denoted by the notation $|\csn{\varepsilon^{(j)}}\rangle$ (with the prime dropped for notational convenience, i.e. $|\csn{\varepsilon^{(j)}}\rangle$ would be $|\varepsilon^{(j)}{}'\cdot\varepsilon^{(j)}\rangle$ in the usual notation). Then for any $t_1,\ldots,t_p\in\Z$,
\begin{align}\label{eqn:trace-partial-p}
\Bigg|\sum_{[\csn{\varepsilon^{(1)}}]\in S_1}\cdots\sum_{[\csn{\varepsilon^{(p)}}]\in S_p}\langle\csn{\varepsilon^{(1)}}|\Ba_k^{t_1}|\csn{\varepsilon^{(2)}}\rangle
\langle\csn{\varepsilon^{(2)}}|\Ba_k^{t_2}|\csn{\varepsilon^{(3)}}\rangle
\cdots
\langle\csn{\varepsilon^{(p)}}|\Ba_k^{t_p}|\csn{\varepsilon^{(1)}}\rangle
\Bigg|
\le D^{\gcd([t_1+\cdots+t_p]_k,k)}.
\end{align}
\end{lem}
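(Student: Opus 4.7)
My plan is to adapt the contraction-graph argument from the proof of Lemma~\ref{lem:trace} to the cyclic product of $p$ matrix elements with projectors onto squares. First I would expand each factor $\langle\csn{\varepsilon^{(j)}}|\Ba_k^{t_j}|\csn{\varepsilon^{(j+1)}}\rangle$ in the tensor-product structure, exactly as is done via tables \eqref{eqn:t1}--\eqref{eqn:t2} in the proof of Lemma~\ref{lem:classical}. This writes each factor as a product of $k$ single-digit inner products of the form $\langle\varepsilon^{(j)}_b|H^{(b)}_j|\varepsilon^{(j+1)}_{(b-t_j)\bmod k}\rangle$, where each $H^{(b)}_j$ is one of the unitary matrices in $\{\operatorname{Id},\hat F_D,\hat F_D^\dagger,\hat R_D\}$ on $\C^D$, the particular one depending on whether the tensor index wraps around under the cyclic shift and, if $t_j\ge k$, on the regime of $t_j$ modulo $q(k)$. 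The essential structural features are: (i) the pairing on digit indices is always the cyclic shift $b\mapsto(b-t_j)\bmod k$; (ii) each digit $\varepsilon^{(j)}_b$ appears in exactly two single-digit inner products in the cyclic product, once as a bra and once as a ket.

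Next I would sum over all \emph{free} digits $\varepsilon^{(j)}_b$ --- those not held fixed by the constraint $\varepsilon^{(j)}\in S_j$ --- using the completeness relation $\sum_{x=0}^{D-1}|x\rangle\langle x|=\operatorname{Id}$ on $\C^D$. This contracts edges in a graph $G$ whose $pk$ vertices are the digits $\varepsilon^{(j)}_b$ and whose edges come from the single-digit pairings above. Because the composition of the per-factor permutations is cyclic shift by $-(t_1+\cdots+t_p)\bmod k$ on digit indices, the unrestricted version of $G$ consists of exactly $\gcd([t_1+\cdots+t_p]_k,k)$ disjoint cycles. After contracting free digits, each cycle of $G$ that contains no fixed digit reduces to a scalar $\Tr(H_1H_2\cdots H_m)$ for unitary $D\times D$ matrices, bounded by $D$ in absolute value, while each cycle containing at least one fixed digit is cut into one or more \emph{arcs} whose endpoints are fixed digit values; each such arc reduces to a single matrix element $\langle y|H_1\cdots H_m|z\rangle$ of a unitary operator, with absolute value at most $1$. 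Writing $c$ for the number of closed cycles in the restricted graph, the full sum is bounded by $D^c$, and since imposing the constraints can only break cycles (never merge them), $c\le\gcd([t_1+\cdots+t_p]_k,k)$, which yields the claim.

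The main bookkeeping burden is to verify, uniformly across the four regimes of $t_j$ modulo $q(k)$ identified in Lemma~\ref{lem:classical}, that the digit-index pairing really is always the same cyclic shift $b\mapsto(b-t_j)\bmod k$ and that the only effect of the Fourier insertions picked up as $t_j$ crosses multiples of $k$ is to alter the per-edge unitary $H^{(b)}_j$ (while keeping it in $\{\operatorname{Id},\hat F_D,\hat F_D^\dagger,\hat R_D\}$). This is essentially already encoded in tables \eqref{eqn:t1}--\eqref{eqn:t2} and the case analysis in the proof of Lemma~\ref{lem:classical}, but it needs to be reformulated in a regime-independent way. A minor additional check is that the restriction $\varepsilon^{(j)}\in S_j$ fixes exactly the $2r$ digit positions $\varepsilon^{(j)}_{\ell-r+1},\ldots,\varepsilon^{(j)}_\ell$ and $\varepsilon^{(j)}_{\ell+1},\ldots,\varepsilon^{(j)}_{\ell+r}$ to prescribed values, which follows from the description \eqref{eqn:rectangle} of coherent-state rectangles combined with the hypothesis $r\le\min(\ell,k-\ell)$.
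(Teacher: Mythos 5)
Your proposal is correct and follows essentially the same route as the paper's proof: expanding the cyclic product in the tensor-product structure, forming the contraction graph whose cycles are governed by the composite shift $b\mapsto (b-(t_1+\cdots+t_p))\bmod k$, bounding each fully free cycle by $D$ via the completeness relation and each cycle touching a frozen digit by $1$, and counting $\gcd([t_1+\cdots+t_p]_k,k)$ cycles. The only cosmetic difference is that you speak of frozen digits cutting cycles into arcs, whereas the paper keeps the cycle structure intact and simply notes that a cycle containing frozen variables contributes a product of unitary matrix elements of modulus at most $1$; both give the same bound.
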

\begin{proof}
A $D^{-r}\times D^{-r}$ square $S=\{(q,p)\in\T^2: m D^{-r}\le q<(m+1)D^{-r}, n D^{-r}\le p<(n+1)D^{-r}\}$, for $m,n\in\intbrr{0:D^r-1}$, can be defined in terms of $D$-ary expansions of $(q,p)\in\T^2$ as follows. Writing $m=\sum_{i=1}^{r}x_iD^{r-i}$ and $n=\sum_{i=1}^{r}y_iD^{r-i}$, then
\begin{align}\label{eqn:Sdef}
S&=\big\{(q,p)\in\T^2:\underbrace{\cdots**y_{r}\cdots y_1}_{\longleftarrow p}\bullet \underbrace{x_1\cdots x_{r}**\cdots}_{q\longrightarrow}\big\},
\end{align}
i.e. $S$ is the set of all points $(q,p)\in\T^2$ such that the $D$-ary expansion of $q$ starts with the $r$ digits $0.x_1\cdots x_{r}$ and the $D$-ary expansion of $p$ starts with the $r$ digits $0.y_1\cdots y_r$.
Recall the $D$-ary expansion definition of a $(k,\ell)$-rectangle $[\cs{\varepsilon}]$ from \eqref{eqn:rectangle}. Since $r\le \min(\ell,k-\ell)$, restricting a coherent state rectangle $[\cs{\varepsilon}]$ to lie in a $D^{-r}\times D^{-r}$ square $S$ amounts to fixing the first $r$ coordinates in the $D$-ary expansion of $\varepsilon$ and $\varepsilon'$ to match $S$; i.e. fixing the $2r$ coordinates
\begin{align}
\numberthis\label{eqn:frozen}
\varepsilon_{\ell+r},\ldots,\varepsilon_{\ell+1}\bullet\varepsilon_\ell,\ldots\varepsilon_{\ell-r+1}
\end{align}
to match the $2r$ corresponding values $y_r,\ldots, y_1\bullet x_1,\ldots,x_r$ which identify the square in \eqref{eqn:Sdef}.

To prove the lemma, we first prove the $p=1$ case, and then generalize to arbitrary $p\in\N$.
For $p=1$, we then want to bound the sum
\begin{align}\label{eqn:p1}
\Bigg|\sum_{[\cs{\varepsilon}]\in S_1}\langle\cs{\varepsilon}|\Ba_k^{t_1}|\cs{\varepsilon}\rangle\Bigg|&=\Bigg|\sum_{\varepsilon_k=0}^{D-1}\cdots\sum_{\varepsilon_{\ell+r+1}=0}^{D-1}\sum_{\varepsilon_{\ell-r}=0}^{D-1}\cdots\sum_{\varepsilon_1=0}^{D-1}\langle\cs{\varepsilon}|\Ba_k^t|\cs{\varepsilon}\rangle \Bigg|,
\end{align}
where the sums are over the variables $\varepsilon_j$ which do not appear in the fixed coordinates in \eqref{eqn:frozen}.
As for the graphs $G_{k,t}$ in \eqref{eqn:tikz1} and \eqref{eqn:tikz2}, we can draw the edges representing $\langle\cs{\varepsilon}|\Ba_k^t|\cs{\varepsilon}\rangle$ and determine the resulting cycles. The only difference is that since we have a $(k,\ell)$-coherent state, the top row for $|\cs{\varepsilon}\rangle$ is cyclically shifted and has some terms with $\hat{F}_D^\dagger$ applied according to \eqref{eqn:cs}, and the bottom row has the same cyclic shift and additional $\hat{F}_D^\dagger$ terms. This does not affect the cycles, which are the same as before. However, some of these cycles contain the frozen coordinates $\varepsilon_{\ell+r},\ldots,\varepsilon_{\ell-r+1}$ in \eqref{eqn:frozen}.
When summing over all the variables $\varepsilon_j$ in a cycle that contains any frozen coordinates, we do not end up with a term of the form \eqref{eqn:allbutone}. Instead, after summing over all non-frozen variables in the cycle, we are left with a single term of inner products only involving frozen variables. Because every inner product $\langle \varepsilon_j|\hat{F}_D^\gamma|\varepsilon_{j'}\rangle$ has norm at most $1$, the contribution to \eqref{eqn:p1} from a cycle with frozen coordinates is at most $1$, which is less than the bound $D$ we used for \eqref{eqn:allbutone}. Without keeping track of which cycles have frozen coordinates or not, and just using the bound $D$ for each cycle, the same reasoning following \eqref{eqn:allbutone} then gives the same bound,
\begin{align}
\Bigg|\sum_{[\cs{\varepsilon}]\in S_1}\langle\cs{\varepsilon}|\Ba_k^{t_1}|\cs{\varepsilon}\rangle\Bigg|&\le D^{\#\text{cycles in $G_{k,{t_1}}$}}=D^{\gcd([t_1]_k,k)}.
\end{align}

For $p>1$, we apply a similar argument, but the analogue of the graph in \eqref{eqn:tikz1} is more complicated.
Recall we use $|\csn{\varepsilon^{(j)}}\rangle$ as shorthand for $|(\varepsilon^{(j)})'\cdot\varepsilon^{(j)}\rangle$.
Each inner product $\langle\csn{\varepsilon^{(m)}}|\Ba_k^{t_m}|\csn{\varepsilon^{(m+1)}}\rangle$ in \eqref{eqn:trace-partial-p} creates two rows as in \eqref{eqn:tikz1}, and the corresponding graph $G_{k,\mathbf{t},p}$ now includes all $2p$ rows.
This graph $G_{k,\mathbf{t},p}$ is shown in Fig.~\ref{fig:tikz-p}.
We omit drawing the terms $\fmat$ on each $\varepsilon^{(j)}_m$, but note that a power of $\hat{F}_D^\dagger$ is permitted on many of the entries.

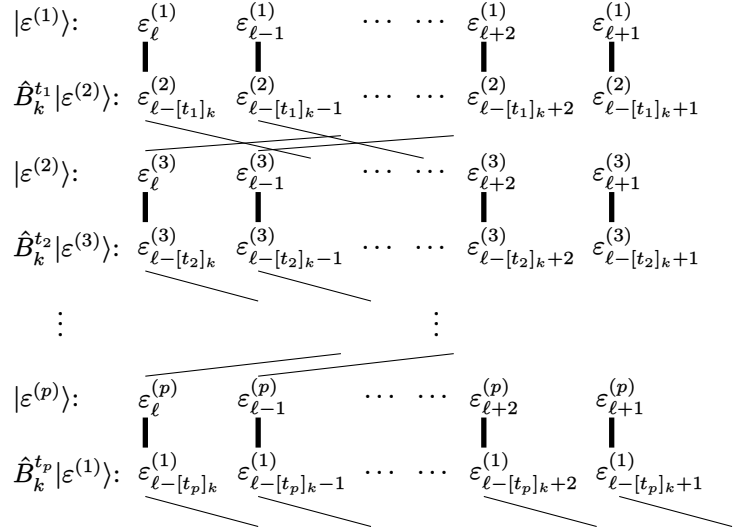
\begin{figure}[!ht]
\tp{\begin{tikzpicture}[baseline=(current  bounding  box.center),ampersand replacement=\&]
\node[matrix,right] at (0,0)
{
	\node {$|\csn{\varepsilon^{(1)}}\rangle$:}; \&\node {$\varepsilon^{(1)}_\ell$};  \&\node {$\varepsilon^{(1)}_{\ell-1}$};\&\node {$\cdots$};\&\node {$\cdots$}; \&\node {$\varepsilon^{(1)}_{\ell+2}$}; \& \node {$\varepsilon^{(1)}_{\ell+1}$}; \\
	\node{};\&\\
	\node {$\Ba_k^{t_1}|\csn{\varepsilon^{(2)}}\rangle$:}; \&\node {$\varepsilon^{(2)}_{\ell-[t_1]_k}$};  \&\node {$\varepsilon^{(2)}_{\ell-[t_1]_k-1}$};\&\node {$\cdots$};\&\node {$\cdots$}; \&\node {$\varepsilon^{(2)}_{\ell-[t_1]_k+2}$}; \& \node {$\varepsilon^{(2)}_{\ell-[t_1]_k+1}$}; \\
};
\foreach \vert in {0,-2,-5}{
	\foreach \pos in {2,3.5,6.5,8.2}{
		\draw[yshift=\vert cm, line width=2] (\pos,-.15)--++(0,.4);
	}
}
\foreach \lpos in {2,3.5}{
	\draw (\lpos,-.8)--(\lpos+2.2,-1.3); 
	\draw (\lpos,-1.2)--(\lpos+2.6,-1); 
}
\foreach \lpos in {2,3.5}{
	\draw[yshift= -2cm] (\lpos,-.8)--(\lpos+1.5,-1.2); 
	\draw[yshift=-3cm] (\lpos,-1.2)--(\lpos+2.6,-.9); 
}
\foreach \lpos in {2,3.5,6.5,8.3}{
	\draw[yshift= -5cm] (\lpos,-.8)--(\lpos+1.5,-1.2); 
}

\node[matrix,right] at (0,-2)
{
	\node {$|\csn{\varepsilon^{(2)}}\rangle$:}; \&\node {$\varepsilon^{(3)}_\ell$};  \&\node {$\varepsilon^{(3)}_{\ell-1}$};\&\node {$\cdots$};\&\node {$\cdots$}; \&\node {$\varepsilon^{(3)}_{\ell+2}$}; \& \node {$\varepsilon^{(3)}_{\ell+1}$}; \\
	\node{};\&\\
	\node {$\Ba_k^{t_2}|\csn{\varepsilon^{(3)}}\rangle$:}; \&\node {$\varepsilon^{(3)}_{\ell-[t_2]_k}$};  \&\node {$\varepsilon^{(3)}_{\ell-[t_2]_k-1}$};\&\node {$\cdots$};\&\node {$\cdots$}; \&\node {$\varepsilon^{(3)}_{\ell-[t_2]_k+2}$}; \& \node {$\varepsilon^{(3)}_{\ell-[t_2]_k+1}$}; \\
};

\foreach \x in {0,5}{
\node[xshift=\x cm,right] at (.7,-3.4) {$\vdots$};
}

\node[matrix,right] at (0,-5)
{
	\node {$|\csn{\varepsilon^{(p)}}\rangle$:}; \&\node {$\varepsilon^{(p)}_\ell$};  \&\node {$\varepsilon^{(p)}_{\ell-1}$};\&\node {$\cdots$};\&\node {$\cdots$}; \&\node {$\varepsilon^{(p)}_{\ell+2}$}; \& \node {$\varepsilon^{(p)}_{\ell+1}$}; \\
	\node{};\&\\
	\node {$\Ba_k^{t_p}|\csn{\varepsilon^{(1)}}\rangle$:}; \&\node {$\varepsilon^{(1)}_{\ell-[t_p]_k}$};  \&\node {$\varepsilon^{(1)}_{\ell-[t_p]_k-1}$};\&\node {$\cdots$};\&\node {$\cdots$}; \&\node {$\varepsilon^{(1)}_{\ell-[t_p]_k+2}$}; \& \node {$\varepsilon^{(1)}_{\ell-[t_p]_k+1}$}; \\
};
\end{tikzpicture}}
\caption{The graph $G_{k,\mathbf{t},p}$ describing \eqref{eqn:trace-partial-p} for $p>1$. The hanging edges in the bottom row connect to the corresponding variable in the top row.}\label{fig:tikz-p}
\end{figure}

Since we will be summing over all the non-frozen entries $\varepsilon^{(m)}_j\in\intbrr{0:D-1}$, we still consider cycles and removing/contracting those variables in the graph.
As before, every cycle will still either contribute a term of the form \eqref{eqn:allbutone}, which has norm at most $D$, or, if it involves any frozen variables, a term of norm at most $1$. We thus have the upper bound $D^{\#\mathrm{cycles}}$ again, and just need to count the number of cycles in this more complicated graph.

We can do this by focusing only on the top row of the graph corresponding to $\csn{\varepsilon^{(1)}}$. A cycle in the whole graph is always of the form 
\begin{align*}
\varepsilon^{(1)}_j\to\varepsilon^{(2)}_{j-[t_1]_k}\to\cdots\to \varepsilon^{(p)}_{j-[t_1]_k-[t_2]_k-\cdots-[t_{p-1}]_k}\to\varepsilon^{(1)}_{j-[t_1]_k-[t_2]_k-\cdots-[t_p]_k}\to\cdots\to\varepsilon^{(1)}_j,
\end{align*}
where the subscripts are taken modulo $k$. 
In other words, the path always goes from row 1, to row 2, $\ldots$, to row $p$, and then back to row 1, and so on until completing a cycle.
If we only consider the coordinates for $\varepsilon^{(1)}$ in the cycle, we see they are
\begin{align*}
\varepsilon^{(1)}_j\to\varepsilon^{(1)}_{j-[t_1]_k-[t_2]_k-\cdots-[t_p]_k}\to\varepsilon^{(1)}_{j-2([t_1]_k-[t_2]_k-\cdots-[t_p]_k)}\to\cdots\to\varepsilon^{(1)}_j.
\end{align*}
The action each time is a cyclic left shift by $[t_1]_k+\cdots+[t_p]_k$ modulo $k$. 
The argument following \eqref{eqn:trace-cycles} then yields the upper bound
\begin{align}
D^{\#\mathrm{cycles\;in\;}G_{k,\mathbf{t},p}}\le D^{\gcd([t_1+\cdots+t_p]_k,k)},
\end{align}
which gives \eqref{eqn:trace-partial-p}.
\end{proof}

Lemma~\ref{lem:trace-part} does not offer a good bound when $[t_1+\cdots+t_p]_k=k$. 
In this case we will need a better estimate for $p=1$ and $p=2$. 
\begin{lem}[partial traces, special $t$]\label{lem:trace-part2}
Let $S_1,S_2$ be $D^{-r}\times D^{-r}$ squares as in the previous lemma.
\begin{enumerate}[(i)]
\item For $t_1=\pm k$,
\begin{align}
\Bigg|\sum_{[\cs{\varepsilon}]\in S_1}\langle\cs{\varepsilon}|\Ba_k^{t_1}|\cs{\varepsilon}\rangle\Bigg|
&\le \begin{cases}1,&D\ne0\;\mathrm{mod}\;4\\
D^{k/4},&D=0\;\mathrm{mod}\;4
\end{cases}.
\end{align}

\item For $[t_1+t_2]_k=0$ with $t_1+t_2\ne 0\;\mathrm{mod}\;q(k)$,
\begin{align}\label{eqn:trace-partial-2}
\Bigg|\sum_{[\cs{\varepsilon}]\in S_1}\sum_{[\cs{\delta}]\in S_2}\langle\cs{\varepsilon}|\Ba_k^{t_1}|\cs{\delta}\rangle\langle\cs{\delta}|\Ba_k^{t_2}|\cs{\varepsilon}\rangle\Bigg|
&\le \begin{cases}
1,& D=2\text{ or $D$ odd}\\
2^{k},&D\ge4\text{ even}
\end{cases}.
\end{align}
\end{enumerate}
\end{lem}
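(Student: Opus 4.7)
For Part~(i), I plan to exploit the tensor product structure $\Ba_k^{\pm k}=(\hat F_D^{\mp 1})^{\otimes k}$, which is immediate from \eqref{eqn:walsh-action} and already used in the proof of Lemma~\ref{lem:trace}. Combined with the coherent state definition \eqref{eqn:cs}, the $\hat F_D^\dagger$ factors in $|\cs{\varepsilon}\rangle$ at tensor positions $m>\ell$ cancel against those contributed by $\Ba_k^{\pm k}$, leaving the clean factorization
\[
\langle\cs{\varepsilon}|\Ba_k^{\pm k}|\cs{\varepsilon}\rangle=\prod_{j=1}^k\langle\varepsilon_j|\hat F_D^{\mp 1}|\varepsilon_j\rangle.
\]
The partial sum over $[\cs{\varepsilon}]\in S_1$ then splits as a product: the $2r$ frozen label indices each contribute a factor of modulus $D^{-1/2}$, while the $k-2r$ free indices each contribute $\Tr(\hat F_D^{\mp 1})$. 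The Gauss sum formula \eqref{eqn:gauss} gives $|\Tr(\hat F_D^\dagger)|=1,0,1,\sqrt{2}$ for $D\equiv 1,2,3,0\pmod 4$, so the total is bounded by $D^{-r}|\Tr(\hat F_D^\dagger)|^{k-2r}$. This is at most $1$ in the first three cases, and at most $2^{k/2}\le D^{k/4}$ (using $D\ge 4$) in the last, giving the claim.

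For Part~(ii), the plan is to refine the graph-theoretic counting of Lemma~\ref{lem:trace-part}, which used the crude per-cycle bound $D$, by keeping track of the actual Fourier-matrix power accumulated around each cycle. Write $t_1+t_2=\mu k$ with $\mu\in\{1,2,3\}$ (or $\mu=1$ when $D=2$). Since $[t_1+t_2]_k=0$, the graph $G_{k,(t_1,t_2),2}$ decomposes into exactly $k$ cycles of length $4$, each involving one $\varepsilon$-variable together with one $\delta$-variable. When both variables in a cycle are free, the sum over them evaluates to $\Tr(\hat F_D^{c})$ for some integer exponent $c$ depending on the cycle. The heart of the argument is the claim that $c\equiv -\mu$ modulo the order of $\hat F_D$ (which is $4$ for $D\ge 3$ and $2$ for $D=2$) for every cycle. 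To prove this, I will decompose $\Ba_k^t=T^t F^{(t)}$ into a cyclic shift $T^t$ and a tensor product $F^{(t)}$ of $\hat F_D^\dagger$-powers determined by the wrap counts, express each cycle exponent as a sum $\gamma_m^{t_1}+\gamma^{t_2}_{\pi_{t_1}(m)}$ where $\gamma^t_m$ combines the coherent state exponents $\alpha(\cdot)$ (from the $\hat F_D^\dagger$ factors in $|\cs{\varepsilon}\rangle$ at tensor positions $>\ell$) with the wrap count $w_t(m)$, and then exploit two cancellations: the $\alpha$'s telescope around the closed loop since $\pi_{t_1+t_2}=\mathrm{id}$, and the wrap counts satisfy $w_{t_1}(m)+w_{t_2}(\pi_{t_1}(m))=\mu$ because $\Ba_k^{t_1}\Ba_k^{t_2}=\Ba_k^{\mu k}=((\hat F_D^\dagger)^\mu)^{\otimes k}$ applies $(\hat F_D^\dagger)^\mu$ uniformly per tensor slot. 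This Fourier-power bookkeeping is the main technical obstacle; once established, the rest is essentially Gauss-sum arithmetic.

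With $c\equiv -\mu$ in hand, each cycle with both variables free contributes $\Tr((\hat F_D^\dagger)^\mu)$, while each cycle containing a fixed variable (there are at most $4r\le k$ such cycles) contributes a matrix element of modulus at most $1$ by unitarity of $\hat F_D$. Multiplying the $k$ per-cycle bounds gives
\[
\bigg|\sum_{[\cs{\varepsilon}]\in S_1}\sum_{[\cs{\delta}]\in S_2}\langle\cs{\varepsilon}|\Ba_k^{t_1}|\cs{\delta}\rangle\langle\cs{\delta}|\Ba_k^{t_2}|\cs{\varepsilon}\rangle\bigg|\le\max(|\Tr((\hat F_D^\dagger)^\mu)|,1)^k.
\]
For $\mu\in\{1,3\}$, Gauss sums give $|\Tr|\in\{0,1,\sqrt{2}\}$ as in Part~(i); for $\mu=2$, $\Tr((\hat F_D^\dagger)^2)=\Tr(\hat R_D)$ counts fixed points of $x\mapsto -x\bmod D$, equal to $1$ for $D$ odd and $2$ for $D$ even. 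The worst case is thus $\mu=2$ with $D\ge 4$ even, giving the bound $2^k$; all other cases give at most $1$ (or $2^{k/2}\le 2^k$), matching the stated claims.
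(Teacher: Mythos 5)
Your proposal is correct and follows essentially the same route as the paper: part (i) is the identical factorization-plus-Gauss-sum computation, and part (ii) uses the same decomposition into $k$ length-four cycles with per-cycle bounds $\max(|\Tr((\hat F_D^\dagger)^\mu)|,1)$ and frozen cycles bounded by $1$. The only difference is organizational: where the paper tracks the accumulated Fourier power via the explicit diagram \eqref{eqn:tikz-p2} together with a case table over the regimes of $t_1,t_2$ (Table~\ref{tab:tikz-p22}), you derive the uniform per-cycle exponent $\mu$ directly from $\Ba_k^{t_1}\Ba_k^{t_2}=((\hat F_D^\dagger)^{\mu})^{\otimes k}$, which is a clean and valid shortcut that also absorbs the paper's separately-treated case $t_1,t_2\in k\Z$.
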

\begin{proof}
(i) When $t_1=\pm k$, $\Ba_k^{t_1}=(F_D^\gamma)^{\otimes k}$ for $\gamma=\mp 1$. Although $|\cs{\varepsilon}\rangle$ has $k-\ell$ coordinate basis vectors Fourier transformed according to \eqref{eqn:cs}, these extra Fourier transforms will play no role in the inner product $\langle\cs{\varepsilon}|\Ba_k^{t_1}|\cs{\varepsilon}\rangle$ since they will cancel with the matching Fourier transform on the other side of the inner product. Letting $\frozen_r:=\{\ell+r,\ldots,\ell-r+1\}$ denote the $2r$ ``frozen'' indices corresponding to $S_1$ as in \eqref{eqn:frozen}, then
\begin{align*}
\Bigg|\sum_{[\cs{\varepsilon}]\in S_1}\langle\cs{\varepsilon}|\Ba_k^{t_1}|\cs{\varepsilon}\rangle\Bigg|=\Bigg|\sum_{[\cs{\varepsilon}]\in S_1}\prod_{i=1}^{k}\langle\varepsilon_i|\hat F_D^\gamma|\varepsilon_i\rangle\Bigg|
&=\Bigg|\prod_{j\not\in\frozen_r}\sum_{\varepsilon_j=0}^{D-1}\langle\varepsilon_j|\hat F_D^\gamma|\varepsilon_j\rangle\Bigg|\left|\prod_{i\in\frozen_r}\langle\varepsilon_i|\hat F_D^\gamma|\varepsilon_i\rangle\right|\\
&\le\begin{cases}
D^{-r},&D=1,2,3\;\mathrm{mod}\;4\\
2^{k/2-r}D^{-r},&D=0\;\mathrm{mod}\;4
\end{cases},\numberthis\label{eqn:special}
\end{align*}
using the Gauss sum \eqref{eqn:gauss} to evaluate 
$\sum_{\varepsilon_j=0}^{D-1}\langle\varepsilon_i|\hat F_D^\gamma|\varepsilon_i\rangle=\frac{1}{\sqrt{D}}\sum_{\varepsilon_j=0}^{D-1}e^{\pm 2\pi i\varepsilon_j^2/D}$ for $\gamma=\pm1$,
and also using that $\big|\prod_{i\in\frozen_r}\langle\varepsilon_i|\hat F_D^\gamma|\varepsilon_i\rangle\big|=D^{-r}$.
Part (i) of the lemma follows for any $r\ge0$ since for $D=0\;\mathrm{mod}\;4$, \eqref{eqn:special} is bounded by $2^{k/2}\le D^{k/4}$.

(ii) We first consider the case $t_1,t_2\in k\Z$ with $t_1+t_2\ne0\;\mathrm{mod}\;q(k)$. Then for $\gamma_1=t_1/k$ and $\gamma_2=t_2/k$,
\begin{align}\label{eqn:t12k}
\Bigg|\sum_{[\cs{\varepsilon}]\in S_1}\sum_{[\cs{\delta}]\in S_2}\langle\cs{\varepsilon}|\Ba_k^{t_1}|\cs{\delta}\rangle\langle\cs{\delta}|\Ba_k^{t_2}|\cs{\varepsilon}\rangle\Bigg|
&= \Bigg|\sum_{[\cs{\varepsilon}]\in S_1}\sum_{[\cs{\delta}]\in S_2}\prod_{i=1}^k\langle\varepsilon_i|(\hat F_D^\dagger)^{\gamma_1}|\delta_i\rangle\langle\delta_i|(F_D^\dagger)^{\gamma_2}|\varepsilon_i\rangle\Bigg|.
\end{align}
The sums over $[\cs{\varepsilon}]\in S_1$ and $[\cs{\delta}]\in S_2$ are sums over the $k-2r$ free (non-frozen) coordinates $\varepsilon_i$ from $0$ to $D-1$ and over the $k-2r$ free $\delta_i$ from $0$ to $D-1$. If $\delta_i$ and $\varepsilon_i$ are both free, then the term $\langle\varepsilon_i|(\hat F_D^\dagger)^{\gamma_1}|\delta_i\rangle\langle\delta_i|(F_D^\dagger)^{\gamma_2}|\varepsilon_i\rangle$ contributes
\begin{align*}
\left|\sum_{\varepsilon_i=0}^{D-1}\sum_{\delta_i=0}^{D-1}\langle\varepsilon_i|(\hat F_D^\dagger)^{\gamma_1}|\delta_i\rangle\langle\delta_i|(F_D^\dagger)^{\gamma_2}|\varepsilon_i\rangle\right|&=\left|\sum_{\varepsilon_i=0}^{D-1}\langle\varepsilon_i|(\hat F_D^\dagger)^{\gamma_1+\gamma_2}|\varepsilon_i\rangle\right|
\le \begin{cases}
1,&D\text{ odd or }D=2\\
2,&D\ge4\text{ even}
\end{cases},
\end{align*}
using that $\gamma_1+\gamma_2\ne 0\;\mathrm{mod}\;4$ for $D\ge3$, or $\gamma_1+\gamma_2\ne0\;\mathrm{mod}\;2$ for $D=2$, to apply either the Gauss sum \eqref{eqn:gauss} or trace bound on $\hat R_D=(\hat F_D^\dagger)^2$. If either of $\delta_i$ or $\varepsilon_i$ is frozen due to coordinates defining $S_1$ or $S_2$, then the term $\langle\varepsilon_i|(\hat F_D^\dagger)^{\gamma_1}|\delta_i\rangle\langle\delta_i|(F_D^\dagger)^{\gamma_2}|\varepsilon_i\rangle$ contributes at most $1$, by considering a sum over only one of $\varepsilon_i,\delta_i$, or no sum at all (if both are fixed).
There are $k$ values of $i=1,\ldots,k$, so \eqref{eqn:t12k} is bounded as
\begin{align*}
\Bigg|\sum_{[\cs{\varepsilon}]\in S_1}\sum_{[\cs{\delta}]\in S_2}\langle\cs{\varepsilon}|\Ba_k^{t_1}|\cs{\delta}\rangle\langle\cs{\delta}|\Ba_k^{t_2}|\cs{\varepsilon}\rangle\Bigg|
&\le\begin{cases}
1,&D\text{ odd or }D=2\\
2^k,&D\ge4\text{ even}
\end{cases},
\end{align*}
which is \eqref{eqn:trace-partial-2}.

Now we consider $t_1,t_2\not\in k\Z$, with $[t_1+t_2]_k=0$ and $t_1+t_2\ne0\;\mathrm{mod}\;q(k)$.
This part will use the graphs and cycles from the previous lemmas. In order to obtain a more precise bound than \eqref{eqn:trace-partial-p}, we will need to keep track of how many powers of $\hat{F}_D^\dagger$ we accumulate along a cycle leading to an expression like \eqref{eqn:allbutone}.
We note from their definition that $|\cs{\varepsilon}\rangle$ and $|\cs{\delta}\rangle$ each have their $k-\ell$ momentum coordinate basis vectors Fourier transformed. However, because these Fourier-transformed basis vectors show up exactly twice in the inner products in the left side of \eqref{eqn:trace-partial-2}, 
always in the form $|\hat F_D^\dagger\varepsilon_j\rangle\langle\hat F_D^\dagger\varepsilon_j|=\hat F_D^\dagger|\varepsilon_j\rangle\langle\varepsilon_j|\hat F_D$, these Fourier transforms cancel when contracting over such a momentum coordinate basis vector $|\varepsilon_j\rangle$. 
And if $|\varepsilon_j\rangle$ is a frozen coordinate, then extra Fourier transforms do not matter when bounding the contribution by 1.
Therefore when we consider the following graph and cycles as in the previous lemma (with indices on $\varepsilon$ or $\delta$ taken modulo $k$), we can neglect the ``inherent'' Fourier transforms on $|\cs{\varepsilon}\rangle$ or $|\cs{\delta}\rangle$, and only count the ones coming from $\Ba_k^{t_1}$ and $\Ba_k^{t_2}$. 
We first draw the graph and cycles in the case $t_1,t_2\in\intbrr{1:k-1}$ with $t_1+t_2=k$:
\begin{equation}\label{eqn:tikz-p2}
\tp{\begin{tikzpicture}[baseline=(current  bounding  box.center),ampersand replacement=\&]
\node[matrix,right] at (0,0)
{
	\node {$|\cs{\varepsilon}\rangle$:}; \&\node {$\varepsilon_\ell$}; \&\node {$\cdots$}; \&\node {$\varepsilon_{\ell+[t_1]_k+1}$};\&\node {$\varepsilon_{\ell+[t_1]_k}$}; \&\node {$\cdots$}; \& \node {$\varepsilon_{\ell+1}$}; \\
	\node{};\&\\
	\node {$\Ba_k^{t_1}|\cs{\delta}\rangle$:}; \&\node {$\delta_{\ell-[t_1]_k}$}; \&\node{$\cdots$}; \&\node {$\delta_{\ell+1}$};  \&\node {$\hat{F}_D^\dagger\delta_{\ell}$}; \& \node {$\cdots$};\&\node {$\hat{F}_D^\dagger\delta_{\ell-[t_1]_k+1}$}; \\
};
\foreach \pos in {2,4.2,6.2,8.2}{
	\foreach \vert in {0,-2}{
		\draw[yshift=\vert cm, line width=2] (\pos,-.15)--++(0,.4);
	}
	\draw[yshift=-1 cm] (\pos,-.15)--++(0,.4);
}
\foreach \hshift in {2,4.2,6.2,8.2}{
	\draw[xshift=\hshift cm] plot [smooth, tension=1] coordinates {(0,-2.2) (-.3,-1) (0,.2)};
}

\node[matrix,right] at (0,-2)
{
	\node {$|\cs{\delta}\rangle$:}; \&\node {$\delta_{\ell-[t_1]_k}$};  \&\node {$\cdots$}; \&\node{$\delta_{\ell+1}$};\&\node {$\delta_\ell$}; \& \node {$\cdots$}; \& \node {$\delta_{\ell-[t_1]_k+1}$}; \\
	\node{};\&\\
	\node {$\Ba_k^{t_2}|\cs{\varepsilon}\rangle$:}; \&\node {$\hat{F}_D^\dagger\varepsilon_{\ell}$}; \&\node {$\cdots$};  \&\node {$\hat{F}_D^\dagger\varepsilon_{\ell-[t_2]_k+1}$};\&\node {$\varepsilon_{\ell-[t_2]_k}$}; \&\node{$\cdots$};\& \node {$\varepsilon_{\ell+1}$}; \\
};
\node at (9.9,-1) {,};
\end{tikzpicture}}
\end{equation}
In the bottom two rows, we have cyclically shifted both rows by $t_1=[t_1]_k$ to the left to line up with the top two rows, and used that $t_1+t_2=k$. 
We see in each cycle, there is exactly one term $\hat{F}_D^\dagger$, not including the inherent ones on momentum coordinates which we can ignore since they come in canceling pairs. 
Thus each cycle looks like $\langle\varepsilon_j|\delta_{j-[t_1]_k}\rangle\langle\delta_{j-[t_1]_k}|\hat F_D^\dagger|\varepsilon_j\rangle$ or $\langle \varepsilon_j|\hat F_D^\dagger|\delta_{j-[t_1]_k}\rangle\langle\delta_{j-[t_1]_k}|\varepsilon_j\rangle$.
The contribution from each cycle to the sum in \eqref{eqn:trace-partial-2} is then at most
\begin{align}\label{eqn:cycle-contrib}
\max\left(\Bigg|\sum_{\varepsilon_j=0}^{D-1}\langle\varepsilon_j|\hat{F}_D^\dagger|\varepsilon_j\rangle\Bigg|,1\right)=\begin{cases} 1,&D\ne 0\;\mathrm{mod}\;4\\
2^{1/2},&D=0\;\mathrm{mod}\;4
\end{cases},
\end{align} 
using the quadratic Gauss sum \eqref{eqn:gauss} along with the fact that any cycle with frozen variables contributes at most 1.
Since there are $k$ cycles total, raising this to the $k$th power shows \eqref{eqn:trace-partial-2} for the case $t_1,t_2\in\intbrr{1:k-1}$.

Now we consider general $t_1+t_2$. We first restrict to $D\ge3$. It is enough to consider $-2k\le t_1,t_2\le 2k-1$ since this covers a full period $q(k)=4k$. 
Since we consider $t_1,t_2\not\in k\Z$, there are four regimes $\intbrr{-2k+1:-k-1},\intbrr{-k+1:-1},\intbrr{1:k-1}, \intbrr{k+1:2k-1}$ for each of $t_1$ and $t_2$. Each regime corresponds to multiplying the entire second row (for $t_1$) or fourth row (for $t_2$) of \eqref{eqn:tikz-p2} by a different power of $\hat{F}_D^\dagger$ and then considering the shifts $[t_1]_k,[t_2]_k$. For example, if $t_1\in\intbrr{k+1:2k-1}$, then $t_1=k+[t_1]_k$, and $\Ba_k^{t_1}=(\hat{F}_D^\dagger)^{\otimes k}\Ba_k^{[t_1]_k}$, corresponding to adding a factor $\hat{F}_D^\dagger$ to every term in the second row of \eqref{eqn:tikz-p2}. 
We record the total number of $\hat{F}_D^\dagger$ factors (modulo 4) along each cycle in \eqref{eqn:tikz-p2} for each case of $t_1,t_2$ in Table~\ref{tab:tikz-p22} below.
The corresponding $\hat{F}_D^\dagger$ factors for each regime are written above the columns and to the left of the rows of Table~\ref{tab:tikz-p22}. The total power of $\hat{F}_D^\dagger$ is the product of the factors for each regime, plus one factor from the original \eqref{eqn:tikz-p2}. 
\begin{table}[!ht]
\[
\def\arraystretch{1.3}
\begin{array}{rr|cccc}
\hline\hline
&& (\hat{F}_D^\dagger)^2& (\hat{F}_D^\dagger)^3 & \operatorname{Id}& \hat{F}_D^\dagger\\
&\hbox{\diagbox[height=6mm]{$t_2$}{$t_1$}}
&\intbrr{-2k+1:-k-1}&\intbrr{-k+1:-1}&\intbrr{1:k-1} &\intbrr{k+1:2k-1}\\\hline
(\hat{F}_D^\dagger)^2&\intbrr{-2k+1:-k-1}& \hat{F}_D^\dagger & (\hat{F}_D^\dagger)^2 & (\hat{F}_D^\dagger)^3& \operatorname{Id}\\
(\hat{F}_D^\dagger)^3&\intbrr{-k+1:-1} &(\hat{F}_D^\dagger)^2& (\hat{F}_D^\dagger)^3 & \operatorname{Id} & \hat{F}_D^\dagger\\
\operatorname{Id}&\intbrr{1:k-1}& (\hat{F}_D^\dagger)^3& \operatorname{Id} & \hat{F}_D^\dagger& (\hat{F}_D^\dagger)^2 \\
\hat{F}_D^\dagger&\intbrr{k+1:2k-1}& \operatorname{Id} & \hat{F}_D^\dagger & (\hat{F}_D^\dagger)^2& (\hat{F}_D^\dagger)^3\\\hline\hline
\end{array}
\]
\caption{Tracking the extra powers of $\hat{F}_D^\dagger$ added to the rows of \eqref{eqn:tikz-p2} depending on the values of $t_1,t_2$. The total power of $\hat F_D^\dagger$ is given in the center of the table, and is the product of the factors written to the side of each corresponding regime, plus one factor from the original.}\label{tab:tikz-p22}
\end{table}

For cycles producing in total $(\hat F_D^\dagger)^\gamma$, for $\gamma\in\{1,2,3\}$, the contribution from that cycle to the sum in \eqref{eqn:trace-partial-2} is at most
\begin{align*}
\max\left(\Bigg|\sum_{\varepsilon_j=0}^{D-1}\langle\varepsilon_j|(\hat{F}_D^\dagger)^\gamma|\varepsilon_j\rangle\Bigg|,1\right)\le 
\begin{cases} 1,&D\ge3\text{ odd}\\ 
2,&D\ge4\text{ even}
\end{cases},
\end{align*}
considering both the Gauss sum and the trace of $\hat R_D$. Raising to the $k$th power gives \eqref{eqn:trace-partial-2}.

It remains to consider the four cases where we have $\operatorname{Id}$ in Table~\ref{tab:tikz-p22}.
However, we see from the table that in those cases, we always have $-(k-2)\le t_1+t_2\le k-2$. So if $[t_1+t_2]_k=0$, then we must have $t_1+t_2=0$, which is assumed not to happen in the hypotheses, and so \eqref{eqn:trace-partial-2} holds for $D\ge3$.

In the case of general $t_1+t_2$ when $D=2$, the analysis is simpler since we only need to consider $-k\le t_1,t_2\le k-1$ to cover a whole period. We only need to use
\begin{align}
\max\left(\left|\sum_{\varepsilon_j=0}^{1}\langle\varepsilon_j|\hat{F}_2|\varepsilon_j\rangle\right|,1\right)\le 1,
\end{align}
which gives \eqref{eqn:trace-partial-2} for $D=2$ by the same reasoning as above.
\end{proof}

To apply the previous two lemmas for Proposition~\ref{thm:trace-cancel}, we will use the following cancellation lemma.
\begin{lem}[cancellation]\label{lem:a-cancellation}
Let $[\cs{\varepsilon}],[\cs{\delta}]$ denote $(k,\ell)$-rectangles, and fix $t_1,t_2\in\Z$.
Let $r\in\intbrr{0:\min(\ell,k-\ell)}$, and consider a partition of $\T^2$ into the $D^{2r}$ size $D^{-r}\times D^{-r}$ squares
\begin{align*}
\mathscr S_r&=\left\{\Big[\frac{j}{D^r},\frac{j+1}{D^r}\Big)\times\left[\frac{k}{D^r},\frac{k+1}{D^r}\right)\right\}_{j,k=0}^{D^r-1}.
\end{align*}
Note since $r\le \min(\ell,k-\ell)$, each $(k,\ell)$-rectangle $[\cs{\varepsilon}]$ or $[\cs{\delta}]$ fits within a single square in $\mathscr S_r$.
\begin{enumerate}[(i)]
\item If
\begin{align}\label{eqn:ptrace-bound1}
\max_{S_1\in\mathscr S_r}\Bigg|\sum_{[\cs{\varepsilon}]\in S_1}\langle\cs{\varepsilon}|\Ba_k^{t_1}|\cs{\varepsilon}\rangle
\Bigg|&\le \mathfrak{b},
\end{align}
then
\begin{align}\label{eqn:cancellationp1}
\Bigg|\sum_{[\cs{\varepsilon}]\in\mathcal R_{k,\ell}}\fint_{[\cs{\varepsilon}]}a\cdot\langle\cs{\varepsilon}|\Ba_k^{t_1}|\cs{\varepsilon}\rangle\Bigg|&\le \begin{cases}\|a\|_\infty D^{2r}\mathfrak{b}+\sqrt{2}\|a\|_\Lip D^{k/2-r},&t_1\not\in q(k)\Z\\
N\left|\int_{\T^2}a(\x)\,d\x\right|,&t_1\in q(k)\Z
\end{cases}.
\end{align}

\item If
\begin{align}\label{eqn:ptrace-bound2}
\max_{S_1,S_2\in\mathscr S_r}\Bigg|\sum_{[\cs{\varepsilon}]\in S_1}\sum_{[\cs{\delta}]\in S_2}\langle\cs{\varepsilon}|\Ba_k^{t_1}|\cs{\delta}\rangle
\langle\cs{\delta}|\Ba_k^{t_2}|\cs{\varepsilon}\rangle\Bigg|&\le \mathfrak{b},
\end{align}
then
\begin{align}\label{eqn:cancellation}
\nonumber\Bigg|\sum_{[\cs{\varepsilon}],[\cs{\delta}]\in\mathcal R_{k,\ell}}\fint_{[\cs{\varepsilon}]}a\,\fint_{[\cs{\delta}]}a\;\cdot&\,\langle \cs{\varepsilon}|\Ba_k^{t_1}|\cs{\delta}\rangle\langle\cs{\delta}|\Ba_k^{t_2}|\cs{\varepsilon}\rangle\Bigg|\\
&\le \|a\|_\infty^2 D^{4r}\mathfrak{b}+2\sqrt{2}\|a\|_\Lip\|a\|_\infty D^{k-r}.
\end{align}
\end{enumerate}

\end{lem}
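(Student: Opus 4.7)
The plan is to use the partition $\mathscr S_r$ to replace $\fint_{[\cs{\varepsilon}]}a$ (and $\fint_{[\cs{\delta}]}a$) by its near-constant value on a single square. Since $r\le\min(\ell,k-\ell)$, each $(k,\ell)$-rectangle $[\cs{\varepsilon}]$ lies in a unique square $S([\cs{\varepsilon}])\in\mathscr S_r$, and because both $\fint_{[\cs{\varepsilon}]}a$ and $\bar a_S:=\fint_S a$ are averages of $a$ over subsets of $S$, which has diameter $\sqrt{2}D^{-r}$, one has the Lipschitz estimate
\[
\left|\fint_{[\cs{\varepsilon}]}a-\bar a_{S([\cs{\varepsilon}])}\right|\le\sqrt{2}\|a\|_{\Lip}D^{-r}.
\]

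For part (i) I would decompose
\begin{align*}
\sum_{[\cs{\varepsilon}]}\fint_{[\cs{\varepsilon}]}a\cdot\langle\cs{\varepsilon}|\Ba_k^{t_1}|\cs{\varepsilon}\rangle
&=\sum_{S\in\mathscr S_r}\bar a_S\sum_{[\cs{\varepsilon}]\in S}\langle\cs{\varepsilon}|\Ba_k^{t_1}|\cs{\varepsilon}\rangle
+\sum_{[\cs{\varepsilon}]}\Big(\fint_{[\cs{\varepsilon}]}a-\bar a_{S([\cs{\varepsilon}])}\Big)\langle\cs{\varepsilon}|\Ba_k^{t_1}|\cs{\varepsilon}\rangle.
\end{align*}
Hypothesis~\eqref{eqn:ptrace-bound1} together with $|\mathscr S_r|=D^{2r}$ bounds the first piece by $\|a\|_\infty D^{2r}\mathfrak b$, while the Lipschitz estimate bounds the second by $\sqrt{2}\|a\|_{\Lip}D^{-r}\sum_{[\cs{\varepsilon}]}|\langle\cs{\varepsilon}|\Ba_k^{t_1}|\cs{\varepsilon}\rangle|$. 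Proposition~\ref{prop:walsh-powers}(i,ii) gives $\sum_{[\cs{\varepsilon}]}|\langle\cs{\varepsilon}|\Ba_k^{t_1}|\cs{\varepsilon}\rangle|^2=D^{\eta(t_1)}\cdot D^{-\eta(t_1)}=1$ whenever $t_1\notin(2k)\Z$, and Cauchy--Schwarz yields $\sum|\cdot|\le\sqrt{D^k}=D^{k/2}$; the remaining subcase of $t_1\notin q(k)\Z$, namely $t_1\in 2k+(4k)\Z$ with $D\ge 4$ even, gives $\sum|\cdot|=2^k\le D^{k/2}$ directly. For $t_1\in q(k)\Z$ the map $\Ba_k^{t_1}$ is the identity, so $\sum_{[\cs{\varepsilon}]}\fint_{[\cs{\varepsilon}]}a=N\int_{\T^2}a$ exactly.

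For part (ii) the argument is structurally identical after expanding
\[
\fint_{[\cs{\varepsilon}]}a\cdot\fint_{[\cs{\delta}]}a=\bar a_{S_1}\bar a_{S_2}+\Big(\fint_{[\cs{\varepsilon}]}a-\bar a_{S_1}\Big)\fint_{[\cs{\delta}]}a+\bar a_{S_1}\Big(\fint_{[\cs{\delta}]}a-\bar a_{S_2}\Big),
\]
with $S_i$ the square containing the corresponding rectangle. Hypothesis~\eqref{eqn:ptrace-bound2} and $|\mathscr S_r|^2=D^{4r}$ bound the main term by $\|a\|_\infty^2 D^{4r}\mathfrak b$. Each of the two error terms contributes at most $\sqrt{2}\|a\|_{\Lip}\|a\|_\infty D^{-r}\sum_{[\cs{\varepsilon}],[\cs{\delta}]}|\langle\cs{\varepsilon}|\Ba_k^{t_1}|\cs{\delta}\rangle||\langle\cs{\delta}|\Ba_k^{t_2}|\cs{\varepsilon}\rangle|$, and Cauchy--Schwarz applied to the two factors together with unitarity (so $\sum_{[\cs{\varepsilon}],[\cs{\delta}]}|\langle\cs{\varepsilon}|\Ba_k^{t_1}|\cs{\delta}\rangle|^2=D^k$ and likewise for $t_2$) bounds the double sum by $\sqrt{D^k\cdot D^k}=D^k$. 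Combining gives the claimed error $2\sqrt{2}\|a\|_{\Lip}\|a\|_\infty D^{k-r}$.

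The argument is a routine but careful decomposition into a main term controlled by the structural hypothesis and a Lipschitz-error term controlled by unitarity, and I do not expect serious obstacles; the only delicate point is confirming $\sum_{[\cs{\varepsilon}]}|\langle\cs{\varepsilon}|\Ba_k^{t_1}|\cs{\varepsilon}\rangle|\le D^{k/2}$ in the anomalous subcase $t_1\equiv 2k\pmod{4k}$ with $D$ even, where Cauchy--Schwarz is not sharp and one must instead read the count $2^k$ directly from Proposition~\ref{prop:walsh-powers} and compare it to $D^{k/2}$ for $D\ge 4$.
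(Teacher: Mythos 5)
Your proposal is correct and follows essentially the same route as the paper: split off the square-averaged main term (controlled by the hypothesis and the count $|\mathscr S_r|=D^{2r}$), bound the Lipschitz error via $|\fint_{[\cs{\varepsilon}]}a-\fint_S a|\le\sqrt{2}\|a\|_\Lip D^{-r}$, and control the remaining absolute-value sums by Proposition~\ref{prop:walsh-powers} in part (i) (including the $2^k\le D^{k/2}$ check at $t_1\equiv 2k\pmod{4k}$ for even $D\ge4$) and by Cauchy--Schwarz with unitarity in part (ii). The only cosmetic differences are that the paper bounds $\sum_{[\cs{\varepsilon}]}|\langle\cs{\varepsilon}|\Ba_k^{t_1}|\cs{\varepsilon}\rangle|$ directly as $D^{\eta(t_1)}D^{-\eta(t_1)/2}$ rather than via your $\ell^2$-plus-Cauchy--Schwarz detour, and it packages the two error terms in (ii) into a single product estimate.
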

As briefly described near the beginning of Section~\ref{sec:phases}, part (i) of the lemma and its application can be explained intuitively for $t\not\in q(k)\Z$ in terms of Fig.~\ref{fig:phases}. If we partition the $(q,p)$ phase space in Fig.~\ref{fig:phases} into square regions $S\in \mathscr S_r$, then the statement \eqref{eqn:ptrace-bound1}, if $\mathfrak{b}$ is small enough, says that either there are not many nonzero values of $\langle\cs{\varepsilon}|\Ba_k^{t_1}|\cs{\varepsilon}\rangle$, or there are but the phases of $\langle\cs{\varepsilon}|\Ba_k^{t_1}|\cs{\varepsilon}\rangle$ fluctuate on a square $S$ enough to cause some cancellations. The latter situation is visualized in Fig.~\ref{fig:phases} through a lack of large regions of a single color.
For a fixed observable $a:\T^2\to\C$, the value of $a$ on a single $D^{-r}\times D^{-r}$ square $S$ is close to a constant as $r\to\infty$. Integrating $a$ on each square $S$ against the fluctuating or mostly zero $\langle\cs{\varepsilon}|\Ba_k^{t_1}|\cs{\varepsilon}\rangle$, for $[\varepsilon^{(1)}]\in S$, will then be small either due to the cancellations from phases or many zero values. 

\begin{proof}[Proof of Lemma~\ref{lem:a-cancellation}]

(i) First, if $t_1\in q(k)\Z$ then $\Ba_k^{t_1}=\operatorname{Id}$, and the left hand side of \eqref{eqn:cancellationp1} is just
\begin{align*}
\Bigg|\sum_{[\cs{\varepsilon}]\in\mathcal R_{k,\ell}}\fint_{[\cs{\varepsilon}]}a\cdot\langle\cs{\varepsilon}|\cs{\varepsilon}\rangle\Bigg|&=N\Bigg|\int_{\T^2}a(\x)\,d\x\Bigg|.
\end{align*}
For $t_1\not\in q(k)\Z$, we start by estimating
$\left|\fint_{[\cs{\varepsilon}]}a-\fint_{S_1}a\right|\le \sqrt{2}\|a\|_\Lip D^{-r}$ for $[\cs{\varepsilon}]\subseteq S_1$. 
Then we have
\begin{align*}
\begin{aligned}
\Bigg|\sum_{[\cs{\varepsilon}]\in\mathcal R_{k,\ell}}\fint_{[\cs{\varepsilon}]}a\cdot\langle \cs{\varepsilon}|\Ba_k^{t_1}|\cs{\varepsilon}\rangle\Bigg|
&\le \begin{multlined}[t]
\Bigg|\sum_{S_1\in\mathscr S_r}\sum_{[\cs{\varepsilon}]\in S_1}
\fint_{S_1}a\cdot\langle \cs{\varepsilon}|\Ba_k^{t_1}|\cs{\varepsilon}\rangle\Bigg|+\\
\hspace{1cm}+\sqrt{2}\|a\|_\Lip D^{-r}\sum_{[\cs{\varepsilon}]\in\mathcal R_{k,\ell}}|\langle \cs{\varepsilon}|\Ba_k^{t_1}|\cs{\varepsilon}\rangle|.
\end{multlined}
\end{aligned}\numberthis\label{eqn:s1bound}
\end{align*}
Applying the triangle inequality to the sum over $S_1\in\mathscr S_r$ only, and using \eqref{eqn:ptrace-bound1}, gives
\begin{align*}
\Bigg|\sum_{S_1\in\mathscr S_r}\sum_{[\cs{\varepsilon}]\in S_1}
\fint_{S_1}a\cdot\langle \cs{\varepsilon}|\Ba_k^{t_1}|\cs{\varepsilon}\rangle\Bigg|&\le \sum_{S_1\in\mathscr S_r}\left|\fint_{S_1}a\right|\Bigg|\sum_{[\cs{\varepsilon}]\in S_1}
\langle \cs{\varepsilon}|\Ba_k^{t_1}|\cs{\varepsilon}\rangle\Bigg| \le \|a\|_\infty D^{2r}\mathfrak{b}.
\end{align*}
Note it was key that the absolute value is taken on the \emph{outside} of the sum over $[\cs{\varepsilon}]\in S_1$.
The remaining term in \eqref{eqn:s1bound} can be bounded using Proposition~\ref{prop:walsh-powers}(i,ii) for $t_1\not\in q(k)\Z$. If also $t_1\not\in 2k\Z$, this gives
\begin{align*}
\sqrt{2}\|a\|_\Lip D^{-r}\sum_{[\cs{\varepsilon}]\in\mathcal R_{k,\ell}}|\langle \cs{\varepsilon}|\Ba_k^{t_1}|\cs{\varepsilon}\rangle|&\le \sqrt{2}\|a\|_\Lip D^{-r}D^{\eta(t_1)}D^{-\eta(t_1)/2}\le \sqrt{2}\|a\|_\Lip D^{-r}D^{k/2}.
\end{align*}
If $D\ge3$ and $t_1\in 2k+4k\Z$, the same end bound still holds since $\sum_{[\cs{\varepsilon}]\in\mathcal R_{k,\ell}}|\langle \cs{\varepsilon}|\Ba_k^{2k}|\cs{\varepsilon}\rangle|=1$ for $D$ odd and $2^k$ for $D\ge4$ even, both of which are $\le D^{k/2}$.

(ii) We start by applying the triangle inequality twice to get,
\begin{align}
\Bigg|\fint_{[\cs{\varepsilon}]}a\fint_{[\cs{\delta}]}a-\fint_{S_1}a\fint_{S_2}a\Bigg|&\le 2\sqrt{2}\|a\|_\Lip D^{-r}\|a\|_\infty,
\end{align}
for $[\cs{\varepsilon}]\subseteq S_1$ and $[\cs{\delta}]\subseteq S_2$.
Then we can estimate
\begin{align*}
\Bigg|\sum_{[\cs{\varepsilon}],[\cs{\delta}]\in\mathcal R_{k,\ell}}\fint_{[\cs{\varepsilon}]}a&\fint_{[\cs{\delta}]}a\cdot\langle \cs{\varepsilon}|\Ba_k^{t_1}|\cs{\delta}\rangle\langle\cs{\delta}|\Ba_k^{t_2}|\cs{\varepsilon}\rangle\Bigg|\\
&\le \Bigg|\sum_{S_1,S_2\in\mathscr S_r}\sum_{[\cs{\varepsilon}]\in S_1,[\cs{\delta}]\in S_2}
\fint_{S_1}a\fint_{S_2}a\cdot\langle \cs{\varepsilon}|\Ba_k^{t_1}|\cs{\delta}\rangle\langle\cs{\delta}|\Ba_k^{t_2}|\cs{\varepsilon}\rangle\Bigg|+\\
&\hspace{1cm}+2\sqrt{2}\|a\|_\Lip D^{-r}\|a\|_\infty\sum_{[\cs{\varepsilon}],[\cs{\delta}]\in\mathcal R_{k,\ell}}|\langle \cs{\varepsilon}|\Ba_k^{t_1}|\cs{\delta}\rangle\langle\cs{\delta}|\Ba_k^{t_2}|\cs{\varepsilon}\rangle|.
\end{align*}
The first term of the right hand side is bounded by applying the triangle inequality to the sum over $S_1,S_2\in\mathscr S_r$ only, followed by \eqref{eqn:ptrace-bound2}, yielding
\begin{align*}
\sum_{S_1,S_2\in\mathscr S_r}\Bigg|\fint_{S_1}a\fint_{S_2}a \Bigg|\Bigg|\sum_{[\cs{\varepsilon}]\in S_1,[\cs{\delta}]\in S_2}\langle \cs{\varepsilon}|\Ba_k^{t_1}|\cs{\delta}\rangle\langle\cs{\delta}|\Ba_k^{t_2}|\cs{\varepsilon}\rangle\Bigg|
&\le (D^{2r})^2\|a\|_\infty^2 \mathfrak b.\numberthis
\end{align*}
For the second term, applying the Cauchy--Schwarz inequality gives 
\begin{align*}
\sum_{[\cs{\varepsilon}],[\cs{\delta}]\in\mathcal R_{k,\ell}}&|\langle \cs{\varepsilon}|\Ba_k^{t_1}|\cs{\delta}\rangle\langle\cs{\delta}|\Ba_k^{t_2}|\cs{\varepsilon}\rangle| \\
&\le \sum_{[\cs{\varepsilon}]\in\mathcal R_{k,\ell}}
\left(\sum_{[\cs{\delta}]\in\mathcal R_{k,\ell}}|\langle\cs{\varepsilon}|\Ba_k^{t_{1}}|\cs{\delta}\rangle|^2\right)^{1/2}\left(\sum_{[\cs{\delta}]\in\mathcal R_{k,\ell}}|\langle\cs{\delta}|\Ba_k^{t_2}|\cs{\varepsilon}\rangle|^2\right)^{1/2}\\
&\le D^k\cdot 1\cdot 1=D^k,\numberthis
\end{align*}
using that $\{|\cs{\delta}\rangle\}$ forms an orthonormal basis and $\Ba_k^t$ is unitary. This gives Lemma~\ref{lem:a-cancellation}(ii).
\end{proof}

\begin{proof}[Proof of Proposition~\ref{thm:trace-cancel}]
Proposition~\ref{thm:trace-cancel} follows from Lemma~\ref{lem:a-cancellation}, using Lemma~\ref{lem:trace-part} to obtain the bound $\mathfrak{b}=G_{D,k}(t)=D^{\gcd([t]_k,k)}$, and using Lemma~\ref{lem:trace-part2} to obtain the better estimates for $t_1=\pm k$ and $[t_1+t_2]_k=0$ with $t_1+t_2\ne0\;\mathrm{mod}\;q(k)$.
\end{proof}

\section{Individual variance and convergence}\label{sec:tr-ind}

In this section, we apply the phase cancellation lemmas from Section~\ref{sec:phases} to prove that for $\min(\ell,k-\ell)\ge 3\log_Dk$,
\begin{align}
\E_\omega F_j&=\begin{cases}o(1),&D\ne 4\\(-1)^{\alpha_j}\langle a_0\rangle_{\mathcal B^{(4)}_{0,2}}+o(1),&D=4\end{cases},\label{eqn:mean0}\\
\E_\omega F_j^2&=\begin{cases}V(a)+o(1),&D\ne 4\\
V(a)+\langle a_0\rangle_{\mathcal B^{(4)}_{0,2}}^2+o(1),&D=4
\end{cases}, \label{eqn:var2} 
\end{align}
which imply \eqref{eqn:center} and \eqref{eqn:e-var} of Theorem~\ref{thm:ind}. 
Once we have established the convergence of individual $\E_\omega F_j^2$, observe that if all the $F_j$ were independent, then the convergence of the quantum variance \eqref{eqn:var} would follow immediately from e.g. Markov's inequality and a constant bound on $\E_\omega F_j^4$.
Since the $F_j$ are not independent due to orthogonality requirements of the eigenvectors, we will do some explicit expectation value calculations in Section~\ref{subsec:conv-var} to prove the desired convergence of the quantum variance.

Recall from Section~\ref{subsec:weingarten} that
\begin{align*}
\E F_j &= \frac{q(k
)}{\sqrt{N}}\Tr(\Opkl(a_0)P_\jj)(1+o(1)),\\
\E F_j^2&= \frac{q(k)^2}{N}\left[(\Tr \Opkl(a_0) P_\jj)^2 + \Tr(\Opkl(a_0) P_\jj \Opkl(a_0)P_\jj)\right](1+o(1)),
\end{align*}
for $\alpha=\alpha_j$ the corresponding eigenspace index.
Then it is enough to show the following.

\begin{thm}\label{thm:trace}
Suppose
$\min(\ell,k-\ell)\ge 3\log_D k$.
Then for any $\jj=0,\ldots,q(k)-1$, as $k\to\infty$,
\begin{align}\label{eqn:trap}
\Tr(\Opkl(a_0)P_\jj)&=\begin{cases}
o\Big(\frac{\sqrt{N}}{q(k)}\Big),&D\ne 4\\
\frac{\sqrt{N}}{q(k)}(-1)^{\alpha}\langle a_0\rangle_{\mathcal B^{(4)}_{0,2}}+o\Big(\frac{\sqrt{N}}{q(k)}\Big),&D=4
\end{cases},
\end{align}
and
\begin{align}\label{eqn:trapap}
\Tr(\Opkl(a)P_\jj \Opkl(a)P_\jj)&=\frac{N}{q(k)^2}\left[\sum_{t=-\infty}^\infty \int_{\T^2}a(\x)a(B^t\x)\,d\x +\mathbf{1}_{D\ge3}\int_{\T^2}a(\x)a(B^tR\x)\,d\x+o_a(1)\right],
\end{align}
where $R$ is the map $(q,p)\mapsto(1-q,1-p)$. All error terms are uniform in the eigenspace index $\jj=0,\ldots,q(k)-1$.
\end{thm}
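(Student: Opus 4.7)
The plan is to expand both traces by the spectral projector formula \eqref{eqn:ppoly0} into weighted sums of $\Tr(\Opkl(a_0)\Ba_k^t)$ and $\Tr(\Opkl(a)\Ba_k^t\Opkl(a)\Ba_k^s)$, and then apply Proposition~\ref{thm:trace-cancel} together with Proposition~\ref{prop:walsh-powers} and Lemma~\ref{lem:intsum}. Throughout I choose the auxiliary parameter $r=3\log_D k$, which is admissible by the hypothesis on $\ell$.

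For part (i), the $t=0$ contribution vanishes because $a_0$ is centered. For $D\ge3$ the term $t=-2k$ is isolated and evaluated exactly as in the proof of Proposition~\ref{prop:averages}(i), see \eqref{eqn:trace2k-final}: it contributes $o(\sqrt{N})$ if $D\ne4$ and $\sqrt{N}\langle a_0\rangle_{\mathcal B^{(4)}_{0,2}}+o(\sqrt{N})$ if $D=4$, and when multiplied by the phase $e^{2\pi i\jj(-2k)/q(k)}=(-1)^\jj$ this produces the $(-1)^\alpha$ factor in \eqref{eqn:trap}. The remaining $t$ I split by the value of $\eta(t)$: when $\eta(t)\le k-3\log_D k$, Proposition~\ref{prop:walsh-powers} gives the trivial bound $|\Tr(\Opkl(a_0)\Ba_k^t)|\le D^{\eta(t)/2}\|a_0\|_\infty$, whose geometric sum over this regime is $O(D^{k/2}k^{-3/2})$; when $\eta(t)>k-3\log_D k$, there are only $O(\log k)$ such values, handled by Proposition~\ref{thm:trace-cancel}, using the improved bound $G_{D,k}(\pm k)=D^{k/4}$ for $t=\pm k$, and otherwise $G_{D,k}(t)\le D^{3\log_D k}=k^3$ since $\gcd([t]_k,k)\le [t]_k\le 3\log_D k$.

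For part (ii), split $\sum_{t,s}$ by whether $t+s\equiv0\pmod{q(k)}$. As in Section~\ref{sec:tr-exp}, the diagonal sum collapses to $\sum_t\Tr(\Opkl(a)\Ba_k^t\Opkl(a)\Ba_k^{-t})$, to which Lemma~\ref{lem:intsum} directly applies and produces the classical $V(a)$ after dividing by $N$. For the off-diagonal part I invoke Proposition~\ref{thm:trace-cancel} termwise: the delicate subcase $[t+s]_k=0$ with $t+s\not\equiv 0\bmod q(k)$ covers only $O(1)$ distinct values of $t+s$, each hit by $O(k)$ pairs $(t,s)$, and uses the improved $G_{D,k}\in\{1,2^k\}$---the $2^k$ for even $D\ge4$ is tamed either by the factor $(2/D)^k$ (exponentially small for $D\ge6$) or, for $D=4$, by the fact that $D^{4r}\cdot 2^k=k^{12}\sqrt{N}$ is $o(N)$; the generic subcase $[t+s]_k\ne0$ uses $\gcd([t+s]_k,k)\le k/2$ (largest proper divisor) to bound the summed $D^{\gcd}$ by $O(k\,D^{k/2})$, yielding $O(k^{14}D^{-k/2})=o(1)$ after division by $N$.

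The main obstacle is that the trivial absolute-value bound on $\Tr(\Opkl(a_0)\Ba_k^t)$ is already of order $\sqrt{N}$ near the Ehrenfest time $t\approx\pm k$, so naive summation over a full quantum period yields $O(1)$ rather than the desired $o(1)$. Proposition~\ref{thm:trace-cancel}, built on the phase-cancellation lemmas of Section~\ref{sec:phases}, is designed precisely to defeat this near-Ehrenfest obstruction, and the hypothesis $\min(\ell,k-\ell)\ge 3\log_D k$ is the minimal slack needed to make the parameter choice $r=3\log_D k$ admissible, which is the smallest value that simultaneously controls the $k^c$-sized polynomial losses from the $O(\log k)$-sized window of near-Ehrenfest times and the Lipschitz remainder $D^{k/2-r}/\sqrt{N}=k^{-3}$.
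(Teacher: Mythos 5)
Your proposal is correct and follows essentially the same route as the paper: expand via the spectral projector formula \eqref{eqn:ppoly0}, isolate $t=0$ and $t=-2k$, use the trivial bound of Proposition~\ref{prop:walsh-powers} away from the Ehrenfest time, invoke Proposition~\ref{thm:trace-cancel} with $r=3\log_Dk$ near it, and use Lemma~\ref{lem:intsum} for the diagonal pairs in part (ii); the only difference is that you take the ``dense'' window in part (i) to have width $O(\log_D k)$ around $\pm k$ whereas the paper uses width $k/4$ and the bound $\gcd([t]_k,k)\le k/3$ there, and both parameter choices close. One small repair: for $t$ just \emph{below} $\pm k$ in your window one has $[t]_k$ close to $k$ rather than close to $0$, so the chain ``$\gcd([t]_k,k)\le[t]_k\le 3\log_Dk$'' fails as written; the conclusion $G_{D,k}(t)\le k^3$ still holds via $\gcd([t]_k,k)=\gcd(k-[t]_k,k)\le k-[t]_k\le 3\log_Dk$.
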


\begin{proof}
First we will prove \eqref{eqn:trap}. By \eqref{eqn:ppoly0} we can expand
\begin{align}
\Tr(\Opkl(a_0)P_\jj)&=\frac{1}{q(k)}\sum_{\substack{t=-q(k)/2\\t\ne0}}^{q(k)/2-1}\Tr(\Opkl(a_0)\Ba_k^t)e^{2\pi i\jj t/q(k)},
\label{eqn:tr-op}
\end{align}
where we used that $\int_{\T^2}a_0=0$ to skip the $t=0$ term.
We consider $D\ge3$, for which $q(k)=4k$, since the proof for $D=2$ will just be a simpler version of the same argument.
For $0<Q<k$ to be determined, let $\mathcal T_Q:=\intbrr{-2k+1:-2k+Q}\cup\intbrr{-Q:Q}\cup\intbrr{2k-Q:2k-1}$, which corresponds to when the matrices $\Ba_k^t$ are most sparse (Fig.~\ref{fig:dense}), though excluding $t=-2k$ which we will handle later.
Expanding the trace in the $(k,\ell)$-coherent state basis and applying a simple absolute value bound using Proposition~\ref{prop:walsh-powers} gives
\begin{align*}
\left|\frac{1}{q(k)}\sum_{\substack{t\in \mathcal T_Q\\t\ne0}}\sum_{[\cs{\varepsilon}]\in\mathcal R_{k,\ell}}\langle\cs{\varepsilon}|\Ba_k^t|\cs{\varepsilon}\rangle e^{2\pi i\jj t/q(k)}\fint_{[\cs{\varepsilon}]}a_0\right|
&\le \frac{2}{q(k)}\sum_{\substack{t=-Q\\t\ne0}}^{Q}D^{\eta(t)}D^{-\eta(t)/2} \|a_0\|_\infty\\
&\le \frac{C\|a_0\|_\infty}{q(k)}D^{Q/2}.\numberthis\label{eqn:Qbound}
\end{align*}
Our target bound is lower order than $D^{k/2}/q(k)$, so we can safely take up to e.g. $Q=3k/4$.
Outside this time, we will consider the phases of the matrix elements of $\Ba_k^t$. 

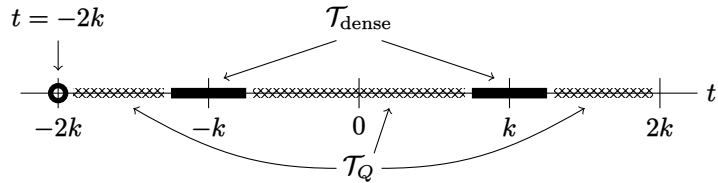
\begin{figure}[htb]
\tp{\begin{tikzpicture}
\draw(-4.5,0)--(4.5,0) node[right] {$t$};
\def\h{.2cm};
\draw (-4,\h)--(-4,-\h) node[below] {$-2k$};
\draw (4,\h)--(4,-\h) node[below] {$2k$};
\draw (-2,\h)--(-2,-\h) node[below] {$-k$};
\draw (2,\h)--(2,-\h) node[below] {$k$};
\draw (0,\h)--(0,-\h) node[below] {$0$};
\draw[line width=2] (-4,0) circle (1mm);
\draw[line width=4] (-2.5,0)--(-1.5,0);
\draw[line width=4] (1.5,0)--(2.5,0);
\node at (0,1) {$\mathcal T_\mathrm{dense}$};
\draw[->] (.3,.7)--(1.8,.2);
\draw[->] (-.3,.7)--(-1.8,.2);
\node at (-4,1) {$t=-2k$};
\draw[->] (-4,.7)--(-4,.3);
\def\sh{.6mm}
\fill[pattern=crosshatch] (-3.8,-\sh) rectangle (-2.6,\sh);
\fill[pattern=crosshatch] (-1.4,-\sh) rectangle (1.4,\sh);
\fill[pattern=crosshatch] (2.6,-\sh) rectangle (3.9,\sh);
\node at (0,-1) {$\mathcal T_Q$};
\draw[->] (.2,-.8)--(.4,-.2);
\draw[->] (.3,-1) to[out=0, in=210] (3,-.2);
\draw[->] (-.3,-1) to[out=180, in=330] (-3,-.2);
\end{tikzpicture}}
\caption{Three regions of times $t$: the set $\mathcal T_Q$ when $\Ba_k^t$ is sparse (and $t\ne -2k$), the set $\mathcal T_\mathrm{dense}$ where $\Ba_k^t$ is dense, and the time $t=-2k$.}\label{fig:dense}
\end{figure}

It remains to consider the times $t\in\intbrr{-5k/4:-3k/4}\cup\intbrr{3k/4:5k/4}=:\mathcal T_\mathrm{dense}$, along with $t=-2k$ (Fig.~\ref{fig:dense}). The times $\mathcal T_\mathrm{dense}$ are where $\Ba_k^t$ is a ``dense'' matrix.
We want to show that for $t\in\mathcal T_\mathrm{dense}$, the phases cancel enough to make their contribution to \eqref{eqn:tr-op} small.
For this, we will use Proposition~\ref{thm:trace-cancel}, which states for $t\not\in q(k)\Z$ and $r\in\intbrr{0:\min(\ell,k-\ell)}$,
\begin{align}\label{eqn:thm42}
|\Tr(\Opkl(a)\Ba_k^{t})|\le
\|a\|_\infty D^{2r}G_{D,k}(t)+\|a\|_\mathrm{Lip}\sqrt{2}D^{k/2-r},
\end{align}
for
\begin{align}
G_{D,k}(t)&=\begin{cases}
D^{\operatorname{gcd}([t]_k,k)},&\text{ any }t\not\in q(k)\Z\\
D^{k/4},&t=\pm k
\end{cases}.
\end{align}
\begin{itemize}
\item For $t\in\intbrr{-5k/4:-k-1}\cup\intbrr{3k/4:k-1}$, we have $[t]_k\in\intbrr{3k/4:k-1}$. Then $\gcd([t]_k,k)\le [t]_k/3$, since the gcd cannot be $[t]_k$ or $[t]_k/2$, as they are too large: if the gcd were $x=[t]_k/2$, then $2x=[t]_k<k$, but $3x\ge 9k/8>k$, so $x$ does not divide $k$. Similarly if $x=[t]_k$, then $2x\ge 3k/2>k$.

\item For $t\in\intbrr{-k+1:-3k/4}\cup\intbrr{k+1:5k/4}$, we have $[t]_k\in\intbrr{1:k/4}$, and so $\gcd([t]_k,k)\le [t]_k\le k/4$.
\end{itemize}
Thus for all $t\in\mathcal T_\mathrm{dense}$,
\begin{align*}
G_{D,k}(t)&\le D^{k/3},
\end{align*}
which by \eqref{eqn:thm42} gives the bound
\begin{align}
\left|\frac{1}{q(k)}\sum_{t\in\mathcal T_\mathrm{dense}}\Tr(\Opkl(a_0)\Ba_k^t)e^{2\pi i\jj t/q(k)}\right|&\le \frac{C}{q(k)}k\left[\|a\|_\infty D^{2r+k/3}+\|a\|_\Lip D^{k/2-r}\right]=o\left(\frac{D^{k/2}}{q(k)}\right),
\end{align}
if we take $r=3\log_Dk$.

The only remaining time for $D\ge3$ is $t=-2k$. (In the case $D=2$, the proof is the same as the above, but only considering times $t=-q(k)/2=-k$ to $t=q(k)/2-1=k-1$, which are split into $\mathcal T_Q$ and $\mathcal T_\mathrm{dense}$ but with no need to consider the separate time $t=-2k$.) Equation~\eqref{eqn:tr-op} for $D\ge3$ at this point has become
\begin{align}\label{eqn:tr2k}
\Tr(\Opkl(a_0)P_\jj)&=o\left(\frac{D^{k/2}}{q(k)}\right)+\frac{1}{q(k)}\Tr(\Opkl(a_0)\Ba_k^{-2k})(-1)^\jj.
\end{align}
Using the previous evaluation of $\Tr(\Opkl(a_0)\Ba_k^{-2k})$ in \eqref{eqn:trace2k-final}, we obtain \eqref{eqn:trap}.

To prove \eqref{eqn:trapap}, we again start with the expression for $P_\jj$ in \eqref{eqn:ppoly0}, giving
\begin{align}\label{eqn:trap2-expand}
\Tr(\Opkl(a)P_\jj \Opkl(a)P_\jj)&=\frac{1}{q(k)^2}\sum_{t_1,t_2=-q(k)/2}^{q(k)/2-1}\Tr(\Opkl(a)\Ba_k^{t_1}\Opkl(a)\Ba_k^{t_2})e^{2\pi i\jj(t_1+t_2)/q(k)}.
\end{align}
By \eqref{eqn:inf-int} of Lemma~\ref{lem:intsum}, which is independent of $\jj$,
we know that the terms when $t_1+t_2=0\;\mathrm{mod}\;q(k)$ give the desired leading order term $\frac{N}{q(k)^2}V(a)$. 
Therefore, we just need to show the sum over $t_1,t_2$, with $t_1+t_2\ne0\;\mathrm{mod}\;q(k)$, is subleading order.
We note that trying a simple absolute value bound does not work: if $t_1=t_2=k$, then $\sum_{[\cs{\varepsilon}],[\cs{\delta}]}|\langle\cs{\varepsilon}|\Ba_k^k|\cs{\delta}\rangle|^2=D^k$ is already too large to be subleading order.

We instead apply Proposition~\ref{thm:trace-cancel}. The function $G_{D,k}(t_1+t_2)$ in \eqref{eqn:tr-bound2} of Proposition~\ref{thm:trace-cancel} can be taken as
\begin{align}
G_{D,k}(t_1+t_2)&\le \begin{cases}
D^{\operatorname{gcd}([t_1+t_2],k)},&t_1+t_2\ne 0\;\mathrm{mod}\;k\\
D^{k/2},&t_1+t_2=0\;\mathrm{mod}\;k,\;t_1+t_2\ne 0\;\mathrm{mod}\;q(k)
\end{cases},
\end{align}
since \eqref{eqn:tr-boundp2} is always $\le D^{k/2}$.
Additionally, if $[t_1+t_2]_k\ne 0$, then $\gcd([t_1+t_2]_k,k)<k$ and so must be $\le k/2$. 
Proposition~\ref{thm:trace-cancel} then implies
\begin{align}\label{eqn:tr2-error}
\frac{1}{q(k)^2}\sum_{\substack{t_1,t_2=-q(k)/2\\ t_1+t_2\ne 0\;\mathrm{mod}\;q(k)}}^{q(k)/2-1}|\Tr(\Opkl(a)\Ba_k^{t_1}\Opkl(a)\Ba_k^{t_2})|&\le C\|a\|_\infty^2D^{4r}D^{k/2}+C\|a\|_\mathrm{Lip}\|a\|_\infty D^{k-r}.
\end{align}
Taking $r=3\log_Dk$ and using that $q(k)=2k$ or $4k$, we see \eqref{eqn:tr2-error} is $o(D^k/q(k)^2)$.
\end{proof}

\subsection{Quantum variance convergence}\label{subsec:conv-var}

To show convergence in probability of the (scaled) quantum variance to $V(a)+\oneb_{D=4}\langle a_0\rangle_{\mathcal B^{(4)}_{0,2}}^2$, note that by \eqref{eqn:var2} and Markov's inequality,
\begin{align*}
\P_\omega\Bigg[\bigg|\frac{1}{N}\sum_{j=1}^N|F_j|^2&-\Big(V(a)+\oneb_{D=4}\langle a_0\rangle_{\mathcal B^{(4)}_{0,2}}^2\Big)\bigg|>\epsilon\Bigg]\\
&\le \frac{1}{\epsilon^2}\E_\omega\left[\Bigg(\frac{1}{N}\sum_{j=1}^N|F_j|^2-\Big(V(a)+\oneb_{D=4}\langle a_0\rangle_{\mathcal B^{(4)}_{0,2}}^2\Big)\Bigg)^2\right]\\
&=\frac{1}{\epsilon^2}\Bigg[\frac{1}{N^2}\sum_{\substack{j,k=1\\j\ne k}}^N\E_\omega[|F_j|^2|F_k|^2]+
\frac{1}{N^2}\sum_{j=1}^N\E_\omega[|F_j|^4]-\Big(V(a)+\oneb_{D=4}\langle a_0\rangle_{\mathcal B^{(4)}_{0,2}}^2\Big)^2+o_a(1)\Bigg].\numberthis\label{eqn:markov}
\end{align*}
Since $F_j$ and $F_k$ are not independent if they correspond to eigenvectors in the same eigenspace, we compute $\E_\omega[|F_j|^2|F_k|^2]$ 
using the Weingarten calculus for integration on the unitary group $\mathcal{U}(d)$. 

Recalling \eqref{eqn:F-unitary}, if $F_j$ and $F_k$ correspond to eigenvectors in $E_\jj$, which has dimension $d_\jj$, then we can write,
\begin{align}\label{eqn:wf}
F_j&\dsim\sqrt{N}\langle u^{(1)}|\Lambda_\jj^\dagger \Opkl(a_0)\Lambda_\jj|u^{(1)}\rangle,\quad
F_k\overset{d}{\sim}\sqrt{N}\langle u^{(2)}|\Lambda_\jj^\dagger\Opkl(a_0)\Lambda_\jj|u^{(2)}\rangle,
\end{align}
where $u^{(1)}$ and $u^{(2)}$ are two columns from a $d_\jj\times d_\jj$ Haar random unitary matrix and $\Lambda_\jj$ is an $N\times d_\jj$ matrix whose columns form an orthonormal basis of the eigenspace $E_\jj$.
As a consequence of Theorem~\ref{thm:trace}, we have:
\begin{cor}\label{lem:trM}
For an eigenspace index $\jj$, let $M=\Lambda_\jj^\dagger \Opkl(a_0)\Lambda_\jj$. Suppose $\min(\ell,k-\ell)\ge 3\log_D k$. Then for any $D\ge2$, as $k\to\infty$,
\begin{align}
\Tr M&=\begin{cases}
o\left(\frac{\sqrt{N}}{q(k)}\right),&D\ne 4\\
\frac{\sqrt{N}}{q(k)}(-1)^{\alpha_j}\langle a_0\rangle_{\mathcal B^{(4)}_{0,2}}+o\left(\frac{\sqrt{N}}{q(k)}\right),&D=4
\end{cases},\label{eqn:trM1}\\
\Tr M^2&=\frac{N}{q(k)^2}(V(a)+o(1)),\label{eqn:trM2}\\
\Tr M^p&=o(N^{p/2}/q(k)^{p}),\quad p=3,4,\ldots,
\end{align}
with error terms uniform over the eigenspace index $\jj$.
\end{cor}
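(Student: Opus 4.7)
The plan is to reduce the statement to Theorem~\ref{thm:trace} via cyclicity of the trace, and then handle the higher powers $p\ge 3$ by a crude but sufficient operator-norm bound against the $p=2$ case.

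First, since $P_\jj = \Lambda_\jj\Lambda_\jj^\dagger$, cyclicity of the trace gives
\[
\Tr M^p = \Tr\big((\Lambda_\jj^\dagger\Opkl(a_0)\Lambda_\jj)^p\big) = \Tr\big((\Opkl(a_0)\Lambda_\jj\Lambda_\jj^\dagger)^p\big) = \Tr\big((\Opkl(a_0)P_\jj)^p\big)
\]
for every $p\ge 1$. The statements on $\Tr M$ and $\Tr M^2$ are therefore exactly \eqref{eqn:trap} and \eqref{eqn:trapap} of Theorem~\ref{thm:trace}, with the same uniform-in-$\jj$ error terms.

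For $p\ge 3$, I would invoke that $M$ is self-adjoint (because $a_0$ is real and $\Opkl(a_0)$ is self-adjoint by its definition in \eqref{eqn:opkl}). Writing the eigenvalues of $M$ as $\lambda_i\in\R$, one has $|\lambda_i|^p \le \|M\|^{p-2}\lambda_i^2$, so
\[
|\Tr M^p| \le \|M\|^{p-2}\Tr M^2.
\]
Since $\Lambda_\jj$ is an isometry on $E_\jj$, we have $\|M\|\le \|\Opkl(a_0)\|\le \|a_0\|_\infty \le 2\|a\|_\infty$. Combined with the $\Tr M^2$ asymptotic already proved, this yields
\[
|\Tr M^p| \le (2\|a\|_\infty)^{p-2}\cdot\frac{N}{q(k)^2}\big(V(a)+o(1)\big).
\]

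Finally I would check that the right-hand side is $o(N^{p/2}/q(k)^p)$, which reduces to checking $q(k)^{p-2}/N^{p/2-1}\to 0$ as $k\to\infty$. Since $q(k)\in\{2k,4k\}$ and $N=D^k$ with $D\ge 2$, the ratio equals $(q(k)^2/N)^{(p-2)/2}=(16 k^2/D^k)^{(p-2)/2}\to 0$ for every fixed $p\ge 3$, and this decay is uniform in the eigenspace index $\jj$ because the bound on $\Tr M^2$ from Theorem~\ref{thm:trace} is. There is no real obstacle here; the substantive work is already contained in Theorem~\ref{thm:trace}, and the only thing to verify is that the crude self-adjoint bound $|\Tr M^p|\le \|M\|^{p-2}\Tr M^2$ is good enough, which it is thanks to the exponential gap between $N$ and $q(k)$.
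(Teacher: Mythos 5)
Your proposal is correct. The core of the argument — cyclicity of the trace to identify $\Tr M^p=\Tr((\Opkl(a_0)P_\jj)^p)$ and then quoting \eqref{eqn:trap} and \eqref{eqn:trapap} of Theorem~\ref{thm:trace} for $p=1,2$ — is exactly what the paper does. The only divergence is in the disposal of $p\ge 3$: you use self-adjointness of $M$ and the eigenvalue inequality $|\Tr M^p|\le\|M\|^{p-2}\Tr M^2$, arriving at a bound of order $N/q(k)^2$, whereas the paper uses the inequality $|\Tr(BP)|\le(\Tr P)\,\|B\|$ for $P\ge0$ to pull out a single spectral projection, giving the (slightly weaker but equally sufficient) bound $\frac{N}{q(k)}(1+o(1))\|a_0\|_\infty^p$. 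Both are crude norm estimates and both close the exponential gap $q(k)^{O(1)}/N^{p/2-1}\to0$ for $p\ge3$, so your argument is complete for the corollary as stated. One thing the paper's variant buys that yours does not: it is phrased for general products $\Tr\big[\prod_{i=1}^p(\Opkl(a)P_{\jj(i)})\big]$ with possibly distinct eigenspace indices $\jj(i)$, which is reused in Section~\ref{sec:off-diag} for the off-diagonal ETH moments; your self-adjointness trick applies only to powers of a single $M$ and would not directly cover that mixed-eigenspace case.
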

\begin{proof}
Equations \eqref{eqn:trM1} and \eqref{eqn:trM2} follow immediately from Theorem~\ref{thm:trace} and cyclicity of the trace.

For $\Tr M^p$ with $p\ge3$, in view of Section~\ref{sec:off-diag} we will consider slightly more general products, where we have projections onto different eigenspaces indexed by $\jj(i)\in\intbrr{0:q(k)-1}$. We have
\begin{align}\label{eqn:tr-prod}
\Tr\left[\prod_{i=1}^p(\Opkl(a)P_{\jj(i)})\right]=O\left(\frac{N}{q(k)}\right),
\end{align}
since for a positive operator $P\ge0$ and matrix $B$, 
\begin{align*}
|\Tr(BP)|&\le (\Tr P) \|B\|, 
\end{align*}
where $\|B\|$ is the operator norm. Applying this once to \eqref{eqn:tr-prod} to pull out a projection $P_\jj$, and using $\|\Opkl(a)\|\le\|a\|_\infty$ and $\Tr P_{\jj(i)}=\frac{N}{q(k)}(1+o(1))$, gives the bound $\frac{N}{q(k)}(1+o(1))\|a_0\|_\infty^m$, which is
 $o(N^{p/2}/q(k)^p)$ for $p\ge3$. 
\end{proof}

\begin{proof}[Proof of Theorem~\ref{thm:mat}'s Eq.~\eqref{eqn:var} and \eqref{eqn:var4}]
Let $M=\Lambda_\jj^\dagger \Opkl(a_0)\Lambda_\jj$ as in the above lemma.
Recall we are bounding \eqref{eqn:markov}. 
Let $d=d_\jj$, which we recall from \eqref{eqn:dim} is $\frac{N}{q(k)}(1+o(1))$. Using \eqref{eqn:wf} we have for $j\ne k$,
\begin{align}\label{eqn:22expansion}
\E_\omega[|F_j|^2|F_k|^2] &= N^2\sum_{i_1,i_2,i_3,i_4=1}^{d_\jj}\sum_{i_1'i_2'i_3'i_4'=1}^{d_\jj}M_{i_1'i_1}\ol{M}_{i_2i_2'}M_{i_3'i_3}\ol{M}_{i_4i_4'}\E\Big[\ol{u^{(1)}_{i_1'}}u^{(1)}_{i_1}u^{(1)}_{i_2}\ol{u^{(1)}_{i_2'}}\ol{u^{(2)}_{i_3'}}u^{(2)}_{i_3}u^{(2)}_{i_4}\ol{u^{(2)}_{i_4'}}\Big].
\end{align}
To compute the expectation, we can use the Weingarten calculus Theorem~\ref{thm:weingarten} with $n=4$.
We only care about leading order in $d$ for which \eqref{eqn:weingarten-asymptotics} will be sufficient.
It will turn out there is only one pair $(\sigma,\tau)$ for $D\ne4$, and two pairs for $D=4$, which can contribute to the leading order.
First, in terms of the indices in \eqref{eqn:weingarten}, we have $j_1=j_2=j_1'=j_2'=1$, which is a different value from $j_3=j_4=j_3'=j_4'=2$. Therefore when considering the permutation $\tau\in S_4$, the only nonzero contributions come from $\tau$ that map $\{1,2\}$ to itself, and $\{3,4\}$ to itself. This gives only 4 possible values of $\tau$ (corresponding to $S_2\times S_2$); written in cycle notation these are $\tau\in\{\operatorname{Id}, (12),(34),(12)(34)\}$.

Second, we can also reduce the number of possible $\sigma$ by considering the contributions from $M$. 
Using that $M$ is self-adjoint since $a_0$ is real, evaluating the expectation in \eqref{eqn:22expansion} using Theorem~\ref{thm:weingarten} gives
\begin{align}\label{eqn:22exp2}
\E_\omega[|F_j|^2|F_k|^2] &=N^2\sum_{\tau\in S_2\times S_2}\sum_{\sigma\in S_4}\left(\sum_{i_1,i_2,i_3,i_4=1}^{d_\jj}M_{i_{\sigma^{-1}(1)}i_1}M_{i_{\sigma^{-1}(2)}i_2}M_{i_{\sigma^{-1}(3)}i_3}M_{i_{\sigma^{-1}(4)}i_4}\right)\Wg(\tau\sigma^{-1}).
\end{align}
The cycle shape of $\sigma$ determines the value of the sum over entries of $M$, which is enclosed above in large parentheses. The possible values are $\Tr(M)^4$ for $\sigma$ with cycle shape $[1,1,1,1]$ (the identity), $\Tr(M)^2\Tr(M^2)$ for $[2,1,1]$, $(\Tr M^2)^2$ for $[2,2]$, $\Tr(M)\Tr(M^3)$ for $[3,1]$, and $\Tr(M^4)$ for $[4]$. 
By Corollary~\ref{lem:trM}, for $D\ne4$ all of the terms except for the one from shape $[2,2]$ are $o(N^2/q(k)^4)$.
For $[2,2]$, the term is $(\Tr M^2)^2=\frac{N^2}{q(k)^4}(V(a)^2+o(1))$.
For $D=4$, we must consider $\sigma$ with shapes $[2,2]$, $[1,1,1,1]$, or $[2,1,1]$ as well.

Next, by \eqref{eqn:weingarten-asymptotics} the Weingarten function $\Wg(\tau\sigma^{-1})$ has largest order $1/d^4$ only for $\tau\sigma^{-1}=\operatorname{Id}$; the rest are $O(d^{-5})$.
So for $D\ne4$, the largest order term over all $(\tau,\sigma)$ in \eqref{eqn:22exp2} occurs when $\sigma$ has cycle shape $[2,2]$ and $\tau=\sigma$. This occurs only when $\sigma=\tau=(12)(34)$.
For $D=4$, the largest order terms occur when $\sigma$ has shape $[2,2]$, $[1,1,1,1]$, or $[2,1,1]$, and $\tau=\sigma$. This occurs for any of the four choices of $\sigma=\tau\in S_2\times S_2$.
Thus we have
\begin{align*}
\E_\omega[|F_j|^2|F_k|^2]&= N^2\left[\frac{(\Tr M^2)^2}{d^4}+\frac{\oneb_{D=4}\left[(\Tr M)^4+2(\Tr M)^2\Tr(M^2)\right]}{d^4}+o\left(\frac{N^2}{q(k)^4d^4}\right)\right]\\
&=V(a)^2+\oneb_{D=4}\Big[\dfoura^4+2\dfoura^2 V(a)\Big]+ o(1)\\
&=\Big(V(a)+\oneb_{D=4}\dfoura^2\Big)^2+o(1),\numberthis\label{eqn:fcov}
\end{align*}
using Corollary~\ref{lem:trM} and the eigenspace dimensions \eqref{eqn:dim} for the second equality.

For \eqref{eqn:markov}, it remains to evaluate $\E[|F_j|^4]$. From similar considerations using Weingarten calculus and Corollary~\ref{lem:trM} as above (this time with any $\tau\in S_4$ and $\sigma=\tau$), we obtain
\begin{align}
\E_\omega[|F_j|^4] &= O\left(N^2\frac{N^2}{q(k)^4}\frac{1}{d^4}\right)\le C. 
\end{align}
Using \eqref{eqn:fcov} and the above inequality in \eqref{eqn:markov} yields
\begin{multline}
\P_\omega\left[\bigg|\frac{1}{N}\sum_{j=1}^N|F_j|^2-\Big(V(a)+\oneb_{D=4}\langle a_0\rangle_{\mathcal B^{(4)}_{0,2}}^2\Big)\bigg|>\epsilon\right]\\
\le\frac{1}{\epsilon^2}\left[\Big(1-\frac{1}{N}\Big)\Big(V(a)+\oneb_{D=4}\langle a_0\rangle_{\mathcal B^{(4)}_{0,2}}^2\Big)^2+\frac{1}{N}\E[|F_j|^4]-\Big(V(a)+\oneb_{D=4}\langle a_0\rangle_{\mathcal B^{(4)}_{0,2}}^2\Big)^2+o(1)\right],
\end{multline}
which goes to zero as $N\to\infty$.
This finishes the proof of \eqref{eqn:var} and \eqref{eqn:var4}.
\end{proof}

\section{Higher moments and asymptotic distribution}\label{sec:exp}

In this section, we finish the proof of Theorem~\ref{thm:ind}, and use it to prove the weak convergence in probability to a Gaussian distribution of Theorem~\ref{thm:mat}. For the higher moments it will be easier to work with the recentered matrix element fluctuations
\begin{align*}
\tilde F_j^{(k)}:=\begin{cases}F_j^{(k)},&D\ne4\\F_j^{(k)}-(-1)^{\jj_j}\langle a_0\rangle_{\mathcal{B}^{(4)}_{0,2}},&D=4\end{cases},
\end{align*} 
as defined in \eqref{eqn:tilde}.

\subsection{Higher moments}\label{subsec:moments}
In order to compute the higher moments of $\tilde F_j^{(k)}$, as usual we will use the Weingarten calculus \eqref{eqn:weingarten-asymptotics}. 
Since we work with $\tilde F_j^{(k)}$ instead of $F_j^{(k)}$, it will be useful to define for a chosen eigenspace index $\jj$ the $d_\jj\times d_\jj$ matrix
\begin{align}\label{eqn:M}
\tilde M&=\begin{cases}
\Lambda_\jj^\dagger \Opkl(a_0)\Lambda_\jj,&D\ne4\\
\Lambda_\jj^\dagger \Opkl(a_0)\Lambda_\jj-\frac{1}{\sqrt{N}}(-1)^{\jj}\langle a_0\rangle_{\mathcal{B}^{(4)}_{0,2}},&D=4
\end{cases},
\end{align}
where $\frac{1}{\sqrt{N}}(-1)^{\jj_j}\langle a_0\rangle_{\mathcal{B}^{(4)}_{0,2}}$ denotes a scalar multiple of the $d_\jj\times d_\jj$ identity matrix.
Since $a_0$ is real-valued, $\tilde M$ is self-adjoint.
Then $\tilde F_j^{(k)}$ is distributed as
\begin{align*}
\tilde F_j^{(k)}\dsim \sqrt{N}\langle u|\tilde M|u\rangle,
\end{align*}
for $u$ a Haar random vector.

\begin{proof}[Proof of the remainder of Theorem~\ref{thm:ind}]

For $\jj=\jj_j$ the eigenspace index corresponding to $j$, let $\tilde M$ be the matrix defined in \eqref{eqn:M}. 
By Weingarten calculus,
\begin{align*}
\E_\omega[(\tilde F_j^{(k)})^p]&= N^{p/2}\sum_{\substack{i_1,\ldots,i_p=1\\i_1',\ldots,i_p'=1}}^d \tilde M_{i_1'i_1}\cdots \tilde M_{i_p'i_p}\E[\bar u_{i_1'}\cdots\bar u_{i_p'}u_{i_1}\cdots u_{i_p}]\\
&= N^{p/2}\!\!\!\!\sum_{\substack{\sigma\text{ cycle shape }[c_1,\ldots,c_m]\\c_1+\cdots+c_m=p}}\!\!\!\! \#\{\sigma\in S_p\text{ of cycle shape }[c_1,\ldots,c_m]\}\Tr(\tilde M^{c_1})\cdots\Tr(\tilde M^{c_m})\left(\sum_{\tau\in S_p}\Wg(\tau)\right).
\end{align*}
Using \eqref{eqn:weingarten-asymptotics} on the leading order asymptotics of the Weingarten function, we see that
\begin{align}
\sum_{\tau\in S_p}\Wg(\tau)&=\frac{1}{d^p}(1+o(1)),
\end{align}
with the leading order term coming from $\Wg(\operatorname{Id})$.
For $\E[(\tilde F_j^{(k)})^p]$ to be non-decaying, we need to identify when $\Tr(\tilde M^{c_1})\cdots\Tr(\tilde M^{c_m})$ is at least order $d^p/N^{p/2}=N^{p/2}/q(k)^p(1+o(1))$.
If $p$ is even and $c_1=\cdots=c_m=2$, then by Corollary~\ref{lem:trM}, we have
\begin{align*}
(\Tr(\tilde M^2))^{p/2}&=V(a)^{p/2}\frac{N^{p/2}}{q(k)^p}(1+o(1)).
\end{align*}
However, if any $c_j\ne 2$, then Corollary~\ref{lem:trM} shows that $\Tr(\tilde M^{c_1})\cdots\Tr(\tilde M^{c_m})$ must be $o(N^{p/2}/q(k)^p)$, using the constraint $c_1+\cdots+c_m=p$.
Therefore if $p$ is odd, then $\E[(\tilde F_j^{(k)})^p]=o(1)$, and if $p$ is even,
\begin{align*}
\E_\omega[(\tilde F_j^{(k)})^p]&=\#\{\tau\in S_p\text{ of cycle type }[2,\ldots,2]\}V(a)^{p/2}+o(1).
\end{align*}
The number of cycles in $S_p$ of shape $[2,\ldots,2]$ is
\begin{align*}
\frac{p!}{2^{p/2}(p/2)!}=(p-1)!!,
\end{align*}
which is exactly the $p$th moment (for $p$ even) of a standard normal random variable, proving \eqref{eqn:moments}.

Equations~\eqref{eqn:clt} and \eqref{eqn:clt4} then follow by the method of moments (e.g. \cite[\S3.3]{Durrett}).
\end{proof}

\subsection{Asymptotic Gaussian distribution}\label{subsec:gaussian}
To show weak convergence in probability of the empirical distribution $\mu_k=\frac{1}{N}\sum_{j=1}^N\delta_{F_j}$, instead of further Weingarten calculus we can show that for each $t\in\R$ the characteristic function $\operatorname{chf}(t)$ of $\mu_k$ converges in probability to that of the appropriate limiting distribution.
For $D\ne4$, the desired limiting distribution is $g\sim\RN(0,V(a))$, which has characteristic function $e^{-t^2V(a)/2}$.
For $D=4$, it will be easiest to consider convergence of the empirical measures for the recentered matrix element fluctuations $\tilde F_j$, separated by parity of their eigenspace index $\alpha_j$. Then for $D=4$ we will show the weak convergences in probability
\begin{align}\label{eqn:tildeFjhalf}
\frac{1}{|\{j:\alpha_j\text{ even}\}|}\sum_{\substack{j=1\\\alpha_j\text{ even}}}^N\delta_{\tilde F_j}&\xrightarrow{w,\P}\RN(0,V(a)),\quad\text{and}\quad
\frac{1}{|\{j:\alpha_j\text{ odd}\}|}\sum_{\substack{j=1\\\alpha_j\text{ odd}}}^N\delta_{\tilde F_j}\xrightarrow{w,\P}\RN(0,V(a)).
\end{align}
By separating into these two sets of eigenspaces, $\tilde F_j$ is just a constant shift of $F_j$ within each set of eigenspaces. Therefore, \eqref{eqn:tildeFjhalf} implies convergence of the empirical distribution for $\{F_j\}$ with $\alpha_j$ even to $\RN(\langle a_0\rangle_{\mathcal{B}^{(4)}_{0,2}},V(a))$, and that for $\{F_j\}$ with $\alpha_j$ odd to $\RN(-\langle a_0\rangle_{\mathcal{B}^{(4)}_{0,2}},V(a))$.
Since the dimension of each eigenspace has the same leading order $N/q(k)$ by \eqref{eqn:dim}, then \eqref{eqn:tildeFjhalf} will imply that $\mu_k$ converges to the mixture of the two Gaussians \eqref{eqn:d4pdf}.
We will thus mainly consider empirical distributions of $\tilde F_j$ rather than of $F_j$ in this section.

By the moment calculations of Theorem~\ref{thm:ind}, in which the error terms can be taken uniform in the subscript $j$ in $\tilde F_j^{(k)}$, the sequence  $\tilde F_1^{(k)},\ldots, \tilde F_{D^k}^{(k)},\tilde F_1^{(k+1)},\ldots,\tilde F_{D^{k+1}}^{(k+1)},\ldots$ converges (with respect to $\P_\omega$) in distribution to $\RN(0,V(a))$. 
In order to handle both $D=4$ and $D\ne4$ together, let $J_k\subseteq\intbrr{1:N}$ be a set of indices. It will the whole set for $D\ne 4$, and all indices $j$ with a certain eigenspace index parity $\alpha_j$ for $D=4$.
Then as $k\to\infty$, the expectation value of the characteristic function $\operatorname{chf}_{J_k}(t)$ of the empirical distribution of $\{\tilde F_j^{(k)}\}_{j\in J_k}$ satisfies
\begin{align}\label{eqn:exp-chf}
\E_\omega[\operatorname{chf}_{J_k}(t)]&=
\frac{1}{|J_k|}\sum_{j\in J_k}\E_\omega[e^{it\tilde F_j^{(k)}}]\to e^{-t^2V(a)/2}.
\end{align}
In anticipation of applying Markov's inequality to show convergence in probability, we next compute 
$\E_\omega[|\operatorname{chf}_{J_k}(t)|^2]=\frac{1}{|J_k|^2}\sum_{i,j\in J_k}\E_\omega[e^{it(\tilde F_j-\tilde F_i)}]$.
The only issue with $\E_\omega[e^{it(\tilde F_j-\tilde F_i)}]$ is when $\tilde F_j$ and $\tilde F_i$ correspond to eigenvectors chosen from the same eigenspace, since then they are not independent. 
We will show
\begin{lem}\label{lem:chf-split}
As $k\to\infty$, for any $i,j\in\intbrr{1:D^k}$ with $i\ne j$,
\begin{align}\label{eqn:chf-split}
\E_\omega[e^{it(\tilde F_j-\tilde F_i)}]&= \E_\omega[e^{it \tilde F_j}]\E_\omega[e^{-it \tilde F_i}]+o_t(1),
\end{align}
where $o_t(1)$ denotes that the $o(1)$ term is allowed to depend on $t$.
\end{lem}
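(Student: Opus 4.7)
The plan is to split into cases based on whether $\varphi^{(k,j)}$ and $\varphi^{(k,i)}$ lie in the same eigenspace. If they are chosen from distinct eigenspaces, then because the Haar measures on different eigenspaces are independent, $\tilde F_j$ and $\tilde F_i$ are independent random variables and \eqref{eqn:chf-split} holds exactly with zero error. The substantive case is when both come from a common eigenspace $E_\alpha$, where orthogonality of the eigenbasis within $E_\alpha$ introduces correlations that must be shown to be asymptotically negligible.

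In the same-eigenspace case, I would write $\tilde F_j \dsim \sqrt{N}\langle u^{(1)}|\tilde M|u^{(1)}\rangle$ and $\tilde F_i \dsim \sqrt{N}\langle u^{(2)}|\tilde M|u^{(2)}\rangle$ where $u^{(1)}, u^{(2)}$ are two columns of a Haar random unitary in $\mathcal U(d_\alpha)$ and $\tilde M$ is as in \eqref{eqn:M}. The core step is then to establish the joint moment factorization
\begin{equation*}
\E_\omega[\tilde F_j^p \tilde F_i^q] = \E_\omega[\tilde F_j^p]\,\E_\omega[\tilde F_i^q] + o_{p,q}(1),
\end{equation*}
uniformly in the eigenspace index. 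By Theorem~\ref{thm:weingarten} this equals
\begin{equation*}
\E_\omega[\tilde F_j^p \tilde F_i^q] = N^{(p+q)/2}\sum_{\sigma\in S_{p+q}} T(\sigma)\sum_{\tau\in S_p\times S_q}\Wg(\tau\sigma^{-1}),
\end{equation*}
where $T(\sigma)=\prod_{c}\Tr(\tilde M^{|c|})$ runs over cycles of $\sigma$, and the constraint $\tau \in S_p\times S_q$ (with $S_p,S_q$ acting on $\{1,\ldots,p\}$ and $\{p{+}1,\ldots,p{+}q\}$ respectively) comes from the column-index deltas, since $p$ of the $u$-factors sit in column $1$ and $q$ in column $2$. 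The top Weingarten weight $\Wg(\operatorname{Id})=d_\alpha^{-(p+q)}(1+o(1))$ requires $\tau=\sigma$ and thus forces $\sigma\in S_p\times S_q$; for such $\sigma=\sigma_0\oplus\sigma_1$, $T(\sigma)=T(\sigma_0)T(\sigma_1)$ factorizes, and the resulting summed contribution matches $\E_\omega[\tilde F_j^p]\E_\omega[\tilde F_i^q]$ to leading order. All remaining $(\sigma,\tau)$ pairs have $\tau\sigma^{-1}\neq\operatorname{Id}$, so by \eqref{eqn:weingarten-asymptotics} the Weingarten factor gains an extra $d_\alpha^{-1}$; combined with the worst-case trace bound $|T(\sigma)|=O(N^{(p+q)/2}/q(k)^{p+q})$ deduced from Corollary~\ref{lem:trM} (attained when all cycles of $\sigma$ have length $2$), each such term is $O(q(k)/N)=o(1)$, and there are only $O_{p,q}(1)$ of them for fixed $p,q$.

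Given the joint moment factorization together with the individual moment convergence \eqref{eqn:e-var}--\eqref{eqn:moments} from Theorem~\ref{thm:ind}, the method of moments (applicable since the Gaussian limit satisfies Carleman's criterion) produces joint distributional convergence of $(\tilde F_j,\tilde F_i)$ to a pair of independent $\RN(0,V(a))$ random variables. Consequently both $\E_\omega[e^{it(\tilde F_j-\tilde F_i)}]$ and $\E_\omega[e^{it\tilde F_j}]\E_\omega[e^{-it\tilde F_i}]$ converge to $e^{-t^2 V(a)}$ as $k\to\infty$, which establishes \eqref{eqn:chf-split}. The principal technical obstacle is the Weingarten bookkeeping in the middle paragraph---verifying that the trace bound from Corollary~\ref{lem:trM} together with the $\Wg$ asymptotics absorb the $O_{p,q}(1)$ subleading pairs---but this is a direct generalization of the $p=q=2$ computation already carried out at the end of Section~\ref{sec:tr-ind} for $\E_\omega[|F_j|^2|F_k|^2]$.
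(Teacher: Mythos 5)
Your proposal is correct, but it takes a genuinely different route from the paper. The paper handles the same-eigenspace case by realizing $(\tilde F_j,\tilde F_i)$ via Gram--Schmidt applied to two \emph{independent} Haar vectors $u,v$, i.e.\ $v'=(v-\langle u|v\rangle u)/\|v-\langle u|v\rangle u\|_2$, and then works on a high-probability event $\Omega_{k,\epsilon}$ where $|\langle u|v\rangle|\le d^{-1/2+\epsilon}$ and all quadratic forms $|\langle w|\tilde M|w'\rangle|\le N^{-1/2+\epsilon}$; on that event the exponent $it\sqrt{N}\langle v'|\tilde M|v'\rangle$ differs from $it\sqrt{N}\langle v|\tilde M|v\rangle$ by $t\,O(N^\epsilon d^{-1/2+\epsilon})=o_t(1)$, so the characteristic function factorizes directly using the genuine independence of $u$ and $v$. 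That argument is soft: it never identifies the limiting distribution and yields explicitly uniform errors. Your route instead proves the joint moment factorization $\E[\tilde F_j^p\tilde F_i^q]=\E[\tilde F_j^p]\E[\tilde F_i^q]+o_{p,q}(1)$ by Weingarten calculus with the column constraint $\tau\in S_p\times S_q$, and then invokes the multivariate method of moments to get joint convergence to independent Gaussians; your bookkeeping (leading pairs $\tau=\sigma\in S_p\times S_q$, subleading pairs controlled by $|T(\sigma)|=O(N^{(p+q)/2}/q(k)^{p+q})$ times an extra $d^{-1}$ from $\Wg$, giving $O(q(k)/N)$ per term) is sound and is indeed the natural extension of the $p=q=2$ computation in Section~\ref{subsec:conv-var}. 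What your approach buys is a stronger conclusion---joint asymptotic independence and Gaussianity of $(\tilde F_j,\tilde F_i)$---at the cost of needing the full moment apparatus of Theorem~\ref{thm:ind} and the limit identification, whereas the paper's decoupling is limit-agnostic. One point you should make explicit: the lemma is summed over $|J_k|^2$ pairs in the application, so the $o_t(1)$ must be uniform in $(i,j)$; this does follow in your setup because all moment error terms are uniform in the eigenspace index (so a subsequence argument upgrades pointwise to uniform convergence of the characteristic functions), but the method of moments as usually stated is a fixed-pair statement and the uniformity deserves a sentence.
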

\begin{proof}
If $\tilde F_j$ and $\tilde F_i$ correspond to different eigenspaces, then they are independent and \eqref{eqn:chf-split} holds with no remainder term.
If $\tilde F_j$ and $\tilde F_i$ come from the same eigenspace $E_\jj$, then let $d=d_\jj=\frac{N}{q(k)}(1+o(1))$ be the eigenspace dimension, and let $u,v\in\C^d$ be two independent Haar-random unit vectors.
Recall the definition of $\tilde M$ from \eqref{eqn:M}.
The distributions of $\tilde F_j$ and $\tilde F_i$ in terms of $u,v$ are given as
\begin{align*}
(\tilde F_j,\tilde F_i)\dsim(\sqrt{N}\langle u|\tilde M|u\rangle,\sqrt{N}\langle v'|\tilde M|v'\rangle),\quad
\text{where }v'=\frac{v-\langle u|v\rangle u}{\|v-\langle u|v\rangle u\|_2},
\end{align*}
using the first steps of the Gram--Schmidt orthonormalization procedure.
Then
\begin{multline*}
\E_\omega[e^{it(\tilde F_j-\tilde F_i)}]=\\
\E\bigg[\exp\bigg[it\sqrt{N}(\langle u|\tilde M|u\rangle-\langle v|\tilde M|v\rangle)
+it\sqrt{N}\langle v|\tilde M|v\rangle\Big(1-\frac{1}{\|v-\langle u|v\rangle u\|_2^2}\Big)\\
+2it\sqrt{N}\frac{\Re[\langle v|\tilde M|u\rangle\langle u|v\rangle]}{\|v-\langle u|v\rangle u\|_2^2}+it\sqrt{N}\frac{|\langle u|v\rangle|^2\langle u|\tilde M|u\rangle}{\|v-\langle u|v\rangle u\|_2^2}\bigg]\bigg].\numberthis\label{eqn:chf-error}
\end{multline*}
The last three terms are generally small, since $\langle u|v\rangle$ is typically small in high dimension $d$.
More precisely, for $0<\epsilon<1/8$, consider the event
\begin{align*}
\Omega_{k,\epsilon}:=\left\{|\langle u|v\rangle|\le d^{-1/2+\epsilon}\text{ and }\max(|\langle u|\tilde M|u\rangle|,|\langle v|\tilde M|v\rangle|,|\langle v|\tilde M|u\rangle|,|\langle u|\tilde M|v\rangle|)\le N^{-1/2+\epsilon} \right\}.
\end{align*}
We can bound $\P[\Omega_{k,\epsilon}^c]=o(1)$ as follows.
The random vectors $u$ and $v$ are independent and distributed as $g/\|g\|_2$ for $g=(g_1,\ldots,g_d)\sim\CN(0,I_d)$. By rotational invariance of $u$ and $v$, as $d\to\infty$,
\begin{align}
\P[|\langle u|v\rangle|>d^{-1/2+\epsilon}]&=\P[|u_1|> d^{-1/2+\epsilon}]=\P[|g_1|>\|g\|_2d^{-1/2+\epsilon}]=o(1),
\end{align}
since the norm $\|g\|_2$ concentrates near $\sqrt{d}$, i.e.
\begin{align*}
\P[|\|g\|_2-\sqrt{d}|>t]\le 2e^{-ct^2},
\end{align*}
see e.g. \cite[Theorem 3.1.1]{Vershynin2018book}, recalling that $\|g\|_2$ can be viewed as the norm of $2d$ independent real Gaussian random variables each with variance $1/2$.
Using Markov's inequality, Weingarten calculus like in \eqref{eqn:variance}, and Corollary~\ref{lem:trM}, we can estimate
\begin{align}
\P[|\langle u|\tilde M|u\rangle|>N^{-1/2+\epsilon}]&\le\E[|\langle u|\tilde M|u\rangle|^2]N^{1-2\epsilon}=\left[(\Tr \tilde M)^2+\Tr(\tilde M^2)\right]\frac{(1+o(1))}{d^2}N^{1-2\epsilon}\le CN^{-2\epsilon}.
\end{align}
This similarly holds for $|\langle v|\tilde M|v\rangle|$, and for $|\langle v|\tilde M|u\rangle|$ and $|\langle u|\tilde M|v\rangle|$,
recalling $u$ and $v$ are independent. 
Thus $\P[\Omega_{k,\epsilon}^c]=o(1)$.

On $\Omega_{k,\epsilon}$, we have $|\langle u|v\rangle|\le d^{-1/2+\epsilon}$, and so
\begin{align}
\begin{aligned}
\|v-\langle u|v\rangle u\|_2^2&= 1-|\langle u|v\rangle|^2\ge1-d^{-1+2\epsilon},\\
\text{and}\quad\sqrt{N}|\langle w|\tilde M|w'\rangle| &\le N^\epsilon,\quad \text{for any }w,w'\in\{u,v\}.
\end{aligned}
\end{align}
Thus on $\Omega_{k,\epsilon}$, the last three terms in \eqref{eqn:chf-error} are
\begin{align}
it\sqrt{N}\langle v|\tilde M|v\rangle\bigg(1-\frac{1}{\|v-\langle u|v\rangle u\|_2^2}\bigg)+2it\sqrt{N}\frac{\Re[\langle v|\tilde M|u\rangle\langle u|v\rangle]}{\|v-\langle u|v\rangle u\|_2^2}+it\sqrt{N}\frac{|\langle u|v\rangle|^2\langle u|\tilde M|u\rangle}{\|v-\langle u|v\rangle u\|_2^2}&= t\,O(N^\epsilon d^{-1/2+\epsilon}),\label{eqn:omega-errorterms}
\end{align}
with implicit constant uniform over $\Omega_{k,\epsilon}$. 
Equation~\eqref{eqn:omega-errorterms} is $o(1)$ since $\epsilon<1/8$ and $d=N/q(k)(1+o(1))$ with $q(k)$ of order $k=\log_D N$. Then considering $\Omega_{k,\epsilon}$ and its complement, from \eqref{eqn:chf-error} we get
\begin{align*}
\E_\omega[e^{it(\tilde F_j-\tilde F_i)}]&=\E[e^{it\sqrt{N}\langle u|\tilde M|u\rangle}]\E[e^{-it\sqrt{N}\langle v|\tilde M|v\rangle}]+o_t(1)\\
&=\E_\omega[e^{it\tilde F_j}]\E_\omega[e^{-it \tilde F_i}]+o_t(1),\numberthis
\end{align*}
where $o_t(1)$ denotes that the $o(1)$ term is allowed to depend on $t$.
\end{proof}

\begin{proof}[Proof of the remainder of Theorem~\ref{thm:mat}]
This is the proof of the empirical distribution convergence in Theorem~\ref{thm:mat}.
The set $J_k\subseteq\intbrr{1:N}$ is to be taken as $\intbrr{1:N}$ for $D\ne4$, and one of $\{j:\alpha_j\text{ even}\}$ or $\{j:\alpha_j\text{ odd}\}$ for $D=4$. In any case, $|J_k|\to\infty$.
Applying Lemma~\ref{lem:chf-split}, we obtain,
\begin{align*}
\E_\omega[|\operatorname{chf}_{J_k}(t)|^2] =\frac{1}{|J_k|^2}\sum_{i,j\in J_k}\E_\omega[e^{it(\tilde F_j-\tilde F_i)}]
&=\frac{1}{|J_k|^2}\sum_{\substack{i,j\in J_k\\i\ne j}}(\E_\omega[e^{it \tilde F_j}]\E_\omega[e^{-it \tilde F_i}]+o_t(1))+\frac{1}{|J_k|}\\
&=\left|\E_\omega\operatorname{chf}_{J_k}(t)\right|^2+o_t(1).\numberthis
\end{align*}
Then for each $t\in\R$, Markov's inequality gives
\begin{align}\label{eqn:chf-markov}
\P_\omega[|\operatorname{chf}_{J_k}(t)-\E_\omega\operatorname{chf}_{J_k}(t)|>\epsilon]&\le \frac{\E_\omega[|\operatorname{chf}_{J_k}(t)|^2]-|\E_\omega[\operatorname{chf}_{J_k}(t)]|^2}{\epsilon^2}=\frac{o_t(1)}{\epsilon^2},
\end{align}
since $|J_k|\to\infty$.
Since we have $\E_\omega\operatorname{chf}_{J_k}(t)=e^{-t^2V(a)/2}+o_t(1)$ from \eqref{eqn:exp-chf}, a standard argument (see e.g. \cite[Lemma 2.2]{DiaconisFreedman1984}, \cite[Theorem 6.3]{Kallenberg-book}) 
shows that the convergence in probability \eqref{eqn:chf-markov} of the characteristic function implies that the measure $\rho_{J_k}:=\frac{1}{|J_k|}\sum_{j\in J_k}\delta_{\tilde F_j}$ converges weakly in probability to $\RN(0,V(a))$, i.e. for $g\sim\RN(0,V(a))$ and any bounded continuous $f:\R\to\C$,
\begin{align}\label{eqn:wpJk}
\P_\omega\left[\left|\E_{\rho_{J_k}}[f]-\E_g[f]\right|>\epsilon\right]\to0.
\end{align}
For $D\ne4$, $F_j=\tilde F_j$ and taking $J_k=\intbrr{1:N}$ proves the desired convergence \eqref{eqn:conv}. 
For $D=4$, define $Y$ to be the distribution given by \eqref{eqn:d4pdf}, let $g_\pm\sim\RN(\pm\dfoura,V(a))$, and $J_+=\{j:\alpha_j\text{ even}\}$ and let $J_-=\{j:\alpha_j\text{ odd}\}$. Recalling $\mu_k=\frac{1}{N}\sum_{j=1}^N\delta_{F_j}$, letting $\nu_{J_\pm}:=\frac{1}{|J_\pm|}\sum_{j\in J_\pm}\delta_{F_j}$, and using that $|J_+|$ and $|J_-|$ are both leading order $N/2$ by the eigenspace dimensions \eqref{eqn:dim}, we have
\begin{align*}
\P_\omega\left[\left|\E_{\mu_k}[f]-\E_Y[f]\right|>\epsilon\right]&=\P_\omega\Bigg[\Bigg|\frac{1}{N}\sum_{j\in J_+}f(F_j)+\frac{1}{N}\sum_{j\in J_-}f(F_j)-\frac{1}{2}(\E_{g_+}[f]+\E_{g_-}[f])\Bigg|>\epsilon\Bigg]\\
&\le \begin{multlined}[t]
\P_\omega\left[\left|(1+o(1))\E_{\nu_{J_+}}[f]-\E_{g_+}[f]\right|>\epsilon\right]
+\P_\omega\left[\left|(1+o(1))\E_{\nu_{J_-}}[f]-\E_{g_-}[f]\right|>\epsilon\right].
\end{multlined}\numberthis\label{eqn:wpJksplit}
\end{align*}
Using that $f$ is bounded and applying \eqref{eqn:wpJk} with the functions $x\mapsto f(x-\dfoura)$ and $x\mapsto f(x+\dfoura)$, implies \eqref{eqn:wpJksplit} tends to zero as $k\to\infty$.
This finishes the proof of Theorem~\ref{thm:mat}.
\end{proof}

\section{Off-diagonal matrix elements}\label{sec:off-diag}

\subsection{Random off-diagonal element}
To prove Theorem~\ref{thm:eth}, we will prove
\begin{thm}[Random ETH statement]\label{thm:eth-random}
Consider the setting of Theorem~\ref{thm:eth}.
For $i,j\in\intbrr{1:D^k}$, let $A_{ij}^{(k)}:=\langle\varphi^{(i)}|\Opkl(a)|\varphi^{(j)}\rangle$ and $\langle A\rangle:=\int_{\T^2}a(\x)\,d\x$. Viewing $A_{ij}^{(k)}$ as a random variable with respect to Haar measure in each of the eigenspaces $E_\jj$ of $\varphi^{(j)}$ and $E_\jjtwo$ of $\varphi^{(i)}$ (which may be the same eigenspace, in which case if $i\ne j$ they are always taken orthogonal), we have the following weak convergences in probability as $k\to\infty$:
\begin{align}\label{eqn:eth-random}
\frac{\sqrt{N}}{\sqrt{V(a)}f_a(\jj,\jjtwo)}(A^{(k)}_{ij}-\langle A\rangle\delta_{ij}) \xrightarrow{\P,w} R_{ij},
\end{align}
where $V(a)$ and $f_a(\jj,\jjtwo)$ are as in Theorem~\ref{thm:eth}, $R_{ij}$ for $i\ne j$ is a standard complex Gaussian random variable, and 
\begin{align*}
R_{ii}\dsim\begin{cases}
\RN(0,1),&D\ne 4\\
\RN\Big((-1)^{\jj_j}\dfoura/\sqrt{V(a)},1\Big),&D=4
\end{cases},
\end{align*}
provided for $D=4$ the $\jj_j$ are eventually a fixed parity.

For any $D\ge2$, the sequence
\begin{align}\label{eqn:eth-random-seq}
\left(\frac{\sqrt{D^k}}{\sqrt{V(a)}f_a(\jj,\jjtwo)}A^{(k)}_{ij}\right)_{\substack{i,j=1\\i\ne j}}^{D^k},
\left(\frac{\sqrt{D^{k+1}}}{\sqrt{V(a)}f_a(\jj,\jjtwo)}A^{(k+1)}_{ij}\right)_{\substack{i,j=1\\i\ne j}}^{D^{k+1}},\ldots
\end{align}
also converges weakly in probability to a standard complex normal distribution $\CN(0,1)$.
\end{thm}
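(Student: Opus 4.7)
The diagonal case $i=j$ of \eqref{eqn:eth-random} reduces directly to \eqref{eqn:clt} and \eqref{eqn:clt4} of Theorem~\ref{thm:ind}, using that $f_a(\jj,\jj)=1$ by definition. For the off-diagonal case $i\ne j$ my plan is the method of moments for the complex Gaussian: since $Z\sim\CN(0,1)$ is uniquely determined by the moments $\E[Z^p\ol{Z}^q]=p!\,\delta_{pq}$, it suffices to show that the rescaled random variable $X_{ij}^{(k)}:=\frac{\sqrt{N}}{\sqrt{V(a)}f_a(\jj,\jjtwo)}A_{ij}^{(k)}$ satisfies $\E_\omega[(X_{ij}^{(k)})^p\,\ol{X_{ij}^{(k)}}^{\,q}]\to p!\,\delta_{pq}$ for every $p,q\in\N$. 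The vanishing for $p\ne q$ is immediate from \eqref{eqn:weingarten-diff} applied to the two independent Haar-random unit vectors representing the eigenvectors.

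Writing $\varphi^{(j)}\dsim\Lambda_\jj u$ and $\varphi^{(i)}\dsim\Lambda_\jjtwo v$ for independent Haar-random unit vectors $u\in\C^{d_\jj}$, $v\in\C^{d_\jjtwo}$ (when $\jj=\jjtwo$ the orthogonality constraint is enforced by a Gram--Schmidt step as in the proof of Lemma~\ref{lem:chf-split}, whose correction terms will be handled by the concentration estimates already used there), we have $A_{ij}^{(k)}=\langle v|N^{(\jjtwo,\jj)}|u\rangle$ for $N^{(\jjtwo,\jj)}:=\Lambda_\jjtwo^\dagger\Opkl(a_0)\Lambda_\jj$ (the constant $\langle A\rangle\delta_{ij}$ vanishes for $i\ne j$). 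A direct Weingarten computation as in Section~\ref{subsec:weingarten} yields $\E_\omega|A_{ij}^{(k)}|^2=\frac{1+o(1)}{d_\jj d_\jjtwo}\Tr[\Opkl(a_0)P_\jj\Opkl(a_0)P_\jjtwo]$, using that $\Opkl(a_0)$ is self-adjoint. Expanding each projector via \eqref{eqn:ppoly0} as a double sum over $t,s\in\intbrr{-q(k)/2:q(k)/2-1}$, the terms with $t+s\equiv0\;\mathrm{mod}\;q(k)$ contribute, via Lemma~\ref{lem:intsum}, a leading term of the form $\frac{N}{q(k)^2}\sum_{t=-\infty}^\infty e^{2\pi i(\jjtwo-\jj)t/q(k)}\bigl[\int_{\T^2} a_0\,a_0\!\circ B^t+\oneb_{D\ge3}\!\int_{\T^2} a_0\,a_0\!\circ B^tR\bigr]$, which matches $\frac{N}{q(k)^2}f_a(\jj,\jjtwo)^2V(a)$ by \eqref{eqn:f} (the sign of $\jj-\jjtwo$ being irrelevant by the $t\mapsto -t$ symmetry). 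The remaining terms $t+s\not\equiv 0\;\mathrm{mod}\;q(k)$ are subleading by exactly the bound of Proposition~\ref{thm:trace-cancel} applied as in the proof of \eqref{eqn:trapap} in Theorem~\ref{thm:trace}. Combined with $d_\jj d_\jjtwo=(N/q(k))^2(1+o(1))$ from \eqref{eqn:dim}, this gives $\E_\omega|A_{ij}^{(k)}|^2=f_a(\jj,\jjtwo)^2V(a)/N\,(1+o(1))$, i.e., $\E_\omega|X_{ij}^{(k)}|^2\to 1$.

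For the higher moments, the Weingarten formula \eqref{eqn:weingarten} applied to the $v$- and $u$-expectations separately produces a double sum over $\sigma,\tau\in S_p$ against Weingarten-function weights of order $(d_\jjtwo d_\jj)^{-p}$. The dominant contributions come from the $\sigma=\tau=\operatorname{Id}$ configurations (together with their cycle-shape relatives that produce products of shortest traces), each contracting into $p$ disjoint copies of $\Tr[\Opkl(a_0)P_\jj\Opkl(a_0)P_\jjtwo]$; counting these configurations gives exactly $p!$ such terms. All non-leading cycle structures reduce to traces of the form $\Tr[(\Opkl(a_0)P_\jj\Opkl(a_0)P_\jjtwo)^r\cdots]$ with $r\ge 2$, which by the operator-norm bound used in \eqref{eqn:tr-prod} are $O(N/q(k))$ and therefore subleading against $(d_\jj d_\jjtwo)^p\sim(N/q(k))^{2p}$. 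The net result is $\E_\omega|X_{ij}^{(k)}|^{2p}\to p!$ for every $p\in\N$, establishing \eqref{eqn:eth-random} for fixed indices.

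For the sequence convergence \eqref{eqn:eth-random-seq}, I would adapt the characteristic-function strategy of Section~\ref{subsec:gaussian}. Writing $\mu_k$ for the empirical distribution of the rescaled off-diagonal entries, the mean $\E_\omega\chf_{\mu_k}(t)$ converges to $e^{-|t|^2/2}$ by the moment calculation above (applied to Laplace-like generating functions, or through absolute-moment bounds and truncation). The key step is then showing $\E_\omega|\chf_{\mu_k}(t)|^2-|\E_\omega\chf_{\mu_k}(t)|^2=o_t(1)$, after which Markov's inequality and a standard argument (e.g., \cite{DiaconisFreedman1984}) complete the weak convergence in probability. This requires estimating $\E_\omega\bigl[e^{it(A_{ij}^{(k)}-A_{i'j'}^{(k)})}\bigr]$ for distinct index pairs: when the pairs share no common index and lie in different eigenspaces, independence gives exact factorization, and otherwise the Gram--Schmidt correction terms can be controlled just as in Lemma~\ref{lem:chf-split} through the concentration bound $|\langle u,v\rangle|\lesssim d^{-1/2+\epsilon}$ together with $|\langle v|N^{(\jjtwo,\jj)}|u\rangle|\lesssim N^{-1/2+\epsilon}$ on a high-probability event. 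The principal obstacle I anticipate is the combinatorial bookkeeping when up to four eigenvectors coincide pairwise among shared eigenspaces (producing up to four simultaneous Gram--Schmidt corrections); I expect this to be manageable by the same concentration-plus-moment-bound template, since at each step the $d_\jj^{-1/2+\epsilon}$ and $N^{-1/2+\epsilon}$ estimates decay faster than the polylogarithmic growth of $q(k)$, so the correction is always $o(1)$.
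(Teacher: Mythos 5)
Your overall route is the same as the paper's: the method of moments for the complex entry, Weingarten calculus to reduce the mixed moments to products of traces, identification of the leading permutations as those of cycle shape $[2,\ldots,2]$ exchanging the ``barred'' and ``unbarred'' index blocks (there are indeed $p!$ of them for the $2p$-th absolute moment), and the evaluation of $\Tr(\Opkl(a_0)P_\jj\Opkl(a_0)P_\jjtwo)$ via the expansion \eqref{eqn:ppoly0}, Lemma~\ref{lem:intsum}, and Proposition~\ref{thm:trace-cancel} (this is the content of Proposition~\ref{prop:var-ind-2}). Using $\E[Z^p\ol{Z}^q]=p!\,\delta_{pq}$ in place of the paper's $\E[(\Re(\bar tX))^p]$ plus Cram\'er--Wold is an equivalent formulation of the moment method.

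There is, however, one genuine gap: your treatment of the case $\jj=\jjtwo$, where you impose orthogonality of $\varphi^{(i)}$ and $\varphi^{(j)}$ by a Gram--Schmidt step and claim the corrections are ``handled by the concentration estimates already used'' in Lemma~\ref{lem:chf-split}. Those estimates control the \emph{bad} event only with polynomially small probability, $\P[\Omega_{k,\epsilon}^c]=O(N^{-2\epsilon})$ with $\epsilon<1/8$, coming from a second-moment Markov bound on $|\langle u|M|v\rangle|$. That suffices for a bounded functional such as $e^{i\Re(\bar tX)}$, but not for the $2p$-th moments you need: on $\Omega_{k,\epsilon}^c$ the integrand $|\sqrt N\langle v'|M|u\rangle|^{2p}$ is only bounded deterministically by $N^p\|a_0\|_\infty^{2p}$, so the bad-event contribution is $O(N^{p-2\epsilon})$, which diverges for every $p\ge1$. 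To repair this you would either need super-polynomial tail bounds on the bilinear form (e.g.\ a Hanson--Wright-type inequality and the exact beta law of $|\langle u|v\rangle|^2$), or, as the paper does, avoid Gram--Schmidt entirely by representing the two eigenvectors as two orthonormal columns $u^{(1)},u^{(2)}$ of a single Haar unitary and applying Theorem~\ref{thm:weingarten} directly: the column-index constraint then forces $n=p/2$ and restricts $\tau$ to permutations exchanging the two blocks, with no correction terms at all. Separately, note that \eqref{eqn:eth-random-seq} is a statement about the sequence of random variables (it follows from the moment convergence being uniform over $i\ne j$); the empirical-characteristic-function decorrelation you sketch at the end is really the proof of Theorem~\ref{thm:eth} (Lemma~\ref{lem:chf-split-2}), so you are doing more than \eqref{eqn:eth-random-seq} requires, and the limit of $\E_\omega[e^{i\Re(\bar tZ)}]$ for $Z\sim\CN(0,1)$ in the paper's normalization is $e^{-|t|^2/4}$, not $e^{-|t|^2/2}$.
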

Equation~\eqref{eqn:eth-random} is a generalization of the limiting individual fluctuations in QUE statement \eqref{eqn:clt}, but including off-diagonal matrix entries.
The off-diagonal matrix entries have zero expectation value whether $D=4$ or $D\ne4$, and their behavior ends up the same in both cases, unlike in the on-diagonal case.
We will use the statement \eqref{eqn:eth-random-seq} to finish the proof of Theorem~\ref{thm:eth} in Section~\ref{subsec:eth-conv}.

The diagonal statement $i=j$ of \eqref{eqn:eth-random} is already proved in \eqref{eqn:clt} and \eqref{eqn:clt4}. 
Thus we only need to prove the convergence for off-diagonal elements $\langle\varphi^{(i)}|\Opkl(a)|\varphi^{(j)}\rangle$, where $i\ne j$.
Since $i\ne j$, we have $\langle\varphi^{(i)}|\Opkl(a)|\varphi^{(j)}\rangle=\langle\varphi^{(i)}|\Opkl(a_0)|\varphi^{(j)}\rangle$ by orthogonality of the eigenbasis.
Also, $\E[\langle\varphi^{(i)}|\Opkl(a)|\varphi^{(j)}\rangle]=0$ immediately from either independence or Theorem~\ref{thm:weingarten}.

The characteristic function of a complex random variable $X$ is
$\chf_X(t)=\E[e^{i\Re(\bar{t}X)}]$ for $t\in\C$.
So instead of the usual moments $\E[X^p]$, we look at 
\begin{align}\label{eqn:rmoments}
\E[(\Re(\bar{t}X))^p]&= \frac{1}{2^p}\E[(tX+\bar{t}\bar{X})^p]
=\frac{1}{2^p}\sum_{n=0}^p\binom{p}{n}t^n\bar{t}^{p-n}\E[X^n\bar{X}^{p-n}],
\end{align}
for $X=\langle\varphi^{(i)}|\Opkl(a_0)|\varphi^{(j)}\rangle$.
By the Cram\'er--Wold theorem and moment method (see e.g. \cite[\S3.3.5 and \S3.10]{Durrett}), to show the convergences \eqref{eqn:eth-random} and \eqref{eqn:eth-random-seq}, it will be sufficient to show that for every $t\in\C\cong\R^2$ and $p\in\N$, a properly scaled version of \eqref{eqn:rmoments} converges to $\E[(\Re(\bar{t}Z))^p]$ where $Z\sim\CN(0,1)$, as $k\to\infty$ and with rates uniform over $i,j\in\intbrr{1:D^k}$.

\subsubsection{Same eigenspace}
First suppose $\varphi^{(i)}$ and $\varphi^{(j)}$ come from the same eigenspace, with index $\jj$. 
Let $M=\Lambda_\jj^\dagger\Opkl(a_0)\Lambda_\jj$ which is a $d_\jj\times d_\jj$ matrix.
Then $\langle\varphi^{(i)}|\Opkl(a_0)|\varphi^{(j)}\rangle$ is distributed as $\langle u^{(1)}|M|u^{(2)}\rangle$ for $u^{(1)},u^{(2)}$ the first two columns of a Haar random unitary matrix $U=(u^{(1)}\cdots u^{(d_\jj)})\in\mathcal U(d_\jj)$.
Since $M$ is self-adjoint, we look at
\begin{multline*}
\E[\langle u^{(1)}|M|u^{(2)}\rangle^n\langle u^{(2)}|M|u^{(1)}\rangle^{p-n}]\\
=\sum_{\substack{i_1,\ldots i_{p}=1\\i_1',\ldots,i_{p}'=1}}^{d_\jj}
M_{i_1'i_1}\cdots M_{i_{p}'i_{p}}\E[\overline{u^{(1)}_{i_1'}}\cdots \overline{u^{(1)}_{i_n'}}u^{(2)}_{i_{1}}\cdots u^{(2)}_{i_n}
\overline{u^{(2)}_{i_{n+1}'}}\cdots\overline{u^{(2)}_{i_{p}'}}u^{(1)}_{i_{n+1}}\cdots u^{(1)}_{i_{p}}].
\end{multline*}
By Theorem~\ref{thm:weingarten}, the expectation is zero if $n\ne p-n$, since to be nonzero we must have $\tau\in S_p$ map all the column indices $j_1=\cdots=j_n=2$ to the set of indices $j_{n+1}'=\cdots j_{p}'=2$, but this is only possible if $n=p-n$.
So we must have $p$ even and $n=p/2$ to obtain a nonzero expectation value.
Now with $n=p/2$ and by the same reasoning,
we see $\tau\in S_p$ must map $\{1,\ldots,p/2\}$ to $\{p/2+1,\ldots,p\}$ and vice versa, in order to match the column indices to have $\delta_{j_\ell j_{\tau(\ell)}'}\ne0$. Denoting this by $\tau:\{1,\ldots,p/2\}\leftrightarrow\{p/2+1,\ldots,p\}$, we have for $p$ even,
\begin{align}\label{eqn:offdiag-same}
\E[\langle u^{(1)}|M|u^{(2)}\rangle^{p/2}\langle u^{(2)}|M|u^{(1)}\rangle^{p/2}]
&=\sum_{\substack{\sigma\in S_p\\\text{shape }[c_1,\ldots,c_m]}}\Tr(M^{c_1})\cdots\Tr(M^{c_m})\sum_{\substack{\tau:\{1,\ldots,p/2\}\\\leftrightarrow\{p/2+1,\ldots,p\}}}\Wg(\tau\sigma^{-1}).
\end{align}
We first consider $D\ne4$. Then as in the proof of Theorem~\ref{thm:ind} given in Section~\ref{subsec:moments}, by Corollary~\ref{lem:trM} the leading order trace term
comes from when $c_1=\cdots=c_m=2$, as long as there is an allowable $\tau$ and a $\sigma$ with shape $[2,\ldots,2]$ so that $\tau\sigma^{-1}=\operatorname{Id}$, i.e. $\tau=\sigma$.
Since $\tau$ has to exchange $\{1,\ldots,p/2\}$ with $\{p/2+1,\ldots,p\}$, and since a permutation of shape $[2,\ldots,2]$ is always its own inverse, this happens exactly when $\tau=\sigma=\sigma^{-1}$ is of the form
\begin{align}\label{eqn:sigmaform}
(1\,s_1)(2\,s_2)\cdots(\tfrac{p}{2}\,s_{\frac{p}{2}}),
\end{align}
for $\{s_j\}_j=\{\frac{p}{2}+1,\ldots,p\}$.
For $D=4$, one initially has to consider $c_1,\ldots,c_m\in\{1,2\}$ by Corollary~\ref{lem:trM}. However, because $\tau$ exchanges two disjoint sets, any $\sigma$ with fixed points, that is $c_i=1$, cannot be equal to $\tau$ and so will produce a subleading order term due to smallness of  $\Wg(\tau\sigma^{-1})$. Therefore the leading order term still only comes from $c_1=\cdots=c_m=2$, just as in the $D\ne4$ case.

Returning to \eqref{eqn:sigmaform}, there are $(p/2)!$ choices for the sequence $(s_j)$, so there are a total of $(p/2)!$ such permutations $\tau=\sigma=\sigma^{-1}$, yielding
\begin{align*}
\E[\langle u^{(1)}|M|u^{(2)}\rangle^{p/2}\langle u^{(2)}|M|u^{(1)}\rangle^{p/2}]&=(\Tr M^2)^{p/2}\frac{(p/2)!}{d_\jj^p}(1+o(1)),
\end{align*}
and
\begin{align*}
\E[(\Re(\bar t\langle\varphi^{(i)}|\Opkl(a)|\varphi^{(j)}\rangle ))^p]&=\frac{\mathbf{1}_{p\in2\N}}{2^p}\binom{p}{p/2}|t|^p(p/2)!N^{-p/2}V(a)^{p/2}(1+o(1)),\numberthis\label{eqn:re-p}
\end{align*}
using Corollary~\ref{lem:trM} and that $d_\jj=\frac{N}{q(k)}(1+o(1))$.
Since the characteristic function of a standard complex Gaussian random variable is $e^{-|t|^2/4}$, we can check, for example by noting that
\begin{align*}
\sum_{p=0}^\infty \frac{i^p}{p!}\frac{\mathbf{1}_{p\in2\N}}{2^p}\binom{p}{p/2}|t|^p(p/2)!
&=\sum_{r=0}^\infty\frac{(-1)^r}{(2r)!}\frac{1}{4^r}\frac{(2r)!}{r!r!}|t|^{2r}r!=e^{-|t|^2/4},
\end{align*}
that for any $t\in\C$ and $p\in\N$, as $k\to\infty$,
\begin{align}\label{eqn:re-conv}
\left(\frac{N}{V(a)}\right)^{p/2}\E[(\Re(\bar t\langle\varphi^{(i)}|\Opkl(a)|\varphi^{(j)}\rangle ))^p]\to \E[(\Re(\bar{t}Z))^p],
\end{align}
for $Z\sim\CN(0,1)$.

\subsubsection{Different eigenspaces}\label{subsubsec:diff-eigenspace}
Now consider when $\varphi^{(i)}$ and $\varphi^{(j)}$ come from different eigenspaces, say those with indices $\jj$ and $\jjtwo$. In this case $\langle\varphi^{(i)}|\Opkl(a_0)|\varphi^{(j)}\rangle$ is distributed as $\langle u,\mathcalboondox Mv\rangle$ for $\mathcalboondox M=\Lambda_\jj^\dagger\Opkl(a_0)\Lambda_\jjtwo$ a (not necessarily square) $d_\jj\times d_\jjtwo$ matrix, and $u\in\C^{d_\jj}$, $v\in\C^{d_\jjtwo}$ independent Haar random vectors. Recall from \eqref{eqn:F-unitary} that $\Lambda_\jj$ (resp. $\Lambda_\jjtwo$) is an $N\times d_\jj$ ($N\times d_\jjtwo$) matrix whose columns form an orthonormal basis for the eigenspace $E_\jj$ ($E_\jjtwo$).
We then proceed similarly as in the case of the same eigenspace, taking care to differentiate $d_\jj$ from $d_\jjtwo$. We will add a subscript $(d)$ to the Weingarten function $\Wg$ in Theorem~\ref{thm:weingarten} to identify the dimension. For \eqref{eqn:rmoments}, we compute
\begin{multline}
\E[\langle u|\mathcalboondox M|v\rangle^n\overline{\langle u|\mathcalboondox M|v\rangle^{p-n}}] =\E[\langle u|\mathcalboondox M|v\rangle^n\langle v|\mathcalboondox M^\dagger|u\rangle^{p-n}] \\
= \sum_{\substack{i_1',\ldots i'_n=1\\i_1,\ldots,i_{p-n}=1}}^{d_\jj}\sum_{\substack{\ell_1',\ldots,\ell_{p-n}'=1\\\ell_1,\ldots,\ell_n=1}}^{d_\jjtwo}\mathcalboondox M_{i_1'\ell_1}\cdots \mathcalboondox M_{i_n'\ell_n}\mathcalboondox M^\dagger_{\ell_1'i_1}\cdots\mathcalboondox M^\dagger_{\ell_{p-n}'i_{p-n}}
\E[\bar u_{i_1'}\cdots \bar u_{i_n'}u_{i_1}\cdots u_{i_{p-n}}]\\
\times \E[v_{\ell_1}\cdots v_{\ell_n}\bar v_{\ell_1'}\cdots \bar v_{\ell_{p-n}'}].
\end{multline}
By \eqref{eqn:weingarten-diff}, we see we must have $n=p/2$ and $p$ even to have  nonzero expectation values.
Applying the rest of Theorem~\ref{thm:weingarten}, we have for $p$ even,
\begin{multline}
\E[\langle u|\mathcalboondox M|v\rangle^{p/2}\langle v|\mathcalboondox M^\dagger|u\rangle^{p/2}]=\\
\sum_{\sigma_u\in S_{p/2}}
\sum_{\sigma_v\in S_{p/2}}
\sum_{i_1',\ldots,i_{p/2}'=1}^{d_\jj}\sum_{\ell_1',\ldots,\ell_{p/2}'=1}^{d_\jjtwo}
\mathcalboondox M_{i_1'\ell'_{\sigma_v(1)}}\cdots \mathcalboondox M_{i_{p/2}'\ell'_{\sigma_v(p/2)}}\mathcalboondox M^\dagger_{\ell_1'i_{\sigma_u(1)}'}\cdots\mathcalboondox M^\dagger_{\ell_{p/2}'i_{\sigma_u(p/2)}'}\\
\times\sum_{\tau\in S_{p/2}}\Wg_{(d_\jj)}(\tau)\sum_{\tau\in S_{p/2}}\Wg_{(d_\jjtwo)}(\tau). 
\end{multline}
To simplify the above expression, we can combine the indexing over $i'_m$ and $\ell'_m$ by defining $i'_{p/2+m}:=\ell_m'$, $m=1,\ldots,p/2$. With this indexing, we can also combine $\sigma_u$ and $\sigma_v$ by defining a new permutation $\sigma\in S_p$ via $\sigma(x):=\sigma_v(x)+p/2$ if $x\in\intbrr{1:p/2}$, and $\sigma(p/2+x):=\sigma_u(x)$.
Then we can replace every $i'_{\sigma_u(x)}$ with $i'_{\sigma(p/2+x)}$, and every $\ell'_{\sigma_v(x)}$ with $i'_{\sigma(x)}$, giving
\begin{multline}
\sum_{\sigma\in S_p}
\sum_{i_1',\ldots,i_{p/2}'=1}^{d_\jj}\sum_{i_{p/2+1}',\ldots,\ell_{p}'=1}^{d_\jjtwo}
\mathcalboondox M_{i_1'i'_{\sigma(1)}}\cdots \mathcalboondox M_{i_{p/2}'i'_{\sigma(p/2)}}\mathcalboondox M^\dagger_{i_{p/2+1}'i_{\sigma(p/2+1)}'}\cdots\mathcalboondox M^\dagger_{i'_{p}i'_{\sigma(p)}}\frac{1}{d_\jj^{p/2}d_\jjtwo^{p/2}}(1+o(1)).\label{eqn:2p}
\end{multline}
Note that $\sigma$ sends $\{1,\ldots,p/2\}$ to $\{p/2+1,\ldots,p\}$ and vice versa. 
Because of this requirement, every cycle in $\sigma$ must have even length, and so the cycle shape $[c_1,\ldots,c_m]$ of $\sigma$ only has even $c_1,\ldots,c_m$. 
Additionally, for a cycle $(s_1\cdots s_{c_j})$ of length $c_j$ in $\sigma$, which without loss of generality starts with an $s_1\le p/2$, summing over the corresponding indices and matrix elements in \eqref{eqn:2p} yields
\begin{align*}
\sum_{i'_{s_1},\ldots,i'_{s_{c_j}}=1}^{d_\jj,d_\jjtwo}\mathcalboondox M_{i'_{s_1}i'_{s_2}}\mathcalboondox M^\dagger_{i'_{s_2}i'_{s_3}}\cdots\mathcalboondox M_{i'_{s_{c_j-1}}i'_{s_{c_j}}}\mathcalboondox M^\dagger_{i'_{s_{c_j}}i_{s_1}}&=\Tr((\mathcalboondox M\mathcalboondox M^\dagger)^{c_j/2}),
\end{align*}
where we always alternate $\mathcalboondox M$ and $\mathcalboondox M^\dagger$ due to the requirement that $\sigma$ swaps $\{1,\ldots,p/2\}$ and $\{p/2+1,\ldots,p\}$, combined with the observation that the first index of $\mathcalboondox M$ in \eqref{eqn:2p} is always $\le p/2$ and the first index of $\mathcalboondox M^\dagger$ is always $\ge p/2+1$. 
Thus \eqref{eqn:2p} becomes 
\begin{align}\label{eqn:2p-2}
\E[\langle u|\mathcalboondox M|v\rangle^{p/2}\langle v|\mathcalboondox M^\dagger|u\rangle^{p/2}]=\sum_{\substack{\sigma:\{1,\ldots,p/2\}\\\;\;\;\leftrightarrow\{p/2+1,\ldots,p\},\\\text{shape }[c_1,\ldots,c_m]}}\Tr((\mathcalboondox M\mathcalboondox M^\dagger)^{c_1/2})\cdots\Tr((\mathcalboondox M\mathcalboondox M^\dagger)^{c_m/2})\frac{1}{d_\jj^{p/2} d_\jjtwo^{p/2}}(1+o(1)).
\end{align}
We can identify the largest order for the trace term. 
For this, note that for any $p\in\N$, $\Tr\left[(\mathcalboondox M\mathcalboondox M^\dagger)^p\right]=\Tr\left[(P_\jj\Opkl(a_0)P_\jjtwo\Opkl(a_0))^p\right]$; in particular,
\begin{align*}
\Tr(\mathcalboondox M\mathcalboondox M^\dagger)=\Tr(\Lambda_\jj^\dagger\Opkl(a_0)\Lambda_\jjtwo\Lambda_\jjtwo^\dagger\Opkl(a_0)\Lambda_\jj)=\Tr(P_\jj\Opkl(a_0)P_\jjtwo\Opkl(a_0)),
\end{align*}
so we need a version of \eqref{eqn:trapap} for different $P_\jj$, $P_\jjtwo$.
\begin{prop}\label{prop:var-ind-2}
For $\min(\ell,k-\ell)\ge 3\log_D k$ and any $\jj,\jjtwo\in\intbrr{0:q(k)-1}$,
\begin{align}\label{eqn:var-ind-2}
\Tr(\Opkl(a_0)P_\jj\Opkl(a_0)P_\jjtwo)
=\frac{N}{q(k)^2}\left[\widetilde V(a,\jj,\jjtwo)+o(1)\right],
\end{align}
where
\begin{align}\label{eqn:tv}
\widetilde V(a,\jj,\jjtwo):=\sum_{t=-\infty}^\infty e^{2\pi it(\jj-\jjtwo)/q(k)}\left(\int_{\T^2}a_0(\x)a_0(B^t\x)\,d\x+\mathbf{1}_{D\ge3}\int_{\T^2}a_0(\x)a_0(B^tR\x)\,d\x\right).
\end{align}
\end{prop}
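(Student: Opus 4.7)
The plan is to mimic the proof of \eqref{eqn:trapap} in Theorem~\ref{thm:trace}, but now keep track of the two independent phases $e^{2\pi i\jj t_1/q(k)}$ and $e^{2\pi i\jjtwo t_2/q(k)}$ since we no longer sum over a single eigenspace index. Expanding both projectors using \eqref{eqn:ppoly0} gives
\begin{align*}
\Tr(\Opkl(a_0)P_\jj\Opkl(a_0)P_\jjtwo)&=\frac{1}{q(k)^2}\sum_{t_1,t_2=-q(k)/2}^{q(k)/2-1}e^{2\pi i(\jj t_1+\jjtwo t_2)/q(k)}\Tr(\Opkl(a_0)\Ba_k^{t_1}\Opkl(a_0)\Ba_k^{t_2}).
\end{align*}
I would split this double sum according to whether $t_1+t_2\equiv 0\;\mathrm{mod}\;q(k)$ or not. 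The terms with $t_1+t_2\not\equiv 0\;\mathrm{mod}\;q(k)$ are estimated in exactly the same way as in Theorem~\ref{thm:trace}: applying the triangle inequality absorbs the phases, and then Proposition~\ref{thm:trace-cancel} combined with the choice $r=3\log_D k$ yields the bound $o(N/q(k)^2)$ derived in \eqref{eqn:tr2-error}, uniformly in $\jj,\jjtwo$.

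The contribution from $t_1+t_2\equiv 0\;\mathrm{mod}\;q(k)$ splits into the pairs $(t,-t)$ for $t\in\intbrr{-q(k)/2+1:q(k)/2-1}$ together with the corner $t_1=t_2=-q(k)/2$. On these terms the combined phase collapses to $e^{2\pi i(\jj-\jjtwo)t_1/q(k)}$ (up to the well-defined boundary case), so the leading order piece becomes
\begin{align*}
\frac{1}{q(k)^2}\sum_{t=-q(k)/2}^{q(k)/2-1}e^{2\pi i(\jj-\jjtwo)t/q(k)}\Tr(\Opkl(a_0)\Ba_k^t\Opkl(a_0)\Ba_k^{-t}),
\end{align*}
with the convention $\Ba_k^{q(k)/2}=\Ba_k^{-q(k)/2}$. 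I would then re-run the argument of Lemma~\ref{lem:intsum}, expanding each trace in the $(k,\ell)$-coherent state basis and applying Lemma~\ref{lem:classical} to identify $\Ba_k^t$ with the appropriate classical map $H(t)$ defined in \eqref{eqn:H4}, \eqref{eqn:H2}. The phase factor $e^{2\pi i(\jj-\jjtwo)t/q(k)}$ is inert under this analysis since the estimate \eqref{eqn:walsh-int} is pointwise in $t$; inserting it yields
\begin{align*}
\frac{N}{q(k)^2}\sum_{t=-q(k)/2}^{q(k)/2-1}e^{2\pi i(\jj-\jjtwo)t/q(k)}\int_{\T^2}a_0(\x)a_0(H(t)\x)\,d\x+o(N/q(k)^2),
\end{align*}
uniformly in $\jj,\jjtwo$. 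Regrouping the $H(t)$ values according to whether they are of type $B^s$ or $B^sR$ (for $D\ge3$), extending the $t$-sum to $\pm\infty$ at the cost of the exponential mixing error \eqref{eqn:expmix}, and observing that the extended phase $e^{2\pi i(\jj-\jjtwo)t/q(k)}$ matches the definition of $\widetilde V(a,\jj,\jjtwo)$ in \eqref{eqn:tv}, completes the identification.

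The main delicate point is the bookkeeping of how the phase $e^{2\pi i(\jj-\jjtwo)t/q(k)}$ interacts with the piecewise structure of $H(t)$ on the four regimes $[0,k-1]$, $[k,2k-1]$, $[2k,3k-1]$, $[3k,4k-1]$ for $D\ge3$. In particular, one must check that the $R$-twisted contributions for $t\in[k,2k-1]\cup[2k,3k-1]$ reassemble into the single $R$-sum $\sum_t e^{2\pi it(\jj-\jjtwo)/q(k)}\int a_0\cdot a_0\circ B^tR$ appearing in \eqref{eqn:tv}, and that the isolated corner term at $t_1=t_2=-q(k)/2$ is correctly absorbed. Once this classical/quantum dictionary is set up, the error estimates from Proposition~\ref{thm:trace-cancel} and exponential mixing are uniform in $\jj,\jjtwo$, so the $o(1)$ remainder is uniform as required for the applications in Section~\ref{sec:off-diag}.
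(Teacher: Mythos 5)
Your proposal is correct and follows essentially the same route as the paper: expand both projectors via \eqref{eqn:ppoly0}, observe that the phases are harmless in the triangle-inequality bound \eqref{eqn:tr2-error} for the $t_1+t_2\not\equiv0\;\mathrm{mod}\;q(k)$ terms, and extract the leading term from the $t_2=-t_1$ diagonal using Lemma~\ref{lem:intsum}. The only point in the paper's proof not in yours is the closing remark that $\widetilde V(a,\jj,\jjtwo)$ is real and eventually nonnegative, which is needed for the definition of $f_a$ but not for the identity \eqref{eqn:var-ind-2} itself.
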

\begin{proof}
The proof of \eqref{eqn:var-ind-2} is the same as for the single-eigenspace counterpart \eqref{eqn:trapap}, using the cancellation lemmas from Section~\ref{sec:phases}. 
By \eqref{eqn:ppoly0}, the desired trace is
\begin{align}
\Tr(\Opkl(a_0)P_\jj \Opkl(a_0)P_\jjtwo)&=\frac{1}{q(k)^2}\sum_{t_1,t_2=-q(k)/2}^{q(k)/2-1}\Tr(\Opkl(a_0)\Ba_k^{t_1}\Opkl(a_0)\Ba_k^{t_2})e^{2\pi i(\jj t_1+\jjtwo t_2)/q(k)},
\end{align}
with the only difference from \eqref{eqn:trap2-expand} being that the phase factor at the end is $e^{2\pi i(\jj t_1+\jjtwo t_2)/q(k)}$.
The phase factor never played a role in showing the terms where $t_1+t_2\ne0\;\mathrm{mod}\;q(k)$ were subleading order in \eqref{eqn:tr2-error}; they are thus still $o(D^k/q(k)^2)$.
The terms where $t_1+t_2=0\;\mathrm{mod}\;q(k)$, or $t_2=-t_1$, give the leading order
\begin{multline*}
\frac{1}{q(k)^2} \sum_{t_1=-q(k)/2}^{q(k)/2-1}e^{2\pi i(\jj t_1-\jjtwo t_1)/q(k)}\Tr(\Opkl(a_0)\Ba_k^{t_1}\Opkl(a_0)\Ba_k^{-t_1})\\
= \frac{N}{q(k)^2}\left[\sum_{t=-\infty}^\infty e^{2\pi it(\jj-\jjtwo)/q(k)}\left(\int_{\T^2}a_0(\x)a_0(B^t\x)\,d\x+\mathbf{1}_{D\ge 3}\int_{\T^2}a_0(\x)a_0(B^tR\x)\,d\x\right)\right]+o\bigg(\frac{N}{q(k)^2}\bigg),
\end{multline*}
using the same argument as in Lemma~\ref{lem:intsum}. 

Finally, \eqref{eqn:tv} is real by considering $t\mapsto -t$ in the sum, since $B^t$ and $R$ are measure-preserving and invertible, and $R=R^{-1}$ commutes with $B$. It is eventually nonnegative since $\tilde V(a,\alpha,\beta)+o(1)=\frac{q(k)^2}{N}\Tr(\mathcalboondox M\mathcalboondox M^\dagger)\ge0$.
\end{proof}

Applying Proposition~\ref{prop:var-ind-2} with \eqref{eqn:tr-prod} from Corollary~\ref{lem:trM} for higher order traces, we see the leading order terms in \eqref{eqn:2p-2} are those with $c_1=\cdots=c_m=2$, giving terms of order $\frac{N^{p/2}}{q(k)^p}\frac{\widetilde V(a,\jj,\jjtwo)^{p/2}}{d_\jj^{p/2}d_\jjtwo^{p/2}}=O(N^{-p/2})$.
The number of $\sigma\in S_p$ mapping $\{1,\ldots,p/2\}\leftrightarrow\{p/2+1,\ldots,p\}$ with cycle shape $[2,\ldots,2]$ is $(p/2)!$ as before. 
This gives the analog of \eqref{eqn:re-p} and \eqref{eqn:re-conv} in this case where $\varphi^{(i)}$ and $\varphi^{(j)}$ are from different eigenspaces $E_\jj$ and $E_\jjtwo$:
\begin{align}
\E[(\Re(\bar t\langle\varphi^{(i)}|\Opkl(a)|\varphi^{(j)}\rangle ))^p]&=\frac{\mathbf{1}_{p\in2\N}}{2^p}\binom{p}{p/2}|t|^p(p/2)!N^{-p/2}\widetilde V(a,\jj,\jjtwo)^{p/2}(1+o(1)),\numberthis
\end{align}
and
\begin{align}
\left(\frac{N}{\widetilde V(a,\jj,\jjtwo)}\right)^{p/2}\E[(\Re(\bar t\langle\varphi^{(i)}|\Opkl(a)|\varphi^{(j)}\rangle ))^p]\to \E[(\Re(\bar{t}Z))^p],
\end{align}
for $Z\sim\CN(0,1)$ and $\widetilde V(a,\jj,\jjtwo)$ as defined in \eqref{eqn:tv}.

Letting $f_a(\jj,\jjtwo)=\left[\frac{\tilde V(a,\jj,\jjtwo)}{V(a)}\right]^{1/2}$ as in \eqref{eqn:f}, this finishes the proof of Theorem~\ref{thm:eth-random}. \qed

\subsection{Convergence of the empirical distribution}\label{subsec:eth-conv}

To show the empirical distribution convergence in Theorem~\ref{thm:eth}, we proceed similarly as in Section~\ref{subsec:gaussian}.
For $k\to\infty$, we have 
\begin{align}
\E_\omega[\operatorname{chf}_{\nu^{(k)}_{\jj\jjtwo}}(t)]&=\frac{1}{\#\{(j,i)\in J_\jj\times J_\jjtwo:j\ne i\}}\sum_{\substack{j\in J_\jj,i\in J_\jjtwo\\ j\ne i}}\!\!\!\E_\omega\Big[e^{i\frac{\sqrt{N}}{\sqrt{V(a)}f_a(\jj,\jjtwo)}\Re(\bar t\langle\varphi^{(i)}|\Opkl(a_0)|\varphi^{(j)}\rangle)}\Big]\to e^{-|t|^2/4}
\end{align}
by \eqref{eqn:eth-random-seq} of Theorem~\ref{thm:eth-random}.
Following the argument in Section~\ref{subsec:gaussian}, adapted for measures on $\C$,
Theorem~\ref{thm:eth} follows if we prove convergence in probability $\operatorname{chf}_{\nu^{(k)}_{\jj\jjtwo}}(t)\overset{p}{\to}e^{-|t|^2/4}$ for $t\in\C$, which will follow from Markov's inequality and the following analogue of Lemma~\ref{lem:chf-split}.
\begin{lem}\label{lem:chf-split-2}
Let $\varphi^{(i_1)},\varphi^{(j_1)},\varphi^{(i_2)},\varphi^{(j_2)}$ be distinct orthonormal random eigenvectors, where $\varphi^{(j_1)},\varphi^{(j_2)}$ come from the same eigenspace, and $\varphi^{(i_1)},\varphi^{(i_2)}$ come from the same eigenspace (which may the same eigenspace as that of the $j_1,j_2$'s).
Then in the setting of Theorem~\ref{thm:eth}, for $t\in\C$,
\begin{multline}\label{eqn:offdiag-exp}
\E_\omega\left[\exp\left({i\sqrt{N}\Re\big(\bar t\langle\varphi^{(i_1)}|\Opkl(a_0)|\varphi^{(j_1)}\rangle-\bar t\langle\varphi^{(i_2)}|\Opkl(a_0)|\varphi^{(j_2)}\rangle\big)}\right)\right]\\
=\E_\omega\left[\exp\left(i\sqrt{N}\Re\big(\bar t\langle\varphi^{(i_1)}|\Opkl(a_0)|\varphi^{(j_1)}\rangle\right)\right]\\
\times\E_\omega\left[\exp\left(-i\sqrt{N}\Re\big(\bar t\langle\varphi^{(i_2)}|\Opkl(a_0)|\varphi^{(j_2)}\rangle\big)\right)\right]+o_t(1).
\end{multline}
\end{lem}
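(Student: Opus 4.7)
The plan is to mimic the proof of Lemma~\ref{lem:chf-split} by expressing each $\varphi^{(\cdot)}$ via Gram--Schmidt from independent Haar random vectors, and then showing that on a high-probability event the Gram--Schmidt corrections contribute only $o_t(1)$ after multiplication by $\sqrt N$, so that the expectation factorizes in the limit.

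Let $\alpha,\beta$ denote the eigenspace indices of the $j$-vectors and $i$-vectors respectively. We consider two cases. If $\alpha\ne\beta$, take four independent Haar random vectors $u_1,u_2\in\C^{d_\alpha}$ and $v_1,v_2\in\C^{d_\beta}$, and set $\varphi^{(j_1)}\dsim\Lambda_\alpha u_1$, $\varphi^{(j_2)}\dsim\Lambda_\alpha u_2'$, $\varphi^{(i_1)}\dsim\Lambda_\beta v_1$, $\varphi^{(i_2)}\dsim\Lambda_\beta v_2'$, where $u_2'=(u_2-\langle u_1,u_2\rangle u_1)/\|u_2-\langle u_1,u_2\rangle u_1\|$ and similarly for $v_2'$. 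If $\alpha=\beta$, take four independent Haar random vectors $u_1,u_2,u_3,u_4\in\C^{d_\alpha}$ and apply Gram--Schmidt to set $\varphi^{(j_1)}=u_1$, $\varphi^{(j_2)}=u_2'$, $\varphi^{(i_1)}=u_3'$, $\varphi^{(i_2)}=u_4'$. In both cases, write $\mathcalboondox M=\Lambda_\beta^\dagger\Opkl(a_0)\Lambda_\alpha$ (so $\mathcalboondox M=\tilde M$ when $\alpha=\beta$, with the shift in \eqref{eqn:M} having no effect on off-diagonal terms by orthogonality).

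Fix $\epsilon\in(0,1/8)$ and define the event $\Omega_{k,\epsilon}$ on which
\[
|\langle w_a,w_b\rangle|\le d^{-1/2+\epsilon}\quad\text{for all distinct pairs }w_a,w_b\in\{u_1,u_2,u_3,u_4\}\cup\{v_1,v_2\},
\]
and $\sqrt N\,|\langle w|\mathcalboondox M|w'\rangle|\le N^{\epsilon}$ for every relevant pair $(w,w')$ appearing in the Gram--Schmidt expansions, where $d=\min(d_\alpha,d_\beta)\sim N/q(k)$. Using the moment computations from Section~\ref{subsubsec:diff-eigenspace} (combined with Proposition~\ref{prop:var-ind-2} and Corollary~\ref{lem:trM}) and Markov's inequality applied in exactly the same way as in the proof of Lemma~\ref{lem:chf-split}, we obtain $\P[\Omega_{k,\epsilon}^c]=o(1)$. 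Expanding, for instance, $\langle v_2'|\mathcalboondox M|u_2'\rangle$ as a ratio whose numerator contains $\langle v_2|\mathcalboondox M|u_2\rangle$ plus cross-terms involving at least one factor of $\langle u_1,u_2\rangle$ or $\langle v_1,v_2\rangle$, and whose denominator is $1+O(d^{-1+2\epsilon})$, we conclude on $\Omega_{k,\epsilon}$ that
\[
\sqrt N\,\Re\!\bigl(\bar t\,\langle\varphi^{(i_\ell)}|\Opkl(a_0)|\varphi^{(j_\ell)}\rangle\bigr)=\sqrt N\,\Re\!\bigl(\bar t\,\langle\tilde v_\ell|\mathcalboondox M|\tilde u_\ell\rangle\bigr)+O_t\!\bigl(N^{2\epsilon}\sqrt{q(k)/N}\bigr)
\]
for $\ell=1,2$, where $(\tilde u_1,\tilde u_2,\tilde v_1,\tilde v_2)$ is $(u_1,u_2,v_1,v_2)$ in Case~1 and $(u_1,u_2,u_3,u_4)$ in Case~2. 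Since $\epsilon<1/8$ and $q(k)=O(k)=O(\log N)$, the error is $o_t(1)$.

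Because $\tilde u_1,\tilde u_2,\tilde v_1,\tilde v_2$ are independent in both cases, the two approximating random variables are independent on $\Omega_{k,\epsilon}$, so the expectation of the exponential factors on that event. Using that the integrand has modulus one and that $\P[\Omega_{k,\epsilon}^c]=o(1)$ to handle the complement, and then reabsorbing the Gram--Schmidt corrections on $\Omega_{k,\epsilon}$ into each single-pair expectation (by running the same high-probability argument backwards in each factor separately), we recover \eqref{eqn:offdiag-exp}. The main technical obstacle is verifying the high-probability event: the off-diagonal second moment bound $\E|\langle w|\mathcalboondox M|w'\rangle|^2=O(1/N)$ for each relevant pair must be obtained via Weingarten calculus and Proposition~\ref{prop:var-ind-2}, but the computation is a direct analogue of the one in \eqref{eqn:variance} carried out with two different Haar random vectors (or with $w=w'$ using the shifted matrix $\tilde M$ of \eqref{eqn:M}), and so reduces to traces already controlled in Corollary~\ref{lem:trM} and Proposition~\ref{prop:var-ind-2}.
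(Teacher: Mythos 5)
Your proposal is correct and follows essentially the same route as the paper's proof: Gram--Schmidt on independent Haar vectors (split into the same-eigenspace and different-eigenspace cases), a high-probability event $\Omega_{k,\epsilon}$ controlling both the pairwise overlaps ($\le d^{-1/2+\epsilon}$) and the quadratic forms ($\le N^{-1/2+\epsilon}$) via the Weingarten second-moment bounds, an $o_t(1)$ error from the orthogonalization corrections after multiplying by $\sqrt N$, and factorization from the independence of the underlying unorthogonalized vectors. No gaps.
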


\begin{proof}
There are two cases, when either all $\varphi^{(i_1)},\varphi^{(j_1)},\varphi^{(i_2)},\varphi^{(j_2)}$ come from the same eigenspace $E_{\jj}$, or when $\varphi^{(j_1)},\varphi^{(j_2)}$ come from the eigenspace $E_\jj$, and $\varphi^{(i_1)},\varphi^{(i_2)}$ come from a different eigenspace $E_\jjtwo$.

\textit{Same eigenspaces}.
Suppose $i_1,j_1,i_2,j_2$ are all distinct, and that the eigenvectors $\varphi^{(i_1)},\varphi^{(j_1)},\varphi^{(i_2)},\varphi^{(j_2)}$ come from the same eigenspace $E_\jj$ of dimension $d=d_\jj$.
Let $u_1,u_2,v_1,v_2$ be independent random unit vectors each chosen from Haar measure on $\mathcal U(d)$. 
Applying the Gram--Schmidt procedure, we can define
\begin{align}
\begin{aligned}
u_2'&=\frac{u_2-\langle u_1|u_2\rangle u_1}{\|u_2-\langle u_1|u_2\rangle u_1\|_2},
\\
v_1'&=\frac{v_1-\langle u_1|v_1\rangle u_1-\langle u_2'|v_1\rangle u_2'}{\|v_1-\langle u_1|v_1\rangle u_1-\langle u_2'|v_1\rangle u_2'\|_2},
\quad
v_2'=\frac{v_2-\langle u_1|v_2\rangle u_1-\langle u_2'|v_2\rangle u_2'-\langle v_1'|v_2\rangle v_1'}{\|v_2-\langle u_1|v_2\rangle u_1-\langle u_2'|v_2\rangle u_2'-\langle v_1'|v_2\rangle v_1'\|_2},
\end{aligned}
\end{align}
so that $u_1,u_2',v_1',v_2'$ are now orthonormal random unit vectors. They have the same distribution as four columns of a random $d\times d$ Haar unitary matrix,
and so the left side of expression \eqref{eqn:offdiag-exp} is equal to
\begin{align}\label{eqn:exp-uv}
\E\left[\exp\left({i\sqrt{N}\Re\big(\bar t\langle u_1| M|u_2'\rangle-\bar t\langle v_1'| M|v_2'\rangle\big)}\right)\right],
\end{align}
where $M=\Lambda_\jj^\dagger\Opkl(a_0)\Lambda_\jj$ as usual.
For $0<\epsilon<1/8$, consider the event
\begin{multline}
\Omega_{k,\epsilon}=\big\{|\langle y,z\rangle|\le d^{-1/2+\epsilon}\;\text{for }y\ne z\in\{u_1,u_2,v_1,v_2\},\\
\text{ and }|\langle w|M|v\rangle|\le N^{-1/2+\epsilon}\text{ for }w,v\in\{u_1,u_2,v_1,v_2\}\big\}.
\end{multline}
As in the proof of Lemma~\ref{lem:chf-split}, we have $\P[\Omega_{k,\epsilon}]=o(1)$.

On $\Omega_{k,\epsilon}$, similar to the estimates in Lemma~\ref{lem:chf-split}, we have
\begin{align}
\begin{aligned}
\sqrt{N}\langle u_1|M|u_2'\rangle&=\sqrt{N}\langle u_1|M|u_2\rangle+O(d^{-1/2+\epsilon}N^\epsilon),\\
\sqrt{N}\langle v_1'|M|v_2'\rangle&=\sqrt{N}\langle v_1|M|v_2\rangle+O(d^{-1/2+\epsilon}N^\epsilon).
\end{aligned}
\end{align}
Thus considering $\Omega_{k,\epsilon}$ and its complement, \eqref{eqn:exp-uv} is
\begin{align*}
\E\Big[\exp\Big(i\sqrt{N}&\Re\big(\bar t\langle u_1| M|u_2'\rangle-\bar t\langle v_1'| M|v_2'\rangle\big)\Big)\Big]\\
&=\E\Big[\exp\Big(i\sqrt{N}\Re\big(\bar t\langle u_1| M|u_2\rangle-\bar t\langle v_1| M|v_2\rangle\big)\Big)e^{O(|t|d^{-1/2+\epsilon}N^\epsilon)}\oneb_{\Omega_{k,\epsilon}}\Big]+O(\P[\Omega_{k,\epsilon}^c])\\
&=\E\left[\exp\left(i\sqrt{N}\Re(\bar t\langle u_1| M|u_2'\rangle)\right)\right]\E\left[\exp\left(-i\sqrt{N}\Re(\bar t\langle v_1'| M|v_2'\rangle)\right)\right]+o_t(1),
\end{align*}
which gives \eqref{eqn:offdiag-exp}.

\textit{Different eigenspaces}.
Suppose $i_1,i_2,j_1,j_2$ are all distinct, and that $\varphi^{(j_1)},\varphi^{(j_2)}$ come from $E_\jj$ while $\varphi^{(i_1)},\varphi^{(i_2)}$ come from $E_\jjtwo$. 
As in Section~\ref{subsubsec:diff-eigenspace}, let $\mathcalboondox M=\Lambda_\jj^\dagger\Opkl(a_0)\Lambda_\jjtwo$ which is a rectangular $d_\jj\times d_\jjtwo$ matrix.
Let $u_1,v_1\in\C^{d_\jjtwo}$ and $u_2,v_2\in\C^{d_\jj}$ be independent Haar unit vectors.
Then the left side of expression \eqref{eqn:offdiag-exp} is equal to
\begin{align}
\E\left[\exp\left({i\sqrt{N}\Re\big(\bar t\langle u_1|\mathcalboondox M|u_2\rangle-\bar t\langle v_1'|\mathcalboondox M|v_2'\rangle\big)}\right)\right],
\end{align}
where
\begin{align}
\begin{aligned}
v_1'&=\frac{v_1-\langle u_1|v_1\rangle u_1}{\|v_1-\langle u_1|v_1\rangle u_1\|_2},
\quad v_2'=\frac{v_2-\langle u_2|v_2\rangle u_2}{\|v_2-\langle u_2|v_2\rangle u_2\|_2}.
\end{aligned}
\end{align}
For $0<\epsilon<1/8$, we consider 
\begin{multline*}
\Omega_{k,\epsilon}=\{|\langle u_1|v_1\rangle|\le d^{-1/2+\epsilon},\;|\langle u_2|v_2\rangle|\le d^{-1/2+\epsilon},\\
\text{ and }|\langle w|\mathcalboondox M|y\rangle|\le N^{-1/2+\epsilon}\text{ for }w\in\{u_1,v_1\},\,y\in\{u_2,v_2\}\}.
\end{multline*}
The same argument as in the proof of Lemma~\ref{lem:chf-split}, this time using Weingarten calculations from Section~\ref{subsubsec:diff-eigenspace}, shows that $\P[\Omega_{k,\epsilon}^c]=o(1)$. 
Once again considering $\Omega_{k,\epsilon}$ and its complement, we obtain
\begin{multline*}
\E\left[\exp\left(i\sqrt{N}\Re\big(\bar t\langle u_1|\mathcalboondox M|u_2\rangle-\bar t\langle v_1'|\mathcalboondox M|v_2'\rangle\big)\right)\right]\\
=\E\left[\exp\left(i\sqrt{N}\Re\big(\bar t\langle u_1|\mathcalboondox M|u_2\rangle\big)\right)\right]\E\left[\exp\left(-i\sqrt{N}\Re\big(\bar t\langle v_1'|\mathcalboondox M|v_2'\rangle\big)\right)\right]+o_t(1),
\end{multline*}
which gives \eqref{eqn:offdiag-exp}.
\end{proof}

\begin{proof}[Proof of Theorem~\ref{thm:eth}]
We only need to consider the convergence of the off-diagonal matrix elements, since the on-diagonal convergence is already proved in Theorems~\ref{thm:mat} and \ref{thm:ind}.
By Markov's inequality,
\begin{align}\label{eqn:offdiag-markov1}
\P_\omega[|\chf_{\nu^{(k)}_{\jj\jjtwo}}(t)-\E\chf_{\nu^{(k)}_{\jj\jjtwo}}(t)|>\epsilon]&\le \frac{\E_\omega[|\chf_{\nu^{(k)}_{\jj\jjtwo}}(t)|^2]-|\E_\omega[\operatorname{chf}_{\nu^{(k)}_{\jj\jjtwo}}(t)]|^2}{\epsilon^2}.
\end{align}
Let $n_1(k)=\max(|J_\jj(k)|,|J_\jjtwo(k)|)$ and $n_2(k)=\min(|J_\jj(k)|,|J_\jjtwo(k)|)$, which both $\to\infty$ by assumption.
Using Lemma~\ref{lem:chf-split-2},
\begin{align*}
&\E_\omega|\chf_{\nu^{(k)}_{\jj\jjtwo}}(t)|^2\\
&=\begin{multlined}[t]
\frac{1}{\#\{(j,i)\in J_\jj\times J_\jjtwo:j\ne i\}^2}\\
\sum_{\substack{j_1,j_2\in J_\jj,i_1,i_2\in J_\jjtwo\\ j_1\ne i_1,j_2\ne i_2}}\!\!\!\E_\omega\left[\exp\left(\frac{i\sqrt{N}}{\sqrt{V(a)}f_a(\jj,\jjtwo)}\Re\left(\bar t\langle\varphi^{(i_1)}|\Opkl(a_0)|\varphi^{(j_1)}\rangle-\bar t\langle\varphi^{(i_2)}|\Opkl(a_0)|\varphi^{(j_2)}\rangle\right)\right)\right]
\end{multlined}\\
&=\begin{multlined}[t]
\frac{1}{\#\{(j,i)\in J_\jj\times J_\jjtwo:j\ne i\}^2}\\
\Bigg(\sum_{\substack{j_1,j_2\in J_\jj,i_1,i_2\in J_\jjtwo\\ j_1,j_2,i_1,i_2\text{ distinct}}}\!\!\!\E_\omega\left[e^{\frac{i\sqrt{N}}{\sqrt{V(a)}f_a(\jj,\jjtwo)}\Re\left(\bar t\langle\varphi^{(i_1)}|\Opkl(a_0)|\varphi^{(j_1)}\rangle\right)}\right]\E_\omega\left[e^{-\frac{i\sqrt{N}}{\sqrt{V(a)}f_a(\jj,\jjtwo)}\Re\left(\bar t\langle\varphi^{(i_2)}|\Opkl(a_0)|\varphi^{(j_2)}\rangle\right)}\right]\Bigg)\\
+o(1)+\frac{O(n_1(k)n_2(k)^2)}{(n_1(k)n_2(k)-O(n_1(k)))^2}
\end{multlined}\\
&=|\E_\omega\chf_{\nu^{(k)}_{\jj\jjtwo}}(t)|^2+ o(1).
\end{align*}
Therefore \eqref{eqn:offdiag-markov1} is $o(1)$, so $\chf_{\nu^{(k)}_{\jj\jjtwo}}(t)-\E\chf_{\nu^{(k)}_{\jj\jjtwo}}(t)\xrightarrow{\P}0$, where the convergence is in probability. Combining with Theorem~\ref{thm:eth-random}, which implies $\E\chf_{\nu^{(k)}_{\jj\jjtwo}}(t)\to e^{-|t|^2/4}$, we have convergence in probability of the characteristic function $\chf_{\nu^{(k)}_{\jj\jjtwo}}(t)$ to $e^{-|t|^2/4}$.
The continuity theorem/tightness argument discussed at the start of Section~\ref{subsec:eth-conv} (see e.g. \cite[Lemma 2.2]{DiaconisFreedman1984}, \cite[Theorem 6.3]{Kallenberg-book}) then implies the weak convergence in probability, $\nu^{(k)}_{\jj\jjtwo}\xrightarrow{w,\P}\CN(0,1)$ as $k\to\infty$.
\end{proof}

\appendix

\section{QUE decay rates}\label{sec:que}

In this section, we provide the details for \eqref{eqn:ind-que} and \eqref{eqn:therm}. 
First we bound the on-diagonal terms. 
Let $(\varphi^{(k,j)})_{j=1}^{D^k}$ be an orthonormal basis of eigenvectors chosen randomly according to Haar measure within each eigenspace $E_\jj$, $\jj=0,\ldots,q(k)-1$, and let $\Lambda_\jj$ be an $N\times d_\jj$ matrix whose columns form an orthonormal basis for $E_\jj$.
Let $P_\jj=\Lambda_\jj\Lambda_\jj^\dagger$ be the spectral projection matrix onto $E_\jj$, and let $d_\jj=\dim E_\jj=\Tr(P_\jj)$.
For a random $\varphi^{(k,j)}$ from the eigenspace $E_\jj$, we have
\begin{align*}
\langle \varphi^{(k,j)}|\Opkl(a)|\varphi^{(k,j)}\rangle&\dsim\frac{1}{\|g\|_2^2}\langle g|\Lambda_\jj^\dagger \Opkl(a)\Lambda_\jj|g\rangle,
\end{align*}
where $g\dsim\CN(0,I_{d_\jj})$. By concentration of $\|g\|_2^2$ from Bernstein's inequality, see e.g. \cite[Eq.~(3.1)]{Vershynin2018book},
\begin{align}\label{eqn:normc}
\P\left[\left|\left(\frac{1}{\|g\|_2^2}-\frac{1}{d_\jj}\right)\langle g|\Lambda_\jj^\dagger\Opkl(a)\Lambda_\jj|g\rangle\right|>t\|a\|_\infty\right]&\le \P\left[\left|\|g\|_2^2-d_\jj\right|>td_\jj\right]\le 2\exp\left[-cd_\jj\min(t^2,t)\right],
\end{align}
for $t\ge0$, and so we will work with the quadratic form $\frac{1}{d_\jj}\langle g|\Lambda_\jj^\dagger \Opkl(a)\Lambda_\jj|g\rangle$.
Applying the Hanson--Wright inequality from \cite[Theorem 1.1]{RudelsonVershynin2013} or Bernstein inequality \cite[Theorem 2.8.1]{Vershynin2018book} 
gives for $t\ge0$,
\begin{align*}
\P\big[\big|\langle g|\Lambda_\jj^\dagger \Opkl(a)\Lambda_\jj|g\rangle&-\E\langle g|\Lambda_\jj^\dagger \Opkl(a)\Lambda_\jj|g\rangle\big|>td_\jj \|a\|_\infty\big]\\
&\le 2\exp\left[-c\min\left(\frac{t^2d_\jj^2\|a\|_\infty^2}{\|\Lambda_\jj^\dagger \Opkl(a)\Lambda_\jj\|^2_\mathrm{HS}},\frac{td_\jj\|a\|_\infty}{\|\Lambda_\jj^\dagger \Opkl(a)\Lambda_\jj\|}\right)\right]\\
&\le 2\exp\left[-c'd_\jj \min(t^2,t)\right],\numberthis\label{eqn:hw}
\end{align*}
using that the spectral norm is bounded as $\|\Lambda_\jj^\dagger \Opkl(a)\Lambda_\jj\|\le \|a\|_\infty$, and the Hilbert--Schmidt norm as
\begin{align*}
\|\Lambda_\jj^\dagger \Opkl(a)\Lambda_\jj\|_\mathrm{HS}^2=\Tr(P_\jj\Opkl(a)P_\jj\Opkl(a))\le \Tr(P_\jj)\|\Opkl(a)P_\jj\Opkl(a)\|\le d_\jj\|a\|_\infty^2. 
\end{align*} 

Next we need to relate $\frac{1}{d_\jj}\E\langle g|\Lambda_\jj^\dagger \Opkl(a)\Lambda_\jj|g\rangle$ to $\int_{\T^2} a(\x)\,d\x$. Since evaluation shows
$
\frac{1}{d_\jj}\E\langle g|\Lambda_\jj^\dagger \Opkl(a)\Lambda_\jj|g\rangle=\frac{1}{d_\jj}\Tr(P_\jj\Opkl(a)),$ we can estimate
\begin{align*}
\left|\frac{1}{d_\jj}\Tr(P_\jj\Opkl(a))-\int_{\T^2} a(\x)\,d\x\right|&=\left|\frac{1}{d_\jj}\Tr(P_\jj\Opkl(a_0))\right| \\
&\le \frac{1}{d_\jj q(k)}\left|\sum_{\substack{t=-q(k)/2\\t\ne0}}^{q(k)/2-1}e^{2\pi i\jj t/q(k)}\sum_{[\cs{\varepsilon}]\in\mathcal R_{k,\ell}}\langle\cs{\varepsilon}|\Ba_k^t|\cs{\varepsilon}\rangle\langle\cs{\varepsilon}|\Opkl(a_0)|\cs{\varepsilon}\rangle \right|,
\intertext{
where we have used \eqref{eqn:ppoly0} and skipped the $t=0$ term in the sum since $\Tr(\Opkl(a_0)) =0$. Continuing, using the triangle inequality followed by Proposition~\ref{prop:walsh-powers} (Proposition 8.4 of \cite{wbaker}) gives the bound
}
&\le \frac{\|a_0\|_\infty}{d_\jj q(k)}\Bigg(\sum_{\substack{t=-q(k)/2\\t\ne0,-2k}}^{q(k)/2-1}D^{\eta_k(t)}D^{-\eta_k(t)/2}+\begin{cases}1,&D\text{ odd}\\2^k,&D\ge4\text{ even}\end{cases}\Bigg)\\
&\le \frac{\|a_0\|_\infty}{d_\jj q(k)}\left[CD^{k/2}+2^k\oneb_{D\ge4\text{ even}}\right]\\
&=:r_k\|a\|_\infty=O(N^{-1/2}\|a\|_\infty).\numberthis\label{eqn:trbound}
\end{align*}
Combining with \eqref{eqn:normc} and \eqref{eqn:hw}, we obtain for $\epsilon_k\in(2r_k,1)$,
\begin{align*}
\P\bigg[\Big|\langle\varphi^{(k,j)}|&\Opkl(a)|\varphi^{(k,j)}\rangle-\int_{\T^2}a(\x)\,d\x\Big|>\epsilon_k\|a\|_\infty\bigg]\\
&\le
\begin{multlined}[t]
\P\left[\left|\frac{1}{d_\jj}\langle g|\Lambda_\jj^\dagger\Opkl(a)\Lambda_\jj|g\rangle-\int_{\T^2}a(\x)\,d\x\right|>\frac{\epsilon_k}{2}\|a\|_\infty\right]\\
\qquad\qquad+\P\left[\left|\frac{1}{d_\jj}\langle g|\Lambda_\jj^\dagger\Opkl(a)\Lambda_\jj|g\rangle-\langle\varphi^{(k,j)}|\Opkl(a)|\varphi^{(k,j)}\rangle\right|>\frac{\epsilon_k}{2}\|a\|_\infty\right] 
\end{multlined}\\
&\le
\begin{multlined}[t]
\P\left[\left|\frac{1}{d_\jj}\langle g|\Lambda_\jj^\dagger\Opkl(a)\Lambda_\jj|g\rangle-\frac{1}{d_\jj}\E\langle g|\Lambda_\jj^\dagger\Opkl(a)\Lambda_\jj|g\rangle\right|>\left(\frac{\epsilon_k}{2}-r_k\right)\|a\|_\infty\right]
+2e^{-cd_\jj\epsilon_k^2}
\end{multlined}\\
&\le C\exp\left[-c\big(\epsilon_k-O(N^{-1/2})\big)^2\frac{N}{q(k)}\right].
\end{align*}
Choosing $\epsilon_k=N^{-1/2+\delta}$ for any $0<\delta<1/2$ gives the bound $C\exp\left[-cN^{2\delta}/q(k)\right]$, with $q(k)\propto\log N$.
A union bound then gives \eqref{eqn:ind-que} with $\epsilon_k=N^{-1/2+\delta}$.

For the off-diagonal bound \eqref{eqn:therm}, we can apply a union bound followed by Markov's inequality to obtain for any $p\in\N$,
\begin{align}\label{eqn:offdiag-markov}
\P\bigg[\max_{\substack{i,j\in\intbrr{1:N}\\i\ne j}}\left|\langle\varphi^{(k,i)}|\Opkl(a)|\varphi^{(k,j)}\rangle\right|>\frac{1}{N^{\frac{1}{2}-\delta}}\bigg] &\le N^2\max_{i\ne j}\E[|\langle\varphi^{(k,i)}|\Opkl(a)|\varphi^{(k,j)}\rangle|^{p} ]N^{\frac{p}{2}-p\delta}.
\end{align}
Since $\langle\varphi^{(k,i)}|\varphi^{(k,j)}\rangle=0$, we may consider $\Opkl(a_0)$ instead of $\Opkl(a)$.
For $\varphi^{(k,i)}$ from $E_\jj$ and $\varphi^{(k,j)}$ from $E_\jjtwo$, by similar Weingarten calculus as in Section~\ref{sec:off-diag},
specifically \eqref{eqn:offdiag-same} for $\alpha=\beta$ or \eqref{eqn:2p-2} for $\alpha\ne\beta$,
we have for fixed $p$ even,
\begin{align}\label{eqn:2pbound}
\E[|\langle\varphi^{(k,i)}|\Opkl(a_0)|\varphi^{(k,j)}\rangle|^{p}] &\le O\left(\|a_0\|_\infty^{p}N^{p/2}\frac{q(k)^p}{N^p}\right),
\end{align}
where instead of using the cancellation lemmas in Section~\ref{sec:phases} or more precise classical evolution in Section~\ref{sec:lemmas}, it is enough to just use the simple triangle inequality/absolute value inequalities, for $M=\Lambda_\jj^\dagger\Opkl(a_0)\Lambda_\jj$ and $\mathcalboondox M=\Lambda_\jj^\dagger\Opkl(a_0)\Lambda_\jjtwo$,
\begin{align*}
&|\Tr M|=|\Tr(\Opkl(a_0)P_\jj)|\le O\big(\sqrt{N}\|a\|_\infty\big)\quad\text{by \eqref{eqn:trbound}},\\
&|\Tr M^m|\le N\|(\Opkl(a_0)P_\jj)^m\|\le N \|a_0\|_\infty^m,\\
&|\Tr((\mathcalboondox M\mathcalboondox M^\dagger)^m)|\le N\|(\Opkl(a_0)P_\jj \Opkl(a)P_\jjtwo)^m\|\le N\|a_0\|_\infty^{2m},
\end{align*}
for $m\in\N$. Using \eqref{eqn:2pbound} with even $p>2/\delta$ in \eqref{eqn:offdiag-markov} shows
\begin{align}
\P\bigg[\max_{\substack{i,j\in\intbrr{1:N}\\i\ne j}}\left|\langle\varphi^{(k,i)}|\Opkl(a)|\varphi^{(k,j)}\rangle\right|>\frac{1}{N^{\frac{1}{2}-\delta}}\bigg] &\to0,
\end{align}
as $k\to\infty$, which is the off-diagonal statement of \eqref{eqn:therm}. \qed

\subsection*{Acknowledgments}
The author would like to thank Amit Vikram for insightful discussions on the Eigenstate Thermalization Hypothesis.

\bibliographystyle{custom_url_alpha}
\bibliography{fluctuations.bib}

\end{document}